\numberwithin{equation}{section}
\newtheorem{theorem}{Theorem}[section]
\newtheorem{lemma}{Lemma}[section]
\newtheorem{remark}{Remark}[section]
\newtheorem{corollary}{Corollary}[section]
\newtheorem{assumption}{Assumption}
\newtheorem{example}{Example}
\newcommand{\blind}{1}
\tikzset{join/.code=\tikzset{after node path={%
\ifx\tikzchainprevious\pgfutil@empty\else(\tikzchainprevious)%
edge[every join]#1(\tikzchaincurrent)\fi}}}
\tikzset{>=stealth',every on chain/.append style={join},
         every join/.style={}}
\tikzstyle{labeled}=[execute at begin node=$\scriptstyle,
\newcommand*{\Rom}[1]{\expandafter\@slowromancap\romannumeral #1@}
\begin{document}

\def\spacingset#1{\renewcommand{\baselinestretch}%
{#1}\small\normalsize} \spacingset{1}

%%%%%%%%%%%%%%%%%%%%%%%%%%%%%%%%%%%%%%%%%%%%%%%%%%%%%%%%%%%%%%%%%%%%%%%%%%%%%%
 \title{\bf Transfer learning for high-dimensional Factor-augmented sparse linear model}
  \date{}
\if1\blind
{
 
  \author{Bo Fu\\
    School of Mathematics and Statistics\\ Xi'an Jiaotong University\\
    and \\
    Dandan Jiang* \\School of Mathematics and Statistics\thanks{Corresponding author.}\\
    Xi'an Jiaotong University}
   
  \maketitle
} \fi

\if0\blind
{
  \bigskip
  \bigskip
  \bigskip

  \maketitle
} \fi

\bigskip
\begin{abstract}
This paper introduces a transfer learning methodology for high-dimensional sparse linear models augmented with latent factors, addressing settings where covariates exhibit strong correlations and latent factor structures, which is motivated by the demand for stable and reliable estimation applications in economics and finance. Our framework jointly corrects for latent factor-induced effects and mitigates multicollinearity, reducing potential model misspecification relative to conventional linear sparse regression. When the target dataset is limited by sample size but multiple heterogeneous auxiliary datasets are accessible, we propose transfer-augmented estimators that leverage auxiliary information to improve estimation accuracy. we establish non-asymptotic $\ell_1$- and $\ell_2$-error bounds for the resulting estimators. To guard against negative transfer, we develop a data-driven source detection algorithm with provable consistency in selecting informative auxiliary datasets. In addition, we provide a procedure for constructing simultaneous confidence intervals for the regression coefficients of interest. Numerical studies confirm that the proposed approach achieves substantial gains in estimation accuracy and remain robust under heterogeneity across datasets. The work provides a theoretically rigorous and computationally tractable framework for incorporating heterogeneous auxiliary data in high-dimensional factor-structured regression problems.
\end{abstract}

\noindent
{\it Keywords: transfer learning, high-dimensional statistics, factor-augmented regression model, latent factor effect, negative transfer} 
\vfill

\newpage
\spacingset{1.5} % DON'T change the spacing!

\addtocontents{toc}{\protect\setcounter{tocdepth}{0}}
% ----------------------------------------------------------------
% 1 Introduction
% ----------------------------------------------------------------
\section{Introduction}\label{sec1}
In many modern statistical and machine learning applications, data are often collected from multiple related sources that share similar yet non-identical underlying structures. Directly combining these heterogeneous datasets may lead to biased estimation due to distributional discrepancies, whereas analyzing the target dataset alone can suffer from a limited sample size. Transfer learning offers a principled framework to leverage auxiliary information from source datasets to improve estimation performance on the target task. 

A growing body of work seeks to develop statistical foundations for transfer learning. Examples include nonparametric classification \citep{cai2021transfer}, transfer reinforcement learning \citep{chai2025deep}, Gaussian graphical model \citep{li2023transfer}, and high-dimensional parametric regression models such as linear model \citep{li2022transfer}, generalized linear models \citep{bastani2021predicting,tian2023transfer}, smoothed quantile regression \citep{zhang2025transfer}, multinomial regression \citep{yang2025debiased}, and adaptive Huber regression models \citep{yang2025communication}. In these parametric regression models, the similarity between the source and target datasets is typically quantified via the distance between their regression coefficients.

However, those existing transfer learning methods for parametric regression mentioned above rely critically on weak dependence on the variables that ensure estimation consistency. In many applications--particularly in finance, economics, and biomedicine--the covariates exhibit strong cross-sectional dependence and are often driven by latent common factors. Under such structures, in the high-dimensional linear regression model, the conditions (e.g., the irrepresentable condition) required for variable selection and estimation consistency of Lasso-type procedures are violated \citep{fan2020factor}. Moreover, when the latent factors contribute directly to the response, the linear regression model based only on the observed covariates becomes misspecified. As a result, transfer learning procedures built upon standard high-dimensional parametric regression may perform unreliably in the presence of latent factor effects or highly correlated covariates. This structural issue was considered in \cite{fan2024latent}, where they proposed the Factor Augmented (sparse linear) Regression Model (FARM), defined as follows:
\begin{equation}
\begin{aligned}\label{eq1.1}
    y&=\boldsymbol{f}^\top\boldsymbol{\gamma}^*+\boldsymbol{u}^\top\boldsymbol{\beta}^*+\mathcal{E},\\
    \boldsymbol{x}&=\boldsymbol{B}\boldsymbol{f}+\boldsymbol{u},
\end{aligned}
\end{equation}
where $y\in\mathbb{R}$ is a scalar response variable, $\boldsymbol{x}\in\mathbb{R}^{p}$ is the covariate, $\boldsymbol{f}\in\mathbb{R}^{r}$ is the latent factors, $\boldsymbol{B}\in\mathbb{R}^{p\times r}$ is the factor loading matrix, and $\boldsymbol{u}\in\mathbb{R}^{p}$ is a idiosyncratic component. The parameter $\boldsymbol{\beta}^*\in\mathbb{R}^{p}$ is the true regression parameter vector that is of interest, and $\boldsymbol{\gamma}^*\in\mathbb{R}^{r}$ quantifies the contribution of the latent factor $\boldsymbol{f}$. The FARM model can be viewed as a generalization of the sparse linear model, as it can be rewritten as:
\begin{align*}
    y&=\boldsymbol{f}^\top\boldsymbol{\varphi}^*+\boldsymbol{x}^\top\boldsymbol{\beta}^*+\mathcal{E},\\
    \boldsymbol{x}&=\boldsymbol{B}\boldsymbol{f}+\boldsymbol{u},
\end{align*}
where $\boldsymbol{\varphi}^*=\boldsymbol{\gamma}^*-\boldsymbol{B}^\top\boldsymbol{\beta}^*$ measures the extra contribution of $\boldsymbol{f}$ to $y$ beyond $\boldsymbol{x}$. It enhances sparse linear regression by incorporating informative directions spanned by the common factors $\boldsymbol{f}$. From another point of view, the FARM model can be viewed as a generalization of the factor regression model \citep{stock2002forecasting,bing2021prediction} as it utilizes additional information in $\boldsymbol{u}$. Therefore, the strong performance of FARM arises from its ability to jointly capture both the common factors $\boldsymbol{f}$ and idiosyncratic components $\boldsymbol{u}$ in explaining the response. 

Although the FARM model effectively addresses the challenge of highly correlated covariates and latent factors by separating common factors and idiosyncratic components, in practice, the target dataset may have a limited sample size, and related auxiliary datasets often contain useful information that can enhance estimation accuracy. Since the regression coefficient $\boldsymbol{\beta}$ quantifies the contribution of the covariates $\boldsymbol{X}^{(0)}$ to the outcome $\boldsymbol{Y}^{(0)}$, this makes its estimation and inference a quantity of significant importance \citep{guo2022doubly}. To this end, we consider procedures for the FARM to improve estimation accuracy by extracting useful information from auxiliary datasets. 

Motivated by these insights, we propose a transfer learning procedure for FARM (Trans-FARM) in this paper. The Trans-FARM allows information to be shared across multiple related datasets while accounting for the heterogeneity in both the factor structure and the idiosyncratic components. By removing the effects of strong correlations among covariates and the additional variations induced by latent factors, Trans-FARM yields more accurate estimation of regression coefficients. Compared with standard FARM, it improves estimation by borrowing similarity from informative sources; compared with transfer learning methods for linear models \citep{li2022transfer,tian2023transfer}, it provides more reliable estimation due to their misspecification of model. In addition to enhancing estimation, the more reliable conduction of simultaneous inference of $\boldsymbol{\beta}$ is also of key interest by utilizing auxiliary datasets. Moreover, it is often unclear which source datasets are informative in practice, and incorporating uninformative or mismatched sources may lead to negative transfer. So we also present a source detection method for our transfer learning framework. Our proposed framework enables statistical inference on regression parameters, which can be applied to assess the validity of transfer learning for factor regression in \cite{lorenzi2016transfer}. Our main contributions are summarized as follows:
%Despite transfer learning has achieved remarkable empirical success, several fundamental challenges persist: (i) what to transfer, (ii) how to transfer, and (iii) when to transfer \citep{pan2009survey}. 

%Compared with the Trans-Lasso and Trans-GLM methods in \cite{li2022transfer,tian2023transfer}, our proposed Trans-FARM possesses several merits. First, the latent factors $\boldsymbol{f}^{(k)}$ are treated as unobserved confounders. By explicitly accounting for these confounding effects, our model removes the extra contribution of latent factors to the response, thereby yielding a cleaner target of estimation. Second, even when the latent factors do not provide additional explanatory power for the response, our estimator achieves the same convergence rate as the Trans-Lasso estimator in \cite{li2022transfer,tian2023transfer}. Moreover, since the sub-Gaussian assumption on $\boldsymbol{x}$ fails under strong feature correlations, existing transfer-learning theories do not cover such settings. Our framework addresses this gap by allowing the design matrix to exhibit high correlation. 
\begin{itemize}
\item A Robust Transfer Learning Framework for Data with High-Correlated Covariates and Latent Factor Structures. We develop a transfer learning framework that explicitly accounts for the data with latent factors and highly correlated covariates, which is particularly relevant in applications from economics, finance, and biomedicine. The framework systematically leverages auxiliary datasets to enhance the estimation of regression coefficients, providing a solid foundation for both accurate estimation and valid inference in complex data environments.

%Specifically, since the sub-Gaussian assumption on covariates fails under strong covariate correlations and the latent factors may yield additional contributions to the response, existing transfer-learning theories do not cover such settings. This approach effectively removes effects of latent factors and prevents model misspecification, which is crucial for applications in economics, finance, and biomedicine. The framework systematically leverages auxiliary datasets to enhance the estimation of regression coefficients, providing a solid foundation for both accurate estimation and valid inference in complex data environments.
\item Theoretical Guarantees and a Safeguard Against Negative Transfer. We establish the non-asymptotic $\ell_1$- and $\ell_2$-error bounds for our proposed Trans-FARM estimator, demonstrating its efficiency. Theoretically, it achieves a convergence rate that matches the common transfer learning methods \citep{li2022transfer,tian2023transfer} when factors are absent, while strictly outperforming them when latent factors are present due to their misspecification of models. Furthermore, to ensure this superior performance in practice, we propose a source detection algorithm with proven consistency, which acts as a safeguard by identifying and utilizing only beneficial auxiliary datasets to avoid negative transfer.
\item Valid Inference for Heterogeneous Effects. Moving beyond point estimation, we construct a comprehensive inferential tool for the high-dimensional transfer learning setting by constructing simultaneous confidence intervals for regression coefficients. A key innovation is a studentized approach that yields coefficient-specific interval lengths, thereby accurately capturing the heterogeneous influence of different features on the response.
\end{itemize}

The rest of this paper is structured as follows. Section \ref{sec2} introduces some preliminaries and the model formulation. Section \ref{sec3} develops the theoretical results for our proposed methods, including the $\ell_1$ and $\ell_2$- estimation error bounds, and introduces the source detection algorithm together with its detection consistency. The procedure for simultaneous inference is presented in Section \ref{sec4}. Section \ref{sec5} presents simulation results and a real data analysis, and Section \ref{sec6} presents our conclusion. All proofs are provided in the Appendix.

\textbf{Notation:} For a vector $\mathbf{v}=(v_1,\cdots,v_p)^\top\in\mathbb{R}^{p}$, denote $\|\mathbf{v}\|=(\sum_{i=1}^{p}|v_i|^2)^{1/2}$, $\|\mathbf{v}\|_1=\sum_{i=1}^{p}|v_i|$, $\|\mathbf{v}\|_\infty=\max_{1\leq i\leq p}|v_i|$. For $\mathcal{S}\subseteq[p]$, let $\mathcal{S}^c$ be the complement of $\mathcal{S}$, i.e., $\mathcal{S}^c=[p]\backslash\mathcal{S}$, where we use the standard notation $[p]$ to denote $\{1,\ldots,p\}$ here. For a matrix $\mathbf{A}=(a_{ij})_{1\leq i\leq k_1,1\leq j\leq k_2}\in \mathbb{R}^{k_1\times k_2}$, denote $\|\mathbf{A}\|_{\max}=\max_{1\leq i\leq k_1,1\leq j\leq k_2}|a_{ij}|$, $\|\mathbf{A}\|_{\infty}=\max_{1\leq i\leq k_1}\sum_{j=1}^{k_2}|a_{ij}|$, $\|\mathbf{A}\|_{1}=\max_{1\leq j\leq k_2}\sum_{i=1}^{k_1}|a_{ij}|$, and let $\|\mathbf{A}\|_2$ be the spectral norm of $\mathbf{A}$. We use the standard notation $\boldsymbol{A}_{i,j}$ to represent the $(i,j)$-th element of $\boldsymbol{A}$, and use $\boldsymbol{A}_{i,-j}$ to denote the $i$-th row without the $(j,j)$-th element of $\boldsymbol{A}$ for a given matrix $\boldsymbol{A}$. We define the sub-Gaussian norm of a sub-Gaussian random variable $X$ as $\|X\|_{\psi_2}=\inf\{t>0:\mathbb{E}\exp(X^2/t^2)\leq 2\}$, and the sub-exponential norm of a sub-exponential random variable $Y$ as $\|Y\|_{\psi_1}=\inf\{t>0:\mathbb{E}\exp(|Y|/t)\leq 2\}$. For any two sequences $\{a_n\}$ and $\{b_n\}$, we use $a_n=O(b_n),a_n=o(b_n),$ and $a_n\asymp b_n$ to represent that there exists $c,C>0$ such that $|a_n|\leq C|b_n|$ for all $n$, $\lim_{n\to\infty}|a_n|/|b_n|=0$, and $0<c<|a_n/b_n|<C<\infty$, respectively. $a_n\lesssim b_n$ means $a_n=O(b_n)$. For a sequence of random variables $\{X_n\}$, we use the notation $Z_n=O_P(1)$ if $\lim_{M\to\infty}\lim_{n\to\infty}\mathbb{P}(|Z_n|>M)=0$, and $Z_n=o_P(1)$ if $\lim_{n\to\infty}\mathbb{P}(|Z_n|>\varepsilon)=0$ for any $\varepsilon>0$. For any two sequences of random variables $\{X_n\}$ and $\{Y_n\}$, we say $X_n=O_P(Y_n)$ if $X_n/Y_n=O_P(1)$, and $X_n=o_P(Y_n)$ if $X_n/Y_n=o_P(1)$.

\section{Preliminaries and Model Formulation}\label{sec2}
Suppose that we have a target dataset $\{\boldsymbol{x}_i^{(0)},y_i^{(0)}\}_{i=1}^{n_0}$ and $K$ independent source datasets $\left\{\{\boldsymbol{x}_i^{(k)},y_i^{(k)}\}_{i=1}^{n_k}\right\}_{k=1}^{K}$. For $k\in\{0,1,\cdots,K\}$, we consider the following model (in matrix form):
\begin{equation}
\begin{aligned}\label{eq2.1}
    \boldsymbol{Y}^{(k)}&=\boldsymbol{F}_k\boldsymbol{\gamma}^{(k)}+\boldsymbol{U}_k\boldsymbol{w}^{(k)}+\boldsymbol{\mathcal{E}}^{(k)},\\
    \boldsymbol{X}^{(k)}&=\boldsymbol{F}_k{\boldsymbol{B}_k}^\top+\boldsymbol{U}_k,\quad k=0,\cdots,K,
\end{aligned}
\end{equation}
where $\boldsymbol{X}^{(k)}=({\boldsymbol{x}_1^{(k)\top}}\!,\ldots,{\boldsymbol{x}_{n_k}^{(k)\top}}\!)^\top \in \mathbb{R}^{n_k\times p}$ is the design matrix whose $i$-th row $\boldsymbol{x}_i^{(k)}\in\mathbb{R}^{p}$ denotes the observed covariate vector of the $i$-th sample in the $k$-th dataset, $\boldsymbol{F}_k=\!({\boldsymbol{f}_1^{(k)\top}}\!,\ldots,{\boldsymbol{f}_{n_k}^{(k)\top}}\!)^\top\in\mathbb{R}^{n_k\times r_k}$ denotes the unobserved factor matrix collecting the realizations of the latent factors in the $k$-th dataset, $\boldsymbol{U}_k=({\boldsymbol{u}_1^{(k)\top}},\ldots,{\boldsymbol{u}_{n_k}^{(k)\top}})^\top \in \mathbb{R}^{n_k\times p}$ denotes the idiosyncratic component, $\boldsymbol{Y}^{(k)}=(y_1^{(k)},\ldots,y_{n_k}^{(k)})^\top\in \mathbb{R}^{n_k\times 1}$ denotes the response vector, $\boldsymbol{\mathcal{E}}^{(k)}=(\mathcal{E}_1^{(k)},\ldots,\mathcal{E}_{n_k}^{(k)})^\top\in \mathbb{R}^{n_k\times 1}$ denotes the error vector, and $\boldsymbol{B}_k\in \mathbb{R}^{p\times r_k}$ denotes the factor loading matrix. The parameter $\boldsymbol{w}^{(k)}\in\mathbb{R}^{p}$ is the true regression parameter vector, and $\boldsymbol{\gamma}^{(k)}\in\mathbb{R}^{r_k}$ is the contribution of the latent factor $\boldsymbol{f}^{(k)}$ in the $k$-th dataset. Here, we regard $\{(\boldsymbol{x}_i^{(k)},y_i^{(k)},\boldsymbol{f}_i^{(k)},\boldsymbol{u}_i^{(k)},\mathcal{E}_i^{(k)})\}_{i=1}^{n_k}$ are independent and identically distributed (i.i.d.) realizations of $(\boldsymbol{x}^{(k)},y^{(k)},\boldsymbol{f}^{(k)},\boldsymbol{u}^{(k)},\mathcal{E}^{(k)})$ for $k=1,\ldots,K$. Moreover, we allow the number of factors in different design matrices $\boldsymbol{X}^{(k)}$ to be different, i.e., $r_k$ can be unequal to each other. Our goal is to estimate the target regression coefficient vector $\boldsymbol{w}^{(0)}$. 

The matrix form in \eqref{eq2.1} can also be written as
\begin{equation}\label{eq2.2}
\begin{aligned}
    \boldsymbol{Y}^{(k)}&=\boldsymbol{F}_k\boldsymbol{\varphi}^{(k)}+\boldsymbol{X}^{(k)}\boldsymbol{w}^{(k)}+\boldsymbol{\mathcal{E}}^{(k)},\\
    \boldsymbol{X}^{(k)}&=\boldsymbol{F}_k\boldsymbol{B}_k^\top+\boldsymbol{U}_k,\quad k=0,\cdots,K,
\end{aligned}
\end{equation}
where $\boldsymbol{\varphi}^{(k)}=\boldsymbol{\gamma}^{(k)}-\boldsymbol{B}_k^\top\boldsymbol{w}^{(k)}$ quantifies the additional effect of factors $\boldsymbol{F}_k$ to $\boldsymbol{Y}^{(k)}$.

Since the latent factor $\boldsymbol{f}_i^{(k)}$, the idiosyncratic component $\boldsymbol{u}_i^{(k)}$, and the loading matrix $\boldsymbol{B}_k$ are all needed to be estimated, we first provide their estimators. Note that $\boldsymbol{f}_i$ is not identifiable because $\boldsymbol{x}_{i}^{(k)}=\boldsymbol{B}_{k}\boldsymbol{f}_i^{(k)}+\boldsymbol{u}_{i}^{(k)}=\boldsymbol{B}_{k}\boldsymbol{T}(\boldsymbol{T}^{-1}\boldsymbol{f}_i^{(k)})+\boldsymbol{u}_{i}^{(k)}$ for any non-singular matrix $\boldsymbol{T}\in\mathbb{R}^{r_k\times r_k}$. To ensure identifiability, we impose the following normalization:$$\mbox{cov}(\boldsymbol{f}^{(k)})=\boldsymbol{I}_{r_k}\text{ and }\boldsymbol{B}_{k}^\top\boldsymbol{B}_{k}\text{ is diagonal.}$$ We also assume that $\mathcal{E}^{(k)}$, $\boldsymbol{f}^{(k)}$ and $\boldsymbol{u}^{(k)}$ are independent with each other, and $\mathbb{E}[\mathcal{E}^{(k)}]=0$.

Then the estimates simply follows by solving the constrained least squares problem for $k=0,\ldots,K$:
\begin{equation}\label{eq2.3}
\begin{aligned}
    &(\hat{\boldsymbol{F}}_k,\hat{\boldsymbol{B}}_{k})=\mathop{\arg\min}_{\boldsymbol{F}\in\mathbb{R}^{n\times r_k},\boldsymbol{B}\in\mathbb{R}^{p\times r_k}}\|\boldsymbol{X}^{(k)}-\boldsymbol{F}\boldsymbol{B}^\top\|_{\mathbb{F}}^2\\
    &\text{subject to }\frac{1}{n_k}\boldsymbol{F}^\top\boldsymbol{F}=\boldsymbol{I}_{r_k}\text{, and }\boldsymbol{B}^\top\boldsymbol{B}\text{ is diagonal.}
\end{aligned}
\end{equation}

Simple calculation yields that $\hat{\boldsymbol{F}}_k/\sqrt{n_k}$ is the eigenvectors corresponding to the top $r_k$ eigenvectors of $\boldsymbol{X}^{(k)}{\boldsymbol{X}^{(k)}}^\top$, $\hat{\boldsymbol{B}}_{k}=(\hat{\boldsymbol{F}}_k^\top\hat{\boldsymbol{F}}_k)^{-1}\hat{\boldsymbol{F}}_k^\top\boldsymbol{X}^{(k)}=\hat{\boldsymbol{F}}_k^\top\boldsymbol{X}^{(k)}/n_k$, and the estimator for the remainder term is $\hat{\boldsymbol{U}}_k=\boldsymbol{X}^{(k)}-\hat{\boldsymbol{F}}_k\hat{\boldsymbol{B}}_{k}^\top=(\boldsymbol{I}_{n_k}-\hat{\boldsymbol{F}}_k\hat{\boldsymbol{F}}_k^\top/n_k)\boldsymbol{X}^{(k)}$. Then we have $\boldsymbol{X}^{(k)}=\hat{\boldsymbol{F}}_k\hat{\boldsymbol{B}}_{k}^\top+\hat{\boldsymbol{U}}_k$, and $\hat{\boldsymbol{F}}_k$, $\hat{\boldsymbol{B}}_{k}$, and $\hat{\boldsymbol{U}}_k$ are estimations for $\boldsymbol{F}_k$, $\boldsymbol{B}_k$, and $\boldsymbol{U}_k$ in model \eqref{eq2.2}.

\begin{remark}\label{remark2.1}
    If the intercept is considered, i.e., the first column of $\boldsymbol{X}^{(k)}$ is all $1$, let $\boldsymbol{X}_{-1}^{(k)}=(\boldsymbol{x}_{1,-1}^{(k)},\cdots,\boldsymbol{x}_{n,-1}^{(k)})^\top\in \mathbb{R}^{n_k\times (p-1)}$ denote the submatrix of $\boldsymbol{X}^{(k)}$ comprising its second through $p$-th columns. By removing the intercept term, similar calculation yields that $\hat{\boldsymbol{F}}_k/\sqrt{n_k}$ is the eigenvectors corresponding to the top $r_k$ eigenvectors of $\boldsymbol{X}_{-1}^{(k)}{\boldsymbol{X}_{-1}^{(k)}}^\top$, $\hat{\boldsymbol{B}}_{k,-1}=\hat{\boldsymbol{F}}_k^\top\boldsymbol{X}_{-1}^{(k)}/n_k$, and $\hat{\boldsymbol{U}}_{k,-1}=(\boldsymbol{I}_{n_k}-\hat{\boldsymbol{F}}_k\hat{\boldsymbol{F}}_k^\top/n_k)\boldsymbol{X}_{-1}^{(k)}$. Then we can write $\hat{\boldsymbol{B}}_{k}=(\boldsymbol{0}_{r_k},\hat{\boldsymbol{B}}_{k,-1}^{\top})^\top\in\mathbb{R}^{p\times r_k}$, $\hat{\boldsymbol{U}}_k=(\boldsymbol{1}_{n_k},\hat{\boldsymbol{U}}_{k,-1})\in\mathbb{R}^{n_k\times p}$ such that $\boldsymbol{X}^{(k)}=\hat{\boldsymbol{F}}_k\hat{\boldsymbol{B}}_{k}^\top+\hat{\boldsymbol{U}}_k$, and $\hat{\boldsymbol{F}}_k$, $\hat{\boldsymbol{B}}_{k}$, and $\hat{\boldsymbol{U}}_k$ are estimations for $\boldsymbol{F}_k$, $\boldsymbol{B}_k$, and $\boldsymbol{U}_k$, respectively.
\end{remark}
To attain the consistency of these estimations, we first impose some regularity assumptions that are standard in the high-dimensional factor model \citep{bai2003inferential,fan2013large,fan2024latent}.

\begin{assumption}[Assumptions on $\boldsymbol{f}^{(k)}$, $\boldsymbol{u}^{(k)}$, and $\boldsymbol{B}_k$ (for the $k$-th dataset)]\label{assum1}
\leavevmode\par\vspace{0.3ex}%
    \begin{enumerate}[label=(\alph*)]
    \item Assume that $\boldsymbol{\nu}^{(k)}=(\boldsymbol{u}^{(k)\top},\boldsymbol{f}^{(k)\top})^\top$ is sub-Gaussian with $\|\boldsymbol{\nu}^{(k)}\|_{\psi_2}\leq v_0$.
    \item There exists a constant $C>1$ such that $p/C\leq\lambda_{\min}(\boldsymbol{B}_k^\top\boldsymbol{B}_k)\leq\lambda_{\max}(\boldsymbol{B}_k^\top\boldsymbol{B}_k)\leq pC$. Moreover, we assume that $n_k=O(p)$.
    \item There exists a constant $\Upsilon>0$ such that $\|\boldsymbol{B}_k\|_{\max}\leq\Upsilon$ and $\mathbb{E}|\boldsymbol{u}^{(k)\top}\boldsymbol{u}^{(k)}-\mbox{tr}(\boldsymbol{\Sigma}_{\boldsymbol{u}}^{(k)})|^4\leq\Upsilon p^2$.
    \item There exists a positive constant $\iota<1$ such that $\iota\leq\lambda_{\min}(\boldsymbol{\Sigma}_{\boldsymbol{u}}^{(k)})$, $\|\boldsymbol{\Sigma}_{\boldsymbol{u}}^{(k)}\|_1\leq1/\iota$ and $\min_{1\leq i,\ell\leq p}\mbox{Var}(u_{i}^{(k)}u_{\ell}^{(k)})\geq\iota$.
\end{enumerate}
\end{assumption}

The consistency of factor estimations is summarized in the following lemma, which follows from Lemmas D.2 and D.3 in \cite{wang2017asymptotics}.

\begin{lemma}\label{lemma1}
Define $\boldsymbol{H}_k=n_k^{-1}\boldsymbol{V}_k^{-1}\hat{\boldsymbol{F}}_k^\top\boldsymbol{F}_k\boldsymbol{B}_k^\top\boldsymbol{B}_k$, where $\boldsymbol{V}_k\in\mathbb{R}^{r_k\times r_k}$ is a diagonal matrix with its diagonal elements being the first $r_k$ largest eigenvalues of $n_k^{-1}\boldsymbol{X}^{(k)}{\boldsymbol{X}^{(k)\top}}$. If Assumption \ref{assum1} holds for any $k\in\{0,\ldots,K\}$, and $\log n_k=o(p)$, for $k\in\{0,\ldots,K\}$, we then have for the $k$-th dataset,

    \begin{itemize}
        \item $\|\hat{\boldsymbol{F}}_k-\boldsymbol{F}_k\boldsymbol{H}_k^\top\|_{\mathbb{F}}^{2}=O_P(n_k/p+1/n_k)$, $\|\hat{\boldsymbol{F}}_k-\boldsymbol{F}_k\boldsymbol{H}_k^\top\|_{\max}^{2}=O_P((1/p+1/n_k^2)\log^2 n_k)$.
        \item For any $\mathcal{J}\subseteq\{1,\cdots,p\}$, $\max_{j\in\mathcal{J}}\sum_{i=1}^{n_k}|\hat{u}_{ij}^{(k)}-u_{ij}^{(k)}|^2=O_P(\log|\mathcal{J}|+n_k/p).$
        \item $\|\boldsymbol{H}_k^\top\boldsymbol{H}_k-\boldsymbol{I}_{r_k}\|=O_P(1/n_k+1/p)$.
        \item $\max_{j\in[p]}\|\hat{\boldsymbol{b}}_j^{(k)}-\boldsymbol{H}_k\boldsymbol{b}_j^{(k)}\|_2^2=O_P(\log p/n_k)$, $\|\hat{\boldsymbol{B}}_k-\boldsymbol{B}_k\boldsymbol{H}_k^{-1}\|_{\max}^2=O_P(\log p/n_k)$.
    \end{itemize}
\end{lemma}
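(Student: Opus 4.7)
The plan is to apply the PCA-based factor estimation theory of \cite{wang2017asymptotics} one dataset at a time, since the $K+1$ samples are independent so nothing is gained by a joint argument. For each fixed $k$, I would verify that Assumption \ref{assum1} matches the hypotheses required by their Lemmas D.2 and D.3: the sub-Gaussian tails in (a) yield the concentration of $\boldsymbol{X}^{(k)\top}\boldsymbol{X}^{(k)}/n_k$; the pervasiveness condition in (b) forces the top $r_k$ eigenvalues of $\boldsymbol{B}_k\boldsymbol{B}_k^\top$ to be of order $p$ and guarantees a spectral gap separating them from the noise bulk bounded by $\|\boldsymbol{\Sigma}_{\boldsymbol{u}}^{(k)}\|_2\leq\|\boldsymbol{\Sigma}_{\boldsymbol{u}}^{(k)}\|_1\leq 1/\iota$ from (d); the boundedness and fourth-moment controls in (c) provide the concentration needed for the cross terms involving $\boldsymbol{U}_k$. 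Once this compatibility is established, the four bullets are direct quotations from those lemmas.

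Operationally I would proceed as follows. First, for the eigenvector bound, combine the spectral-norm concentration $\|\boldsymbol{X}^{(k)\top}\boldsymbol{X}^{(k)}/n_k - (\boldsymbol{B}_k\boldsymbol{B}_k^\top+\boldsymbol{\Sigma}_{\boldsymbol{u}}^{(k)})\|_2 = O_P(\sqrt{p/n_k}+\sqrt{1/p})$ with a Davis--Kahan sin-$\Theta$ argument, using the order-$p$ eigengap, to obtain the Frobenius bound on $\hat{\boldsymbol{F}}_k-\boldsymbol{F}_k\boldsymbol{H}_k^\top$. The max-norm refinement requires expanding $\hat{\boldsymbol{F}}_k = n_k^{-1}\boldsymbol{X}^{(k)}\hat{\boldsymbol{B}}_k\boldsymbol{V}_k^{-1}$ row-wise and applying a Bernstein-type maximal inequality to the $n_k$ leading noise rows. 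Second, the near-orthogonality of $\boldsymbol{H}_k$ follows by plugging the identity $\hat{\boldsymbol{F}}_k^\top\hat{\boldsymbol{F}}_k/n_k=\boldsymbol{I}_{r_k}$ into the decomposition $\hat{\boldsymbol{F}}_k=\boldsymbol{F}_k\boldsymbol{H}_k^\top+(\hat{\boldsymbol{F}}_k-\boldsymbol{F}_k\boldsymbol{H}_k^\top)$ and using $\boldsymbol{F}_k^\top\boldsymbol{F}_k/n_k = \boldsymbol{I}_{r_k}+O_P(1/\sqrt{n_k})$ from the normalization $\mathrm{cov}(\boldsymbol{f}^{(k)})=\boldsymbol{I}_{r_k}$ together with the first-bullet rate, which produces the $1/n_k+1/p$ remainder. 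Third, for the loading bound, write $\hat{\boldsymbol{B}}_k-\boldsymbol{B}_k\boldsymbol{H}_k^{-1}= n_k^{-1}\boldsymbol{X}^{(k)\top}\hat{\boldsymbol{F}}_k - \boldsymbol{B}_k\boldsymbol{H}_k^{-1}$, substitute $\boldsymbol{X}^{(k)}=\boldsymbol{F}_k\boldsymbol{B}_k^\top+\boldsymbol{U}_k$, and control the three resulting terms by sub-Gaussian maximal inequalities over the $p$ rows, which is what produces the $\log p/n_k$ rate. Finally, for the idiosyncratic bound use $\hat{\boldsymbol{U}}_k=(\boldsymbol{I}_{n_k}-\hat{\boldsymbol{F}}_k\hat{\boldsymbol{F}}_k^\top/n_k)\boldsymbol{X}^{(k)}$, insert $\pm\boldsymbol{F}_k\boldsymbol{H}_k^\top$, and bound the pieces using the eigenvector rate together with a uniform bound on $\|\boldsymbol{u}_i^{(k)}\|_\infty$ and on the column-wise norms of $\boldsymbol{U}_k$; the $\log|\mathcal{J}|$ factor comes from a union bound over $\mathcal{J}$.

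The main obstacle is the entry-wise bound $\|\hat{\boldsymbol{F}}_k-\boldsymbol{F}_k\boldsymbol{H}_k^\top\|_{\max}^2=O_P((1/p+1/n_k^2)\log^2 n_k)$: standard matrix perturbation theory only delivers Frobenius- or spectral-norm rates, and obtaining the sharper row-wise rate requires an explicit eigenvector representation plus a careful leave-one-out or iterative refinement argument combined with the sub-Gaussian tails in Assumption \ref{assum1}(a). Once that entry-wise rate is in hand, the other three bullets fall out from routine sub-Gaussian concentration and triangle-inequality manipulations, which is why no independent derivation is pursued and the statement is simply cited from \cite{wang2017asymptotics}.
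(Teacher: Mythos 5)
Your proposal matches the paper exactly: the paper offers no independent derivation of Lemma \ref{lemma1} and simply invokes Lemmas D.2 and D.3 of \cite{wang2017asymptotics}, after noting that Assumption \ref{assum1} is the standard set of pervasive-factor conditions those lemmas require. Your additional sketch of how the rates would be re-derived (Davis--Kahan for the Frobenius bound, row-wise expansion plus maximal inequalities for the max-norm and loading bounds) is a reasonable outline of the cited machinery, but since you too ultimately rest on the citation, the two arguments are essentially identical.
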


\begin{remark}\label{remark 2.1}
    The true number of latent factors $r_k$ in each dataset is unknown in practice, and determining $r_k$ in a data-driven manner is a crucial task. A rich literature has proposed various approaches for estimating $r_k$ \citep{bai2002determining,lam2012factor,ahn2013eigenvalue,fan2022estimating}. Our theoretical results remain valid as long as $r_k$ is replaced by any consistent estimator $\hat{r}_k$, i.e., we only require $$\mathbb{P}(\hat{r}_k=r_k) \to 1, \quad \text{as }\quad n_k\to\infty.$$
\end{remark}
Therefore, we assume throughout this paper that the number of factors $r_k$ is known. 
For empirical implementation, we adopt the eigenvalue ratio method \citep{lam2012factor,ahn2013eigenvalue} to determine $r_k$. Specifically, let $\lambda_{i}(\boldsymbol{X}^{(k)}{\boldsymbol{X}^{(k)\top}})$ denote the eigenvalues of the Gram matrix $\boldsymbol{X}^{(k)}{\boldsymbol{X}^{(k)\top}}$. Then the estimator of the number of factors is given by
$$\hat{r}_k=\mathop{\arg\max}_{i\leq\mathcal{K}} \frac{\lambda_i(\boldsymbol{X}^{(k)}{\boldsymbol{X}^{(k)}}^\top)}{\lambda_{i+1}(\boldsymbol{X}^{(k)}{\boldsymbol{X}^{(k)}}^\top)},$$
where $1\leq\mathcal{K}\leq n_k$ is a prescribed upper bound for $r_k$.

Now we consider the way to attain an  estimator for the regression coefficient vector $\boldsymbol{w}^{(k)}$ in each dataset. It is commonly presumed that only a small subset of covariates exerts influence on the response variable under the high-dimensional setting. This corresponds to assuming that the true coefficient vector $\boldsymbol{w}^{(k)}$ is sparse. After attaining the estimators for $\hat{\boldsymbol{F}}_k$, $\hat{\boldsymbol{B}}_{k}$, and $\hat{\boldsymbol{U}}_k$ via \eqref{eq2.3}, the regularized estimators of the unknown parameter vectors $\boldsymbol{w}^{(k)}$ and $\boldsymbol{\gamma}^{(k)}$ for $k=0,\cdots,K$ in FARM model are 
\begin{equation}\label{eq2.4}
    \begin{aligned}
        (\hat{\boldsymbol{w}}^{(k)},\hat{\boldsymbol{\gamma}}^{(k)})=\mathop{\arg\min}\limits_{\boldsymbol{w}\in\mathbb{R}^p,\boldsymbol{\gamma}\in\mathbb{R}^{r_k}}\left\{\frac{1}{2n_k}\|\boldsymbol{Y}^{(k)}-\hat{\boldsymbol{U}}_k\boldsymbol{w}-\hat{\boldsymbol{F}}_k\boldsymbol{\gamma}\|_2^2+\lambda\|\boldsymbol{w}\|_1\right\},
    \end{aligned}
\end{equation}
where $\lambda>0$ is a tuning parameter.

\section{Factor Augmented Transfer Learning}\label{sec3}
\subsection{Oracle-Trans-FARM Estimation}\label{sec3.1}
Recall the estimators in \eqref{eq2.4}. Let $\hat{\boldsymbol{P}}_k=n_k^{-1}\hat{\boldsymbol{F}}_k\hat{\boldsymbol{F}}_k^\top$ be the projection matrix. Denote $\tilde{\boldsymbol{Y}}^{(k)}=(\boldsymbol{I}_{n_k}-\hat{\boldsymbol{P}}_k)\boldsymbol{Y}^{(k)}$. Recall that $\hat{\boldsymbol{U}}_k=(\boldsymbol{I}_{n_k}-\hat{\boldsymbol{F}}_k\hat{\boldsymbol{F}}_k^\top/n_k)\boldsymbol{X}^{(k)}=(\boldsymbol{I}_{n_k}-\hat{\boldsymbol{P}}_k)\boldsymbol{X}^{(k)}$, which implies that $\hat{\boldsymbol{F}}_k^\top\hat{\boldsymbol{U}}_k=\boldsymbol{0}_{r_k\times p}$ and \eqref{eq2.4} can be equivalently written as 
\begin{equation}\label{eq3.1}
    \begin{aligned}
        \hat{\boldsymbol{w}}^{(k)}&=\mathop{\arg\min}\limits_{\boldsymbol{w}\in\mathbb{R}^p}\left\{\frac{1}{2n_k}\|\tilde{\boldsymbol{Y}}^{(k)}-\hat{\boldsymbol{U}}_k\boldsymbol{w}\|_2^2+\lambda\|\boldsymbol{w}\|_1\right\}\\
        &=\mathop{\arg\min}\limits_{\boldsymbol{w}\in\mathbb{R}^p}\left\{\frac{1}{2n_k}\|(\boldsymbol{I}_{n_k}-\hat{\boldsymbol{P}}_k)(\boldsymbol{Y}^{(k)}-\boldsymbol{X}^{(k)}\boldsymbol{w})\|_2^2+\lambda\|\boldsymbol{w}\|_1\right\}\\
        \hat{\boldsymbol{\gamma}}^{(k)}&=(\hat{\boldsymbol{F}}_k^\top\hat{\boldsymbol{F}}_k)^{-1}\hat{\boldsymbol{F}}_k^\top\boldsymbol{Y}^{(k)}=\frac{1}{n_k}\hat{\boldsymbol{F}}_k^\top\boldsymbol{Y}^{(k)}
    \end{aligned}
\end{equation}

Inspired by the insights of transfer learning, we denote $\boldsymbol{\beta}=\boldsymbol{w}^{(0)}$ as the target regression coefficient vector and suppose that $\boldsymbol{\beta}$ is $s$-sparse, i.e., $\|\boldsymbol{\beta}\|_0\leq s$ with $s\ll p$. Define the $k$-th contrast vector $\boldsymbol{\delta}^{(k)}=\boldsymbol{\beta}-\boldsymbol{w}^{(k)}$ to quantify the discrepancy between the target and the $k$-th source datasets. We use the $\ell_1$-distance $\|\boldsymbol{\delta}^{(k)}\|_1$ to measure the similarity of the target dataset and the $k$-th source dataset. We begin with considering a general transfer learning algorithm on FARM when the transferable set $\mathcal{A}_{\eta}$ is known, where $\mathcal{A}_\eta=\{1\leq k\leq K:\|\boldsymbol{\delta}^{(k)}\|_1\leq \eta\}$ for some $\eta$. Transfer learning under this setting is expected to enhance estimation accuracy of $\boldsymbol{\beta}$ when $\eta$ is sufficiently small, as the target and source parameters are closely aligned. According to \eqref{eq3.1}, we are motivated to attain the initial estimation of $\boldsymbol{\beta}$ through the following regularized loss function, denoted by $\hat{\boldsymbol{w}}^{\mathcal{A}_\eta}$:
\begin{equation}\label{eq3.2}
    \begin{aligned}
        \hat{\boldsymbol{w}}^{\mathcal{A}_\eta}=\mathop{\arg\min}\limits_{\boldsymbol{w}\in\mathbb{R}^p}\left\{\frac{1}{2(n_{\mathcal{A}_\eta}+n_0)}\sum_{k\in\{0\}\cup\mathcal{A}_\eta}\|\tilde{\boldsymbol{Y}}^{(k)}-\hat{\boldsymbol{U}}_k\boldsymbol{w}\|_2^2+\lambda_{\boldsymbol{w}}\|\boldsymbol{w}\|_1\right\},
    \end{aligned}
\end{equation}
where $n_{\mathcal{A}_\eta}=\sum_{k\in\mathcal{A}_\eta}n_k$, and $\lambda_{\boldsymbol{w}}$ is a tuning parameter. 
The equation above can be understood as finding a solution that converges to its population version $\boldsymbol{w}^{\mathcal{A}_\eta}$ under certain regularization conditions, where $\boldsymbol{w}^{\mathcal{A}_\eta}$ satisfies
\begin{equation}\label{eq3.3}
    \begin{aligned}
        \boldsymbol{w}^{\mathcal{A}_\eta}=\mathop{\arg\min}\limits_{\boldsymbol{w}\in\mathbb{R}^p}\mathbb{E}\left[\frac{1}{2(n_{\mathcal{A}_\eta}+n_0)}\sum_{k\in\{0\}\cup\mathcal{A}_\eta}\|\boldsymbol{Y}^{(k)}-\boldsymbol{U}_k\boldsymbol{w}-\boldsymbol{F}_k\boldsymbol{\gamma}^{(k)}\|_2^2\right].
    \end{aligned}
\end{equation}

According to $\mathbb{E}[\boldsymbol{\mathcal{E}}^{(k)}]=\boldsymbol{0}_n$, the independence between $\boldsymbol{\mathcal{E}}^{(k)}$ and $\boldsymbol{U}_k$, and the independence between $\boldsymbol{U}_k$ and $\boldsymbol{F}_k$, the first-oreder optimality from \eqref{eq3.3} ensures that $\boldsymbol{w}^{\mathcal{A}_\eta}$ satisfies
\begin{align*}
    \mathbb{E}\left[\sum_{k\in\{0\}\cup\mathcal{A}_\eta}\alpha_k\boldsymbol{U}_k^\top\boldsymbol{U}_k(\boldsymbol{w}^{(k)}-\boldsymbol{w}^{\mathcal{A}_\eta})/n_k\right]=\mathbb{E}\left[\sum_{k\in\{0\}\cup\mathcal{A}_\eta}\alpha_k\boldsymbol{u}^{(k)}\boldsymbol{u}^{(k)\top}(\boldsymbol{w}^{(k)}-\boldsymbol{w}^{\mathcal{A}_\eta})\right]=\boldsymbol{0}_p,
\end{align*}
where $\alpha_k=n_k/(n_{\mathcal{A}_\eta}+n_0)$. Hence, 
$\boldsymbol{w}^{\mathcal{A}_\eta}$ has a explicit form: $\boldsymbol{w}^{\mathcal{A}_\eta}=(\boldsymbol{\Sigma}_{\boldsymbol{u}}^{\mathcal{A}_\eta})^{-1}\sum_{k\in\{0\}\cup\mathcal{A}_\eta}\alpha_k\\\boldsymbol{\Sigma}_{\boldsymbol{u}}^{(k)}\boldsymbol{w}^{(k)}$, where $\boldsymbol{\Sigma}_{\boldsymbol{u}}^{\mathcal{A}_\eta}=\sum_{k\in\{0\}\cup\mathcal{A}_\eta}\alpha_k\boldsymbol{\Sigma}_{\boldsymbol{u}}^{(k)}$, and $\boldsymbol{\Sigma}_{\boldsymbol{u}}^{(k)}=\mathbb{E}[\boldsymbol{u}^{(k)}\boldsymbol{u}^{(k)\top}]$.

After attaining $\hat{\boldsymbol{w}}^{\mathcal{A}_\eta}$, we define $\boldsymbol{\delta}^{\mathcal{A}_\eta}=\boldsymbol{\beta}-\boldsymbol{w}^{\mathcal{A}_\eta}$, and obtain its estimator $\hat{\boldsymbol{\delta}}^{\mathcal{A}_\eta}$ via solving:
\begin{equation}\label{eq3.4}
    \begin{aligned}
        \hat{\boldsymbol{\delta}}^{\mathcal{A}_\eta}=\mathop{\arg\min}\limits_{\boldsymbol{\delta}\in\mathbb{R}^p}\left\{\frac{1}{2n_0}\|\tilde{\boldsymbol{Y}}^{(0)}-\hat{\boldsymbol{U}}_0(\hat{\boldsymbol{w}}^{\mathcal{A}_\eta}+\boldsymbol{\delta})\|_2^2+\lambda_{\boldsymbol{\delta}}\|\boldsymbol{\delta}\|_1\right\},
    \end{aligned}
\end{equation}
where $\lambda_{\boldsymbol{\delta}}$ is a tuning parameter. We regard $\hat{\boldsymbol{\beta}}^{\mathcal{A}_\eta}=\hat{\boldsymbol{w}}^{\mathcal{A}_\eta}+\hat{\boldsymbol{\delta}}^{\mathcal{A}_\eta}$ as the estimator of $\boldsymbol{\beta}$. The algorithm is summarized in Algorithm \ref{algo1}.

\begin{algorithm}[!ht]
\caption{Oracle-Trans-FARM Algorithm}
\label{algo1}
\KwIn{Target data $(\boldsymbol{X}^{(0)}, \boldsymbol{Y}^{(0)})$, source data $\{(\boldsymbol{X}^{(k)}, \boldsymbol{Y}^{(k)})\}_{k=1}^K$, penalty parameters $\lambda_{\boldsymbol{w}}$ and $\lambda_{\boldsymbol{\delta}}$, transferring set $\mathcal{A}_\eta$, factor bound $\mathcal{K}$.}
\KwOut{Estimated coefficient vector $\hat{\boldsymbol{\beta}}^{\mathcal{A}_\eta}$.}
\underline{\textbf{Preprocessing step}}: Estimate $\hat{r}_k = \arg\max_{i \le \mathcal{K}} \dfrac{\lambda_i(\boldsymbol{X}^{(k)}{\boldsymbol{X}^{(k)}}^\top)}{\lambda_{i+1}(\boldsymbol{X}^{(k)}{\boldsymbol{X}^{(k)}}^\top)}$. Let $\hat{\boldsymbol{F}}_k / \sqrt{n_k}$ be the eigenvectors corresponding to the top $\hat{r}_k$ eigenvalues of $\boldsymbol{X}^{(k)}{\boldsymbol{X}^{(k)}}^\top$. Compute $\hat{\boldsymbol{B}}_k = \hat{\boldsymbol{F}}_k^\top \boldsymbol{X}^{(k)} / n_k$ and $\hat{\boldsymbol{U}}_k = (\boldsymbol{I}_{n_k} - \hat{\boldsymbol{F}}_k \hat{\boldsymbol{F}}_k^\top / n_k)\boldsymbol{X}^{(k)}$.\\
\underline{\textbf{Transferring step}}: Compute \eqref{eq3.2}, i.e.,
$$\displaystyle\hat{\boldsymbol{w}}^{\mathcal{A}_\eta}=\mathop{\arg\min}\limits_{\boldsymbol{w}\in\mathbb{R}^p}\left\{\frac{1}{2(n_{\mathcal{A}_\eta}+n_0)}\sum_{k\in\{0\}\cup\mathcal{A}_\eta}\|\tilde{\boldsymbol{Y}}^{(k)}-\hat{\boldsymbol{U}}_k\boldsymbol{w}\|_2^2+\lambda_{\boldsymbol{w}}\|\boldsymbol{w}\|_1\right\}$$ \\
\underline{\textbf{Debiasing step}}: Compute \eqref{eq3.4}, i.e.,
$$\hat{\boldsymbol{\delta}}^{\mathcal{A}_\eta}=\mathop{\arg\min}\limits_{\boldsymbol{\delta}\in\mathbb{R}^p}\left\{\frac{1}{2n_0}\|\tilde{\boldsymbol{Y}}^{(0)}-\hat{\boldsymbol{U}}_0(\hat{\boldsymbol{w}}^{\mathcal{A}_\eta}+\boldsymbol{\delta})\|_2^2+\lambda_{\boldsymbol{\delta}}\|\boldsymbol{\delta}\|_1\right\}$$\\
Let $\hat{\boldsymbol{\beta}}^{\mathcal{A}_\eta} = \hat{\boldsymbol{w}}^{\mathcal{A}_\eta} + \hat{\boldsymbol{\delta}}^{\mathcal{A}_\eta}$.\\
\end{algorithm}

Before presenting the error bound of $\hat{\boldsymbol{\beta}}^{\mathcal{A}_\eta}$, several technical assumptions below are required.

\begin{assumption}[Assumption on $\mathcal{E}^{(k)}$]\label{assum2}
There exists a positive constant $c<\infty$ such that $\max_{k\in\{0\}\cup\mathcal{A}_\eta}\|\mathcal{E}^{(k)}\|_{\psi_2}\leq c$.
\end{assumption}

\begin{assumption}\label{assum3}
There exists a positive constant $C_1\!\!<\!\!\infty$ such that $\max_{k\in\{0\}\cup\mathcal{A}_\eta}\!\|(\boldsymbol{\Sigma}_{\boldsymbol{u}}^{\mathcal{A}_\eta})^{-1}\!\boldsymbol{\Sigma}_{\boldsymbol{u}}^{(k)}\!\|_1\\\leq\!C_1$, where $\boldsymbol{\Sigma}_{\boldsymbol{u}}^{\mathcal{A}_\eta}=\sum_{k\in\{0\}\cup\mathcal{A}_\eta}\alpha_k\boldsymbol{\Sigma}_{\boldsymbol{u}}^{(k)}$, and $\boldsymbol{\Sigma}_{\boldsymbol{u}}^{(k)}=\mathbb{E}[\boldsymbol{u}^{(k)}\boldsymbol{u}^{(k)\top}]$.
\end{assumption}

Assumption \ref{assum2} is standard and mild in high-dimensional regression models \citep{fan2024latent,zhang2017simultaneous,tian2023transfer}. Assumption \ref{assum3} is also commonly adopted in transfer learning settings \citep{tian2023transfer,zhang2025transfer}.
In particular, Assumption \ref{assum3} is reasonable as it does not require exact sparsity of $(\boldsymbol{\Sigma}_{\boldsymbol{u}}^{\mathcal{A}_\eta})^{-1}\boldsymbol{\Sigma}_{\boldsymbol{u}}^{(k)}$ and is satisfied, for example, by Toeplitz-type covariance structures. This assumption can be further interpreted as a constraint on the heterogeneity between the target idiosyncratic component and the source idiosyncratic component; see Condition 4 in \cite{li2023transfer} for a detailed discussion. However, since the covariates are not sub-Gaussian if factors exist, our assumptions can be regarded as weaker conditions. This assumption can also be relaxed at the cost of slightly weaker estimation bounds, following arguments similar to those in Theorem 3 of \cite{tian2023transfer}.

Now, we are ready to present our main results for the estimation error bounds of the Oracle Trans-FARM algorithm proposed in Algorithm \ref{algo1}. Given the transferable set $\mathcal{A}_\eta\subseteq\{1,\ldots,K\}$, we consider the parameter space $$\Theta(s,\eta)=\left\{\boldsymbol{\mathcal{B}}=(\boldsymbol{\beta},\boldsymbol{\delta}^{(1)},\cdots,\boldsymbol{\delta}^{(K)}):\|\boldsymbol{\beta}\|_0\leq s,\sup_{k\in\mathcal{A}_\eta}\|\boldsymbol{\delta}^{(k)}\|_1\leq\eta\right\}.$$

\begin{theorem}\label{thm1}
Suppose that Assumptions \ref{assum1}-\ref{assum2} hold for $k\in\{0\}\cup\mathcal{A}_\eta$ and Assumption \ref{assum3} holds. Assume that $\max_{k\in\{0\}\cup\mathcal{A}_\eta} n_k=O(p)$, $\eta=O(1)$, the following \eqref{eq3.5} holds for a constant $c$ small enough:
\begin{align}\label{eq3.5}
    \max\left\{\sqrt{\frac{s\log p}{n_0}},\frac{s(|\mathcal{A}_\eta|+1)\log p}{n_{\mathcal{A}_\eta}+n_0},\frac{s}{p}\right\}<c,
\end{align}
and
\begin{small}
\begin{align}\label{eq3.6}
    \max\left\{\frac{(|\mathcal{A}_\eta|+1)(1\vee\eta\log^{\frac{1}{2}} p\vee\max_{k\in\{0\}\cup\mathcal{A}_\eta}\mathcal{V}_{n_k,p}\|\varphi^{(k)}\|_2\log^{-\frac{1}{2}}p)}{\sqrt{n_{\mathcal{A}_\eta}+n_0}},\frac{\mathcal{V}_{n_0,p}\|\varphi^{(0)}\|_2}{\sqrt{n_0\log p}}\right\}=O(1),
\end{align}
\end{small}
where $$\mathcal{V}_{n_k,p}=\frac{n_k}{p}+\sqrt{\frac{\log p}{n_k}}+\sqrt{\frac{n_k\log p}{p}}.$$ Take $\lambda_{\boldsymbol{w}}\asymp\sqrt{\log p/(n_{\mathcal{A}_\eta}+n_0)}$ and $\lambda_{\boldsymbol{\delta}}\asymp\sqrt{\log p/n_0}$, then with probability approaching $1$, for any $\boldsymbol{\mathcal{B}}\in\Theta(s,\eta)$, we have
\begin{align*}
    \|\hat{\boldsymbol{\beta}}^{\mathcal{A}_\eta}-\boldsymbol{\beta}\|_2&\lesssim\sqrt{\frac{\log p}{n_0}}\eta+\left(\frac{\log p}{n_0}\right)^{\frac{1}{4}}\sqrt{\eta}+\frac{\eta}{\sqrt{p}}+\sqrt{\frac{s\log p}{n_{\mathcal{A}_\eta}+n_0}}\\
    \|\hat{\boldsymbol{\beta}}^{\mathcal{A}_\eta}-\boldsymbol{\beta}\|_1&\lesssim s\sqrt{\frac{\log p}{n_{\mathcal{A}_\eta}+n_0}}+\eta+\sqrt{s\eta}\left(\frac{\log p}{n_0}\right)^{\frac{1}{4}}
\end{align*}
\end{theorem}
\begin{remark}\label{remark3.1}
    We provide some illustrations of the conditions in Theorem \ref{thm1}. The dimension-type condition $n_k=O(p)$ is widely applied in factor analysis to ensure consistency. Conditions \eqref{eq3.5}-\eqref{eq3.6} are also mild. Consider a regime that $\{n_k\}_{k\in\{0\}\cup\mathcal{A}_\eta}$ are the same order, the additional contribution of factors is bounded by $\|\varphi^{(k)}\|_2=O_P(\sqrt{\log p})$, and $\eta=O(1)$, \eqref{eq3.6} and the second term in \eqref{eq3.5} changes into $\sqrt{(|\mathcal{A}_\eta|+1)\log p/n_0}=O(1)$ and $s\log p/n_0<c$ for some small constant $c$.
\end{remark}
The proof of Theorem~\ref{thm1} can be found in Appendix \ref{secA1}. From Theorem~\ref{thm1}, the $\ell_1$- and $\ell_2$-error bounds of our estimator improve upon those in \cite{fan2024latent} provided that $\eta\ll s\sqrt{\log p/n_0}$ and $n_{\mathcal{A}_\eta}+n_0\gg n_0$. This demonstrates the advantage gained through transfer learning.

\subsection{Source Detection and Its Consistency}\label{sec3.2}
In this section, we consider how to detect the transferable set $\mathcal{A}_\eta$ through a data-driven manner. Since the performance of transfer learning deeply depends on the transferable set, the \textit{negative transfer} \citep{pan2009survey,ge2014handling} would happen if irrelevant sources are included in the transferable set. Therefore, a reliable procedure for identifying which source datasets can be effectively leveraged is crucial.

Motivated by \cite{tian2023transfer,yang2025communication,yang2025debiased}, we develop a source detection procedure as follows. We begin by randomly partitioning the target dataset into three folds, denoted by $\{(\boldsymbol{X}^{(0)[r]},\boldsymbol{Y}^{(0)[r]})\}_{r=1}^3$. Note that we choose three folds and suppose $n_0$ is divisible by $3$ only for convenience. Next, for each source dataset, we apply the transferring step using every two folds of target data and evaluate the resulting model on the remaining fold. We compute the corresponding value of the given loss function on the remaining target fold and average across the three splits to obtain the cross-validated loss $\hat{L}_0^{(k)}$ for each source $k$. As a benchmark, we also fit the Lasso estimator using every pair of target folds and evaluate its performance on the remaining fold, yielding the average cross-validated loss $\hat{L}_0^{(0)}$, which serves as the loss of the target-only model. Finally, we compare the difference $\hat{L}_0^{(k)}-\hat{L}_0^{(0)}$ with a pre-specified threshold. We then include source $k$ in $\widehat{\mathcal{A}}$ if $\hat{L}_0^{(k)}-\hat{L}_0^{(0)}$ is below this threshold. Write the $r$-th fold of target data is $(\boldsymbol{X}^{(0)[r]},\boldsymbol{Y}^{(0)[r]})$ with the data included in it denoting by $\mathcal{R}^{[r]}\subseteq[n_0]$. Let $\boldsymbol{e}^{(0)[r]}$ be the $n_0$-dimensional vector with $\boldsymbol{e}_{\mathcal{R}^{[r]}}^{(0)[r]}=1$ and $\boldsymbol{e}_{[n_0]\backslash\mathcal{R}^{[r]}}^{(0)[r]}=0$. For any regression coefficient estimate $\boldsymbol{w}$, the loss function under our FARM setting on the $r$-th fold of target data is 
\begin{equation}\label{eq3.7}
	\hat{L}_0^{[r]}(\boldsymbol{w})=\frac{1}{n_0/3}\|{\boldsymbol{e}^{(0)[r]}}^\top(\tilde{\boldsymbol{Y}}^{(0)}-\hat{\boldsymbol{U}}_{0}\boldsymbol{w})\|_2^2,
\end{equation}
After obtaining the estimate $\widehat{\mathcal{A}}$, we apply Algorithm \ref{algo1} with $\widehat{\mathcal{A}}$ as input. The full procedure is described in Algorithm \ref{algo2}.

\begin{algorithm}[!ht]
\caption{Trans-FARM Algorithm}
\label{algo2}
\KwIn{target data $(\boldsymbol{X}^{(0)}, \boldsymbol{Y}^{(0)})$, source data $\{(\boldsymbol{X}^{(k)}, \boldsymbol{Y}^{(k)})\}_{k=1}^K$, a constant $\epsilon_0>0$, penalty parameters $\{\{\lambda^{(k)[r]}\}_{k=0}^K\}_{r=1}^3$, constants $\epsilon_0$ and $\hat{\sigma}$.}
\KwOut{the estimated coefficient vector $\hat{\boldsymbol{\beta}}$, and the determined transferring set $\widehat{\mathcal{A}}$}
\underline{\textbf{Preprocessing step}}: Run step 1 (Preprocessing step) of Algorithm \ref{algo1}.\\
\underline{\textbf{Transferable source detection}}:
Randomly divide $(\boldsymbol{X}^{(0)}, \boldsymbol{Y}^{(0)})$ into three sets of equal size as $\{(\boldsymbol{X}^{(0)[i]}, \boldsymbol{Y}^{(0)[i]})\}_{i=1}^3$ \\
\For{$r = 1$ \KwTo $3$}{
	$\hat{\boldsymbol{\beta}}^{[r]} \leftarrow$ fit the Lasso on $\{(\boldsymbol{X}^{(0)[i]}, \boldsymbol{Y}^{(0)[i]})\}_{i=1}^3 \backslash (\boldsymbol{X}^{(0)[r]}, \boldsymbol{Y}^{(0)[r]})$ with penalty parameter $\lambda^{(0)[r]}$\\
	$\hat{\boldsymbol{w}}^{(k)[r]} \leftarrow$ run step 2 (transferring step) in Algorithm \ref{algo1} with $(\{(\boldsymbol{X}^{(0)[i]}, \boldsymbol{Y}^{(0)[i]})\}_{i=1}^3\backslash(\boldsymbol{X}^{(0)[r]}, \boldsymbol{Y}^{(0)[r]}))\cup (\boldsymbol{X}^{(k)}, \boldsymbol{Y}^{(k)})$ and penalty parameter $\lambda^{(k)[r]}$ for all $k \neq 0$\\
	Compute the loss $\hat{L}_0^{[r]}(\hat{\boldsymbol{w}}^{(k)[r]})$ for $k\neq0$ and $\hat{L}_0^{[r]}(\hat{\boldsymbol{\beta}}^{[r]})$ on $(\boldsymbol{X}^{(0)[r]}, \boldsymbol{Y}^{(0)[r]})$.\\
}
$\hat{L}_0^{(k)} \leftarrow \sum_{r=1}^3\hat{L}_0^{[r]}(\hat{\boldsymbol{w}}^{(k)[r]})/3$ for $k\neq 0$, $\hat{L}_0^{(0)} \leftarrow \sum_{r=1}^3 \hat{L}_0^{[r]}(\hat{\boldsymbol{\beta}}^{[r]})/3$\\
$\widehat{\mathcal{A}} \leftarrow \{k \neq 0: \hat{L}_0^{(k)}\leq \hat{L}_0^{(0)}+\epsilon_0\hat{\sigma}^2\}$ \\
\underline{\textbf{$\widehat{\mathcal{A}}$-Trans-GLM}}: $\hat{\boldsymbol{\beta}} \leftarrow$ run Algorithm \ref{algo1} using $\{(\boldsymbol{X}^{(k)}, \boldsymbol{Y}^{(k)})\}_{k\in \{0\} \cup \widehat{\mathcal{A}}}$\\
Output $\hat{\boldsymbol{\beta}}$\\
\end{algorithm}

To ensure the consistency of the source detection algorithm, we impose some assumptions below.

\begin{assumption}\label{assum4}
\leavevmode\par\vspace{0.3ex}%
\begin{itemize}
    \item (a) There exists a positive constant $C_2<\infty$ such that for every $k$, $$\max_{i\in\{0,k\}}\left\|\left(\frac{n_0}{n_0+n_k}\boldsymbol{\Sigma}_{\boldsymbol{u}}^{(0)}+\frac{n_k}{n_0+n_k}\boldsymbol{\Sigma}_{\boldsymbol{u}}^{(k)}\right)^{-1}\boldsymbol{\Sigma}_{\boldsymbol{u}}^{(i)}\right\|_1\leq C_2.$$ Moreover, assume that
    $$\frac{\mathcal{V}_{n_k,p}\|\boldsymbol{\varphi}^{(k)}\|_2}{\sqrt{(n_k+n_0)\log p}}=O(1).$$
    \item (b) There exists a positive constant $\tilde{\eta}$ such that $\|\boldsymbol{w}^{(k)}-\boldsymbol{\beta}\|_1\leq\tilde{\eta}$ for $k\in\mathcal{A}_{\eta}^{c}$.
    \item (c) Assume that for some $\eta$,
    \begin{align*}
        \epsilon_0&\gtrsim\frac{s\log p}{n_0}+\eta\sqrt{\frac{\log p}{n_0}}+\frac{\eta^2}{p}+s^2\left(\frac{\log p}{n_0}\right)^{\frac{3}{2}},\quad\text{and}\\
        \inf_{k\in\mathcal{A}_\eta^c}\|\boldsymbol{w}^{(k)}-\boldsymbol{\beta}\|_2^2&\gtrsim\epsilon_0\vee \left(\frac{s\log p}{n_0}+\sqrt{\frac{\log p}{n_0}}(\tilde{\eta}\vee\tilde{\eta}^2)+\tilde{\eta}^{\frac{3}{2}}\left(\frac{\log p}{n_0}\right)^{\frac{1}{4}}+\tilde{\eta}\sqrt{\frac{s\log p}{n_0+n_k}}\right),
    \end{align*}
    where $\epsilon_0$ is defined in Algorithm \ref{algo2}.
    \item (d) Suppose that $\sigma_0^2=\mbox{Var}(\mathcal{E}^{(0)})$ with $0<\sigma_0^2<\infty$. There exists a $\Delta_{\sigma} (0<\Delta_{\sigma}\leq 1)$ such that $|\hat{\sigma}/\sigma_0-1|=O_P(\Delta_{\sigma})$, and $\Delta_{\sigma}=O(1)$, where $\hat{\sigma}$ is defined in Algorithm \ref{algo2}.
\end{itemize}
\end{assumption}

\begin{remark}\label{remark3.2}
    Assumption \ref{assum4} (a)(b)(c) ensures that the transferring step can be applied to attain some estimator for the consistency of source detection. Similar conditions can be found in \cite{tian2023transfer,yang2025communication,yang2025debiased,zhang2025transfer}. Assumption \ref{assum4} (d) is to ensure that $\hat{\sigma}^2$ used in Algorithm \ref{algo2} is the order of a constant such that our source detection algorithm can be proved to be consistent. This type of choice of $\hat{\sigma}$ is similar to \cite{tian2023transfer,yang2025communication,yang2025debiased}. As stated in \cite{yang2025debiased}, the source detection is not sensitive to the choice of $\epsilon_0\hat{\sigma}^2$ in practice. 
\end{remark}
Under Assumption \ref{assum4}, Theorem \ref{thm2} establishes the consistency of our proposed source detection algorithm. The proof is provided in Appendix \ref{secA2}.
\begin{theorem}\label{thm2}
    Suppose that Assumptions \ref{assum1}-\ref{assum2} hold for $k\in[K]$ and Assumptions \ref{assum3}-\ref{assum4} also hold. In addition, assume that $\max_{k\in[K]} n_k=O(p)$, $\eta=O(1)$, and conditions \eqref{eq3.5}-\eqref{eq3.6} are satisfied. If we choose $\lambda^{(0)[r]}\asymp\sqrt{\log p/n_0}$ and $\lambda^{(k)[r]}\asymp\sqrt{\log p/(n_0+n_k)}$, then $\mathbb{P}(\widehat{\mathcal{A}}=\mathcal{A}_\eta)\to1$.
\end{theorem}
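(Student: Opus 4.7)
The plan is to establish the separation $\hat{L}_0^{(k)} - \hat{L}_0^{(0)} \leq \epsilon_0\hat{\sigma}^2$ for all $k \in \mathcal{A}_\eta$ and $\hat{L}_0^{(k)} - \hat{L}_0^{(0)} > \epsilon_0\hat{\sigma}^2$ for all $k \in \mathcal{A}_\eta^c$, each with probability tending to one, and then finish by a union bound over $k \in [K]$. Assumption \ref{assum4}(d) lets us work with $\sigma_0^2$ throughout and pay a negligible $\Delta_\sigma$ at the end. The core device is the identity obtained by expanding the fold-$r$ loss around the truth: since the $r$th fold is independent of the training folds that produced $\hat{\boldsymbol{w}}^{(k)[r]}$ and $\hat{\boldsymbol{\beta}}^{[r]}$, conditioning on the training data and using $\tilde{\boldsymbol{Y}}^{(0)} - \hat{\boldsymbol{U}}_0\boldsymbol{\beta} \approx \boldsymbol{\mathcal{E}}^{(0)}$ (up to factor-estimation error controlled by Lemma \ref{lemma1}) yields, for any deterministic $\boldsymbol{w}$,
\begin{align*}
    \hat{L}_0^{[r]}(\boldsymbol{w}) - \sigma_0^2 = (\boldsymbol{w}-\boldsymbol{\beta})^\top \boldsymbol{\Sigma}_u^{(0)}(\boldsymbol{w}-\boldsymbol{\beta}) + O_P\!\left(\sqrt{\tfrac{\log p}{n_0}}\,\|\boldsymbol{w}-\boldsymbol{\beta}\|_1 + \tfrac{1}{\sqrt{n_0}}\right).
\end{align*}
Sub-Gaussian/Bernstein concentration of the cross term $\hat{\boldsymbol{U}}_0^\top\boldsymbol{\mathcal{E}}^{(0)}$, concentration of the sample covariance of $\hat{\boldsymbol{U}}_0$ around $\boldsymbol{\Sigma}_u^{(0)}$, and the $\hat{\boldsymbol{U}}_k - \boldsymbol{U}_k$ control in Lemma \ref{lemma1} are the inputs.

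For $k \in \mathcal{A}_\eta$, I substitute $\boldsymbol{w} = \hat{\boldsymbol{w}}^{(k)[r]}$ and invoke Theorem \ref{thm1} applied to the two-fold training target combined with source $k$: this yields $\|\hat{\boldsymbol{w}}^{(k)[r]}-\boldsymbol{\beta}\|_2^2$ and $\|\hat{\boldsymbol{w}}^{(k)[r]}-\boldsymbol{\beta}\|_1$ of exactly the shape appearing on the right-hand side of the first line of Assumption \ref{assum4}(c); combined with the standard Lasso bound $\|\hat{\boldsymbol{\beta}}^{[r]}-\boldsymbol{\beta}\|_2^2 \lesssim s\log p/n_0$ and $\|\hat{\boldsymbol{\beta}}^{[r]}-\boldsymbol{\beta}\|_1 \lesssim s\sqrt{\log p/n_0}$, subtraction and averaging over the three folds gives $\hat{L}_0^{(k)} - \hat{L}_0^{(0)} \leq \epsilon_0\sigma_0^2(1+o_P(1))$. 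For $k \in \mathcal{A}_\eta^c$, Assumption \ref{assum4}(b) gives $\|\boldsymbol{w}^{(k)}-\boldsymbol{\beta}\|_1 \leq \tilde{\eta}$, so re-running the Oracle-Trans-FARM analysis with target $\boldsymbol{w}^{(k)}$ instead of $\boldsymbol{\beta}$ shows $\hat{\boldsymbol{w}}^{(k)[r]}$ converges to the pooled population minimizer, and hence $\|\hat{\boldsymbol{w}}^{(k)[r]} - \boldsymbol{w}^{(k)}\|_2$ is small. The triangle inequality then delivers
\begin{align*}
    (\hat{\boldsymbol{w}}^{(k)[r]}-\boldsymbol{\beta})^\top \boldsymbol{\Sigma}_u^{(0)}(\hat{\boldsymbol{w}}^{(k)[r]}-\boldsymbol{\beta}) \gtrsim \|\boldsymbol{w}^{(k)}-\boldsymbol{\beta}\|_2^2 - (\text{bias and estimation residuals}),
\end{align*}
where the lower bound on $\|\boldsymbol{w}^{(k)}-\boldsymbol{\beta}\|_2^2$ in Assumption \ref{assum4}(c) dominates all residual terms, so $\hat{L}_0^{(k)} - \hat{L}_0^{(0)}$ strictly exceeds $\epsilon_0\hat{\sigma}^2$ with probability tending to one.

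The main obstacle will be the exclusion case: proving a useful estimation bound for $\hat{\boldsymbol{w}}^{(k)[r]}$ when $\|\boldsymbol{w}^{(k)}-\boldsymbol{\beta}\|_1$ is only bounded by a constant $\tilde{\eta}$ rather than being small. Theorem \ref{thm1} itself is not directly applicable because it targets $\boldsymbol{\beta}$, so I need a ``biased'' variant that targets the pooled population parameter $\boldsymbol{w}^{\{0,k\}} = (\boldsymbol{\Sigma}_u^{(0)} n_0 + \boldsymbol{\Sigma}_u^{(k)} n_k)^{-1}(\boldsymbol{\Sigma}_u^{(0)} n_0 \boldsymbol{\beta} + \boldsymbol{\Sigma}_u^{(k)} n_k \boldsymbol{w}^{(k)})$ and carefully tracks its bias toward $\boldsymbol{w}^{(k)}$ through Assumption \ref{assum4}(a). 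The lower bound in Assumption \ref{assum4}(c), which blends together all the terms $s\log p/n_0$, $\tilde{\eta}^{3/2}(\log p/n_0)^{1/4}$, and $\tilde{\eta}\sqrt{s\log p/(n_0+n_k)}$, is calibrated so that every residual produced by this biased transfer argument is absorbed and the signal $\|\boldsymbol{w}^{(k)}-\boldsymbol{\beta}\|_2^2$ dominates. Once this separation is in place, the union bound over $k\in[K]$ is routine because all high-probability events in the concentration steps have exponentially small failure probabilities.
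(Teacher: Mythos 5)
Your proposal is correct and follows essentially the same route as the paper: a cross-fitted comparison of $\hat{L}_0^{[r]}$ decomposed into empirical-process deviations plus population excess risk, Theorem \ref{thm1}-type rates for $k\in\mathcal{A}_\eta$, the pooled population minimizer $\boldsymbol{w}^{(0,k)}$ with the signal-strength lower bound of Assumption \ref{assum4}(c) for $k\in\mathcal{A}_\eta^c$, and a final union bound. The obstacle you flag for the exclusion case is exactly the one the paper handles, and in the same way.
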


\section{Simultaneous Inference}\label{sec4}
In this section, we study the construction of confidence intervals for the regression coefficients $\boldsymbol{\beta}$ in a more stable and reliable way by leveraging auxiliary datasets, as it is important to quantify the uncertainty. Both entrywise confidence intervals \citep{van2014asymptotically,javanmard2014confidence,tian2023transfer} and groupwise simultaneous confidence intervals \citep{zhang2017simultaneous,fan2024latent,yan2023confidence,cai2025statistical} can be obtained. We focus on the construction of groupwise simultaneous confidence intervals, as the entrywise intervals are simpler and arise as a direct byproduct of the groupwise procedure. Let $\mathcal{G}\subseteq[p]$ be the indices of parameters of interest, and denote $\|\boldsymbol{Q}\|_{\mathcal{G}}=\|\boldsymbol{Q}_{\mathcal{G}}\|_{\infty}$ as the maximum norm on the sub-vector $\boldsymbol{Q}_{\mathcal{G}}$. The key is to attain a debiased-type estimator via constructing the inverse matrix of the Hessian matrix of the population loss function on the target dataset, i.e. $\boldsymbol{\Theta}=(\boldsymbol{\Sigma}_{\boldsymbol{u}}^{(0)})^{-1}$.

Note that the transfer learning is also extended to attain a better estimator of $\boldsymbol{\Theta}$ by using nodewise regression on the target dataset and the informative dataset $\widehat{\mathcal{A}}_\eta$ \citep{tian2023transfer,yang2025communication,yang2025debiased}. However, there is no guarantee that the informative sets for regression coefficients and design matrix are the same. Another problem is that the regressors $\hat{\boldsymbol{U}}^{(k)}$ in FARM are estimated such that estimating $\boldsymbol{\Theta}$ via transfer learning does not improve estimation accuracy. Hence, we construct the debiased estimator for $\boldsymbol{\beta}$ as 
\begin{align}\label{eq4.1}
    \tilde{\boldsymbol{\beta}}=\hat{\boldsymbol{\beta}}+\frac{1}{n_0}\hat{\boldsymbol{\Theta}}\hat{\boldsymbol{U}}_0^\top(\tilde{\boldsymbol{Y}}^{(0)}-\hat{\boldsymbol{U}}_0\hat{\boldsymbol{\beta}})
\end{align}
where $\hat{\boldsymbol{\beta}}$ is the estimator attained via Algorithm \ref{algo1}, and $\hat{\boldsymbol{\Theta}}$ is an estimator of $\boldsymbol{\Theta}$ by estimating the inverse matrix of $\hat{\boldsymbol{U}}^{(0)\top}\hat{\boldsymbol{U}}^{(0)}/n_0$. We do not constrain $\boldsymbol{\Theta}$ to be attained from any specific type, but only need to satisfy the following assumption.
\begin{assumption}\label{assum5}
    There exist positive $\Delta_1$, $\Delta_{\max}$, $\Delta_{\infty}$, and $0<\Delta_{\sigma}\leq 1$ such that $|\hat{\sigma}/\sigma_0-1|=O_P(\Delta_{\sigma})$, and $$\|\boldsymbol{I}_p-\hat{\boldsymbol{\Theta}}\boldsymbol{\Sigma}_{\boldsymbol{u}}^{(0)}\|_{\max}=O_P(\Delta_1),\|\hat{\boldsymbol{\Theta}}-\boldsymbol{\Theta}\|_{\max}=O_P(\Delta_{\max}),\|\hat{\boldsymbol{\Theta}}-\boldsymbol{\Theta}\|_\infty=O_P(\Delta_\infty),$$
    where $\sigma_0^2=\mbox{Var}(\mathcal{E}^{(0)})$, and $\Delta_1$, $\Delta_{\max}$, $\Delta_{\infty}$, and $\Delta_{\sigma}$ satisfies $$\sqrt{n_0\log p}\Delta_1\mathcal{R}_1+\Delta_\infty\log p\to 0\quad\text{and}\quad\left(\mathcal{V}_{n_0,p}\|\boldsymbol{\varphi}^{(0)}\|_2+\sqrt{\log p}\right)\|\boldsymbol{\Theta}\|_\infty\sqrt{\frac{\log p}{n_0}}\to 0,$$
    where $\mathcal{R}_1$ is the $\ell_1$-error bound for $\hat{\boldsymbol{\beta}}$ established in Theorem \ref{thm1}. We further assume that $\|\hat{\boldsymbol{\Theta}}-\boldsymbol{\Theta}\|_\infty=O_P(\|\boldsymbol{\Theta}\|_\infty)$ without loss of generality.
\end{assumption}
This assumption on the estimation of $\hat{\boldsymbol{\Theta}}$ can be easily satisfied for the CLIME-type estimator \citep{cai2011constrained,javanmard2014confidence,yan2023confidence,cai2025statistical} and the nodewise regression \citep{zhang2014confidence,van2014asymptotically,zhang2017simultaneous}. For the assumption on the estimation of $\hat{\sigma}$, one can adopt the refitted cross-validation approach in \cite{fan2012variance} or the scaled-Lasso \citep{sun2012scaled} to attain a consistent estimation of $\sigma_0$ by following \cite{fan2024latent}. %More details can be seen in Section 3.1 of \cite{fan2024latent} and references therein. 

Now we begin by the case that $\mathcal{G}=[p]$ and extend it to any subset $\mathcal{G}\subseteq[p]$. The following theorem establishes a uniform Gaussian approximation for the sup-norm of the debiased estimator $\tilde{\boldsymbol{\beta}}$, which forms the basis for simultaneous inference. 
\begin{theorem}\label{thm3}
    Suppose that $\hat{\boldsymbol{\beta}}$ satisfies the error bounds shown in Theorem \ref{thm1}, and Assumptions \ref{assum1}-\ref{assum2} hold for $k=0$ and Assumption \ref{assum5} also holds, then there exists a $p$-dimensional Gaussian random vector $\boldsymbol{Q}$ with mean $\boldsymbol{0}_p$ and variance $\sigma^2\boldsymbol{\Theta}$ such that 
    $$\sup_{t\in\mathbb{R}}\left|\mathbb{P}\left(\|\sqrt{n}(\tilde{\boldsymbol{\beta}}-\boldsymbol{\beta})\|_\infty\leq t\right)-\mathbb{P}(\|\boldsymbol{Q}\|_\infty\leq t)\right|=o(1).$$
\end{theorem}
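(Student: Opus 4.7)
The plan is to decompose $\sqrt{n_0}(\tilde{\boldsymbol{\beta}}-\boldsymbol{\beta})$ into a linear leading term that is amenable to a high-dimensional Gaussian approximation plus a remainder that is $\ell_\infty$-negligible, and then combine a Kolmogorov-type CLT on max-rectangles with Nazarov's anti-concentration inequality for the maximum of a Gaussian vector, following the strategy of \cite{zhang2017simultaneous,fan2024latent,cai2025statistical}.

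First, substituting $\boldsymbol{Y}^{(0)}=\boldsymbol{F}_0\boldsymbol{\gamma}^{(0)}+\boldsymbol{U}_0\boldsymbol{\beta}+\boldsymbol{\mathcal{E}}^{(0)}$ into \eqref{eq3.5}, using $\hat{\boldsymbol{U}}_0=(\boldsymbol{I}_{n_0}-\hat{\boldsymbol{P}}_0)\boldsymbol{X}^{(0)}$, the relation $(\boldsymbol{I}_{n_0}-\hat{\boldsymbol{P}}_0)\boldsymbol{U}_0=\hat{\boldsymbol{U}}_0-(\boldsymbol{I}_{n_0}-\hat{\boldsymbol{P}}_0)\boldsymbol{F}_0\boldsymbol{B}_0^\top$, the definition $\boldsymbol{\varphi}^{(0)}=\boldsymbol{\gamma}^{(0)}-\boldsymbol{B}_0^\top\boldsymbol{\beta}$, and the idempotent identity $\hat{\boldsymbol{U}}_0^\top(\boldsymbol{I}_{n_0}-\hat{\boldsymbol{P}}_0)=\hat{\boldsymbol{U}}_0^\top$, I obtain
\begin{equation*}
\sqrt{n_0}\,(\tilde{\boldsymbol{\beta}}-\boldsymbol{\beta})=\boldsymbol{Z}_n+R_1+R_2+R_3,
\end{equation*}
where $\boldsymbol{Z}_n=n_0^{-1/2}\boldsymbol{\Theta}\boldsymbol{U}_0^\top\boldsymbol{\mathcal{E}}^{(0)}$ is the leading stochastic term, the Lasso-bias remainder is $R_1=\sqrt{n_0}\,(\boldsymbol{I}_p-\hat{\boldsymbol{\Theta}}\hat{\boldsymbol{U}}_0^\top\hat{\boldsymbol{U}}_0/n_0)(\hat{\boldsymbol{\beta}}-\boldsymbol{\beta})$, the factor-contamination term is $R_2=n_0^{-1/2}\hat{\boldsymbol{\Theta}}\hat{\boldsymbol{U}}_0^\top\boldsymbol{F}_0\boldsymbol{\varphi}^{(0)}$, and the plug-in remainder is $R_3=n_0^{-1/2}(\hat{\boldsymbol{\Theta}}\hat{\boldsymbol{U}}_0^\top-\boldsymbol{\Theta}\boldsymbol{U}_0^\top)\boldsymbol{\mathcal{E}}^{(0)}$.

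Second, I control each remainder in $\ell_\infty$-norm. For $R_1$, split $\boldsymbol{I}_p-\hat{\boldsymbol{\Theta}}\hat{\boldsymbol{U}}_0^\top\hat{\boldsymbol{U}}_0/n_0=(\boldsymbol{I}_p-\hat{\boldsymbol{\Theta}}\boldsymbol{\Sigma}_{\boldsymbol{u}}^{(0)})+\hat{\boldsymbol{\Theta}}(\boldsymbol{\Sigma}_{\boldsymbol{u}}^{(0)}-\hat{\boldsymbol{U}}_0^\top\hat{\boldsymbol{U}}_0/n_0)$; H\"older's inequality together with Theorem \ref{thm1}, Assumption \ref{assum5}, and a max-norm bound on $\hat{\boldsymbol{U}}_0^\top\hat{\boldsymbol{U}}_0/n_0-\boldsymbol{\Sigma}_{\boldsymbol{u}}^{(0)}$ derived from Lemma \ref{lemma1} yields $\|R_1\|_\infty=o_P(1/\sqrt{\log p})$. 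For $R_3$, inserting $n_0^{-1/2}\boldsymbol{\Theta}\hat{\boldsymbol{U}}_0^\top\boldsymbol{\mathcal{E}}^{(0)}$ as an intermediate splits the term into a piece controlled by $\|\hat{\boldsymbol{\Theta}}-\boldsymbol{\Theta}\|_\infty=O_P(\Delta_\infty)$ combined with sub-exponential Bernstein concentration of $n_0^{-1/2}\hat{\boldsymbol{U}}_0^\top\boldsymbol{\mathcal{E}}^{(0)}$, and a second piece handled via Lemma \ref{lemma1} and the sub-Gaussianity of Assumptions \ref{assum1}--\ref{assum2}. Simultaneously, the summands $\boldsymbol{\Theta}\boldsymbol{u}_i^{(0)}\mathcal{E}_i^{(0)}$ forming $\boldsymbol{Z}_n$ are i.i.d., centered, and sub-exponential, with covariance $\sigma_0^2\boldsymbol{\Theta}\boldsymbol{\Sigma}_{\boldsymbol{u}}^{(0)}\boldsymbol{\Theta}=\sigma_0^2\boldsymbol{\Theta}$ since $\boldsymbol{\Theta}=(\boldsymbol{\Sigma}_{\boldsymbol{u}}^{(0)})^{-1}$, so a high-dimensional Gaussian approximation on max-rectangles gives $\sup_t|\mathbb{P}(\|\boldsymbol{Z}_n\|_\infty\le t)-\mathbb{P}(\|\boldsymbol{Q}\|_\infty\le t)|=o(1)$ under the logarithmic scaling implicit in Assumption \ref{assum5}. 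Combining the remainder bounds with Nazarov's anti-concentration for the max-norm of $\boldsymbol{Q}$ upgrades this approximation to $\sqrt{n_0}(\tilde{\boldsymbol{\beta}}-\boldsymbol{\beta})$ and proves the claim.

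The main obstacle will be $R_2$: although $\hat{\boldsymbol{U}}_0^\top\hat{\boldsymbol{F}}_0=\boldsymbol{0}$ by construction, $\hat{\boldsymbol{U}}_0^\top\boldsymbol{F}_0$ does not vanish, so the factor-estimation error leaks directly into the inferential statistic. I will exploit the identity $\hat{\boldsymbol{U}}_0^\top\boldsymbol{F}_0=-\hat{\boldsymbol{U}}_0^\top(\hat{\boldsymbol{F}}_0-\boldsymbol{F}_0\boldsymbol{H}_0^\top)(\boldsymbol{H}_0^\top)^{-1}$, apply the Frobenius-norm bound from Lemma \ref{lemma1} to the rotation error $\hat{\boldsymbol{F}}_0-\boldsymbol{F}_0\boldsymbol{H}_0^\top$, and combine with a uniform column-wise bound on $\hat{\boldsymbol{U}}_0$ to produce a max-norm bound of order $\|\hat{\boldsymbol{\Theta}}\|_\infty\,\mathcal{V}_{n_0,p}\,\|\boldsymbol{\varphi}^{(0)}\|_2\sqrt{\log p/n_0}$ up to logarithmic factors. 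The second display in Assumption \ref{assum5} is tailored precisely to render this quantity $o_P(1/\sqrt{\log p})$.
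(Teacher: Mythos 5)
Your overall architecture matches the paper's proof exactly: the same debiasing decomposition into a leading linear term $n_0^{-1/2}\boldsymbol{\Theta}\boldsymbol{U}_0^\top\boldsymbol{\mathcal{E}}^{(0)}$ plus the bias term $(\boldsymbol{I}_p-\hat{\boldsymbol{\Theta}}\hat{\boldsymbol{\Sigma}}_{\boldsymbol{u}})(\hat{\boldsymbol{\beta}}-\boldsymbol{\beta})$, the factor-leakage term, and the plug-in term; the same treatment of $R_1$ via H\"older with $\Delta_1\mathcal{R}_1$ and of $R_3$ via $\Delta_\infty$ plus Bernstein; and the same endgame of a high-dimensional CLT for the sub-exponential summands $\boldsymbol{\Theta}\boldsymbol{u}_i^{(0)}\mathcal{E}_i^{(0)}$ (covariance $\sigma^2\boldsymbol{\Theta}$), Nazarov anti-concentration, and a perturbation lemma. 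However, there is a genuine gap in your handling of $R_2$, which you correctly identify as the main obstacle. Your proposed route --- write $\hat{\boldsymbol{U}}_0^\top\boldsymbol{F}_0=-\hat{\boldsymbol{U}}_0^\top(\hat{\boldsymbol{F}}_0-\boldsymbol{F}_0\boldsymbol{H}_0^\top)(\boldsymbol{H}_0^\top)^{-1}$ and apply Cauchy--Schwarz with the Frobenius bound $\|\hat{\boldsymbol{F}}_0-\boldsymbol{F}_0\boldsymbol{H}_0^\top\|_{\mathbb{F}}=O_P(\sqrt{n_0/p}+n_0^{-1/2})$ and a column-norm bound $\|\hat{\boldsymbol{U}}_{0,\cdot j}\|_2=O_P(\sqrt{n_0})$ --- yields only $\|\hat{\boldsymbol{U}}_0^\top\boldsymbol{F}_0\boldsymbol{\phi}\|_\infty=O_P(n_0/\sqrt{p}+1)$ per unit $\boldsymbol{\phi}$. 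In the regime $n_0\asymp p$ permitted by Assumption \ref{assum1}(b) this is $O_P(\sqrt{p})$, whereas the rate actually needed (and used by the paper) is $\mathcal{V}_{n_0,p}\asymp\sqrt{\log p}$ from Lemma \ref{lemmaA6}; after dividing by $\sqrt{n_0}$ your bound leaves $R_2$ of order $\|\boldsymbol{\Theta}\|_\infty\|\boldsymbol{\varphi}^{(0)}\|_2\sqrt{n_0/p}$, which is not $o_P(1/\sqrt{\log p})$ and is not controlled by Assumption \ref{assum5}. The sharp entrywise estimate $\|\hat{\boldsymbol{U}}_0^\top\boldsymbol{F}_0\boldsymbol{\phi}\|_\infty=O_P(\mathcal{V}_{n_0,p})$ exploits cancellation in the sums $\sum_i\hat{u}_{ij}^{(0)}(\hat{\boldsymbol{f}}_i^{(0)}-\boldsymbol{H}_0\boldsymbol{f}_i^{(0)})$ through a finer expansion of the factor estimation error (this is the content of Lemma C.3 of \cite{fan2024latent}, imported here as Lemma \ref{lemmaA6}); it cannot be recovered by naive Cauchy--Schwarz, so you must either cite that lemma or reproduce its argument.

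A secondary, quantitative issue: even the rate you state for $R_2$, namely $\|\hat{\boldsymbol{\Theta}}\|_\infty\mathcal{V}_{n_0,p}\|\boldsymbol{\varphi}^{(0)}\|_2\sqrt{\log p/n_0}$, carries an extra $\sqrt{\log p}$ relative to the correct bound $\|\boldsymbol{\Theta}\|_\infty\mathcal{V}_{n_0,p}\|\boldsymbol{\varphi}^{(0)}\|_2/\sqrt{n_0}$. Assumption \ref{assum5} guarantees that the latter is $o(1/\sqrt{\log p})$ (since it forces $\mathcal{V}_{n_0,p}\|\boldsymbol{\varphi}^{(0)}\|_2\|\boldsymbol{\Theta}\|_\infty\sqrt{\log p/n_0}\to0$), but it only makes your stated quantity $o(1)$, which is not enough for the Nazarov step. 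The looseness is harmless once the correct bound from Lemma \ref{lemmaA6} is used, but as written your claim that the assumption is ``tailored precisely'' to your bound is off by one factor of $\sqrt{\log p}$.
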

The proof of Theorem~\ref{thm3} is provided in Appendix \ref{secA3}. From Theorem \ref{thm3}, it is natural to attain the $(1-\alpha)$ quantile of $\boldsymbol{Q}$, denoted by $c_{1-\alpha}$, and then construct the simultaneous confidence intervals. However, when $p$ is large, it is still unclear whether the sup-norm of $\boldsymbol{Q}$ can be reliably estimated by sampling from the $p$-dimensional Gaussian distribution with an estimated covariance matrix. Hence, we adopt the multiplier bootstrap procedure. Let $e_i\sim \mathbb{N}(0,1),i=1,\cdots,n_0$ be independent of the data $\mathcal{D}=\{(\boldsymbol{x}_i^{(0)},y_i^{(0)},\boldsymbol{f}_i^{(0)},\boldsymbol{u}_i^{(0)},\mathcal{E}_i^{(0)})\}_{i=1}^{n_0}$, and consider $$\hat{\boldsymbol{Q}}^e=\hat{\sigma}\hat{\boldsymbol{\Theta}}\frac{1}{\sqrt{n_0}}\sum_{i=1}^{n_0}e_i\hat{\boldsymbol{u}}_{i}^{(0)}.$$ Denote the $\alpha$-quantile of $\|\hat{\boldsymbol{Q}}^e\|_\infty$ by $$\hat{c}(\alpha)=\inf_{t\in\mathbb{R}}\{t\in\mathbb{R}:\mathbb{P}_e(\|\hat{\boldsymbol{Q}}^e\|_\infty\leq t)\geq\alpha\},$$
where $\mathbb{P}_e(\mathcal{H})$ represents the probability of the event $\mathcal{H}$ with respect to $e_1,\cdots,e_n$. The validity of the bootstrap approach is established by the following theorem.
\begin{theorem}\label{thm4}
    Under the conditions of Theorem \ref{thm3}, if $$\log p(\Delta_1\|\boldsymbol{\Theta}\|_\infty+\Delta_{\infty}+\Delta_{\sigma})=o(1),$$
    we have $$\sup_{t\in\mathbb{R}}\left|\mathbb{P}\left(\|\hat{\boldsymbol{Q}}^e\|_\infty\leq t|\mathcal{D}\right)-\mathbb{P}\left(\|\boldsymbol{Q}\|_\infty\leq t\right)\right|=o(1).$$
\end{theorem}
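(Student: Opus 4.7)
The plan is to exploit the conditional Gaussianity of $\hat{\boldsymbol{Q}}^{e}$ given $\mathcal{D}$ and reduce the bootstrap validity to a max-norm comparison between two covariance matrices via the Chernozhukov--Chetverikov--Kato (CCK) Gaussian comparison theorem. Since $e_{1},\ldots,e_{n_{0}}\stackrel{\mathrm{iid}}{\sim}\mathbb{N}(0,1)$ and are independent of the data, conditional on $\mathcal{D}$ the vector $\hat{\boldsymbol{Q}}^{e}$ is centered Gaussian with covariance
$$\hat{\boldsymbol{\Sigma}}^{e}:=\hat{\sigma}^{2}\,\hat{\boldsymbol{\Theta}}\,\hat{\boldsymbol{S}}^{(0)}\,\hat{\boldsymbol{\Theta}}^{\top},\qquad \hat{\boldsymbol{S}}^{(0)}:=n_{0}^{-1}\sum_{i=1}^{n_{0}}\hat{\boldsymbol{u}}_{i}^{(0)}\hat{\boldsymbol{u}}_{i}^{(0)\top},$$
while $\boldsymbol{Q}$ in Theorem~\ref{thm3} is centered Gaussian with covariance $\boldsymbol{\Sigma}:=\sigma_{0}^{2}\boldsymbol{\Theta}=\sigma_{0}^{2}\boldsymbol{\Theta}\boldsymbol{\Sigma}_{\boldsymbol{u}}^{(0)}\boldsymbol{\Theta}^{\top}$ (using $\boldsymbol{\Theta}=(\boldsymbol{\Sigma}_{\boldsymbol{u}}^{(0)})^{-1}$ and its symmetry). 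The CCK Gaussian comparison for sup-norms then gives
$$\sup_{t\in\mathbb{R}}\left|\mathbb{P}\left(\|\hat{\boldsymbol{Q}}^{e}\|_{\infty}\le t\mid\mathcal{D}\right)-\mathbb{P}\left(\|\boldsymbol{Q}\|_{\infty}\le t\right)\right|\;\lesssim\;\|\hat{\boldsymbol{\Sigma}}^{e}-\boldsymbol{\Sigma}\|_{\max}^{1/3}\log^{2/3}p,$$
so it suffices to show $\|\hat{\boldsymbol{\Sigma}}^{e}-\boldsymbol{\Sigma}\|_{\max}\log^{2}p=o_{P}(1)$.

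To control this discrepancy I decompose
$$\hat{\boldsymbol{\Sigma}}^{e}-\boldsymbol{\Sigma}=(\hat{\sigma}^{2}-\sigma_{0}^{2})\hat{\boldsymbol{\Theta}}\hat{\boldsymbol{S}}^{(0)}\hat{\boldsymbol{\Theta}}^{\top}+\sigma_{0}^{2}\Bigl\{\hat{\boldsymbol{\Theta}}(\hat{\boldsymbol{S}}^{(0)}-\boldsymbol{\Sigma}_{\boldsymbol{u}}^{(0)})\hat{\boldsymbol{\Theta}}^{\top}+(\hat{\boldsymbol{\Theta}}\boldsymbol{\Sigma}_{\boldsymbol{u}}^{(0)}-\boldsymbol{I}_{p})\hat{\boldsymbol{\Theta}}^{\top}+(\hat{\boldsymbol{\Theta}}-\boldsymbol{\Theta})\Bigr\}$$
and iterate the elementary submultiplicative estimate $\|AB\|_{\max}\le\|A\|_{\infty}\|B\|_{\max}$. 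Assumption~\ref{assum5} together with $\|\hat{\boldsymbol{\Theta}}\|_{\infty}\le\|\boldsymbol{\Theta}\|_{\infty}+\|\hat{\boldsymbol{\Theta}}-\boldsymbol{\Theta}\|_{\infty}=O_{P}(\|\boldsymbol{\Theta}\|_{\infty})$, and the analogous bound on $\|\hat{\boldsymbol{\Theta}}\|_{1}$ by symmetry of $\boldsymbol{\Theta}$, then yields $O_{P}(\Delta_{1}\|\boldsymbol{\Theta}\|_{\infty})$ for the third bracket piece, $O_{P}(\Delta_{\max})\le O_{P}(\Delta_{\infty})$ for the fourth, and $O_{P}(\Delta_{\sigma}\|\boldsymbol{\Theta}\|_{\infty}^{2})$ for the $(\hat{\sigma}^{2}-\sigma_{0}^{2})$ factor, all of which are absorbed by the hypothesis $\log p\,(\Delta_{1}\|\boldsymbol{\Theta}\|_{\infty}+\Delta_{\infty}+\Delta_{\sigma})=o(1)$.

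The remaining piece involves $\|\hat{\boldsymbol{S}}^{(0)}-\boldsymbol{\Sigma}_{\boldsymbol{u}}^{(0)}\|_{\max}$. Setting $\boldsymbol{S}^{(0)}:=n_{0}^{-1}\sum_{i}\boldsymbol{u}_{i}^{(0)}\boldsymbol{u}_{i}^{(0)\top}$, I split this error as $\|\hat{\boldsymbol{S}}^{(0)}-\boldsymbol{S}^{(0)}\|_{\max}+\|\boldsymbol{S}^{(0)}-\boldsymbol{\Sigma}_{\boldsymbol{u}}^{(0)}\|_{\max}$. The second term is $O_{P}(\sqrt{\log p/n_{0}})$ by standard sub-exponential (product of sub-Gaussians) concentration under Assumption~\ref{assum1}(a). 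For the first, I expand $\hat{u}_{ij}\hat{u}_{i\ell}-u_{ij}u_{i\ell}$ into $u_{ij}(\hat{u}_{i\ell}-u_{i\ell})+\hat{u}_{i\ell}(\hat{u}_{ij}-u_{ij})$ and apply Cauchy--Schwarz together with Lemma~\ref{lemma1}'s estimate $\max_{j}\sum_{i}(\hat{u}_{ij}^{(0)}-u_{ij}^{(0)})^{2}=O_{P}(\log p+n_{0}/p)$, which delivers a bound of order $O_{P}(\sqrt{\log p/n_{0}}+\sqrt{1/p})$. After multiplying by $\|\hat{\boldsymbol{\Theta}}\|_{\infty}\|\hat{\boldsymbol{\Theta}}\|_{1}=O_{P}(\|\boldsymbol{\Theta}\|_{\infty}^{2})$, this contribution is strictly dominated by the conditions already imposed in Theorem~\ref{thm3} (in particular $(\mathcal{V}_{n_{0},p}\|\boldsymbol{\varphi}^{(0)}\|_{2}+\sqrt{\log p})\|\boldsymbol{\Theta}\|_{\infty}\sqrt{\log p/n_{0}}\to0$) and introduces no new smallness requirement.

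The main obstacle is the bookkeeping: one must carefully respect the distinction between $\|\cdot\|_{\max}$, $\|\cdot\|_{\infty}$, and $\|\cdot\|_{1}$ across several successive matrix products, and verify that the factor-estimation residual entering $\hat{\boldsymbol{S}}^{(0)}-\boldsymbol{S}^{(0)}$ is indeed dominated by Theorem~\ref{thm3}'s conditions so that the only genuinely new condition one needs is precisely the stated hypothesis of Theorem~\ref{thm4}. Once this is established, combining the decomposition above with the CCK bound yields $o_{P}(1)$ uniformly in $t$, which is the conclusion.
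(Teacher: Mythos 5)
Your overall strategy coincides with the paper's: conditional on $\mathcal{D}$ the bootstrap vector $\hat{\boldsymbol{Q}}^{e}$ is exactly Gaussian with covariance $\hat{\sigma}^{2}\hat{\boldsymbol{\Theta}}\hat{\boldsymbol{\Sigma}}_{\boldsymbol{u}}^{(0)}\hat{\boldsymbol{\Theta}}$, so the problem reduces to a Gaussian-to-Gaussian comparison controlled by the entrywise discrepancy of the two covariance matrices, which both you and the paper bound by $O_{P}(\Delta_{1}\|\boldsymbol{\Theta}\|_{\infty}+\Delta_{\infty}+\Delta_{\sigma})$ via essentially the same telescoping decomposition and the $\|\boldsymbol{A}\boldsymbol{B}\|_{\max}\le\|\boldsymbol{A}\|_{\infty}\|\boldsymbol{B}\|_{\max}$ estimate. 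However, there is a genuine gap in the final step: you invoke the classical Chernozhukov--Chetverikov--Kato comparison bound $\sup_{t}|\cdot|\lesssim\Delta^{1/3}\log^{2/3}p$ with $\Delta=\|\hat{\boldsymbol{\Sigma}}^{e}-\boldsymbol{\Sigma}\|_{\max}$, which forces you to require $\Delta\log^{2}p=o_{P}(1)$. The theorem's hypothesis only supplies $\Delta\log p=o_{P}(1)$, and $\Delta\log p=o(1)$ does not imply $\Delta\log^{2}p=o(1)$; indeed you state the $\log^{2}p$ requirement and then claim it is ``absorbed'' by the $\log p$ hypothesis, which is exactly where the argument breaks. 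The paper closes this gap by using the nearly optimal Gaussian comparison inequality (Lemma 2.1 of \cite{chernozhukov2023nearly}), which under the non-degeneracy condition $\lambda_{\min}(\sigma^{2}\boldsymbol{\Theta})\ge\sigma^{2}\iota>0$ (guaranteed by Assumption \ref{assum1}(d)) yields a bound of order $\Delta\log p$, matching the stated hypothesis. To repair your proof you must either switch to that sharper comparison lemma and verify the eigenvalue lower bound, or strengthen the hypothesis of the theorem to $\log^{2}p(\Delta_{1}\|\boldsymbol{\Theta}\|_{\infty}+\Delta_{\infty}+\Delta_{\sigma})=o(1)$.

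A secondary, smaller issue: because you interpret Assumption \ref{assum5} with the population covariance $\boldsymbol{\Sigma}_{\boldsymbol{u}}^{(0)}$, your decomposition produces the extra term $\hat{\boldsymbol{\Theta}}(\hat{\boldsymbol{S}}^{(0)}-\boldsymbol{\Sigma}_{\boldsymbol{u}}^{(0)})\hat{\boldsymbol{\Theta}}^{\top}$, which after the $\|\cdot\|_{\infty}$--$\|\cdot\|_{1}$ bookkeeping contributes a factor $\|\boldsymbol{\Theta}\|_{\infty}^{2}\sqrt{\log p/n_{0}}$ (up to logs). Your claim that this is ``strictly dominated'' by the condition $(\mathcal{V}_{n_{0},p}\|\boldsymbol{\varphi}^{(0)}\|_{2}+\sqrt{\log p})\|\boldsymbol{\Theta}\|_{\infty}\sqrt{\log p/n_{0}}\to0$ of Theorem \ref{thm3} is not automatic, since the latter is linear in $\|\boldsymbol{\Theta}\|_{\infty}$ while your term is quadratic; it holds only under an additional restriction such as $\|\boldsymbol{\Theta}\|_{\infty}\sqrt{\log p}=O(1)$. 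The paper sidesteps this by treating $\Delta_{1}$ as a bound on $\|\boldsymbol{I}_{p}-\hat{\boldsymbol{\Theta}}\hat{\boldsymbol{\Sigma}}_{\boldsymbol{u}}^{(0)}\|_{\max}$ with the \emph{empirical} covariance, so that no separate $\hat{\boldsymbol{S}}^{(0)}-\boldsymbol{\Sigma}_{\boldsymbol{u}}^{(0)}$ term appears. Relatedly, your bound $\|\hat{\boldsymbol{\Theta}}\|_{1}=O_{P}(\|\boldsymbol{\Theta}\|_{\infty})$ ``by symmetry'' is only immediate if $\hat{\boldsymbol{\Theta}}$ is symmetrized; Assumption \ref{assum5} controls $\|\hat{\boldsymbol{\Theta}}-\boldsymbol{\Theta}\|_{\infty}$ but not $\|\hat{\boldsymbol{\Theta}}-\boldsymbol{\Theta}\|_{1}$.
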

The proof of Theorem~\ref{thm4} can be found in Appendix \ref{secA4}. Following Theorem \ref{thm3} and Theorem \ref{thm4}, our simultaneous confidence interval for $\boldsymbol{\beta}$ is given by $$\left[\tilde{\beta}_i-n_0^{-1/2}\hat{c}_{\mathcal{G}}(1-\alpha),\tilde{\beta}_i+n_0^{-1/2}\hat{c}_{\mathcal{G}}(1-\alpha)\right],i\in[p].$$

In contrast to the target-only inferential scheme of \cite{fan2024latent}, our procedure is more reliable because the estimator $\hat{\boldsymbol{\beta}}$ obtained from Algorithm \ref{algo1} provides a more accurate estimate of $\boldsymbol{\beta}$. In particular, Assumption \ref{assum5} implies that part of the inferential error captured by $\mathcal{R}_1$ can be substantially improved as long as $\mathcal{R}_1$ is smaller than the traditional $\ell_1$-error $s\sqrt{\log p/n_0}$ in \cite{fan2024latent}.

Now we extend the previous results for any $\mathcal{G}\subseteq[p]$. The following corollary presents the validity of constructing simultaneous confidence intervals for $\boldsymbol{\beta}_{\mathcal{G}}$ via bootstrap procedure.

\begin{corollary}\label{cor1}
   Under the conditions of Theorem \ref{thm4}, there exists a $p$-dimensional Gaussian random vector $\boldsymbol{Q}$ with mean $\boldsymbol{0}_p$ and variance $\sigma^2\boldsymbol{\Theta}$ such that 
    \begin{align*}
        &\sup_{t\in\mathbb{R}}\left|\mathbb{P}\left(\|\sqrt{n}(\tilde{\boldsymbol{\beta}}-\boldsymbol{\beta})\|_{\mathcal{G}}\leq t\right)-\mathbb{P}(\|\boldsymbol{Q}\|_{\mathcal{G}}\leq t)\right|=o(1),\\
        &\sup_{t\in\mathbb{R}}\left|\mathbb{P}\left(\|\hat{\boldsymbol{Q}}^e\|_{\mathcal{G}}\leq t|\mathcal{D}\right)-\mathbb{P}\left(\|\boldsymbol{Q}\|_{\mathcal{G}}\leq t\right)\right|=o(1).
    \end{align*}
\end{corollary}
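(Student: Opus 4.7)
The plan is to follow the same route as the proofs of Theorems~\ref{thm3} and~\ref{thm4}, exploiting the elementary observation that for any $p$-vector $\boldsymbol{v}$ we have $\|\boldsymbol{v}\|_{\mathcal{G}}=\|\boldsymbol{v}_{\mathcal{G}}\|_\infty\leq\|\boldsymbol{v}\|_\infty$, so restricting to the index set $\mathcal{G}\subset[p]$ effectively reduces the problem to a maximum-norm statement on a coordinate subspace of size $|\mathcal{G}|\leq p$. Consequently, every remainder bound controlled in $\ell_\infty$ in the proofs of Theorems~\ref{thm3}–\ref{thm4} transfers immediately to the $\|\cdot\|_{\mathcal{G}}$ norm, and every Gaussian/bootstrap approximation over max-type rectangles in $\mathbb{R}^p$ specializes to one over max-type rectangles on the sub-vector indexed by $\mathcal{G}$.

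For the first statement, I would reuse the debiasing decomposition from Theorem~\ref{thm3},
\[
\sqrt{n_0}(\tilde{\boldsymbol{\beta}}-\boldsymbol{\beta})=\frac{1}{\sqrt{n_0}}\boldsymbol{\Theta}\,\hat{\boldsymbol{U}}_0^{\top}\boldsymbol{\mathcal{E}}^{(0)}+\boldsymbol{R}_n,
\]
where $\boldsymbol{R}_n$ aggregates the KKT residual $(\hat{\boldsymbol{\Theta}}\hat{\boldsymbol{U}}_0^\top\hat{\boldsymbol{U}}_0/n_0-\boldsymbol{I}_p)(\hat{\boldsymbol{\beta}}-\boldsymbol{\beta})$, the $(\hat{\boldsymbol{\Theta}}-\boldsymbol{\Theta})$-contribution to the linear part, and the factor-estimation error coming from replacing $\boldsymbol{U}_0$ by $\hat{\boldsymbol{U}}_0$ through Lemma~\ref{lemma1}. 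Theorem~\ref{thm3} already shows $\sqrt{\log p}\,\|\boldsymbol{R}_n\|_\infty=o_P(1)$ under Assumption~\ref{assum5}; since $\|\boldsymbol{R}_n\|_{\mathcal{G}}\leq\|\boldsymbol{R}_n\|_\infty$, the same negligibility holds on $\mathcal{G}$. To the leading linear term I would apply a high-dimensional CLT over rectangles (Chernozhukov–Chetverikov–Kato type) to the independent sub-exponential summands $\mathcal{E}_i^{(0)}\boldsymbol{\Theta}\boldsymbol{u}_i^{(0)}$, whose Kolmogorov error depends on $\log|\mathcal{G}|\leq\log p$ and hence is $o(1)$ under the rate conditions of Theorem~\ref{thm3}. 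Combining the CLT with Gaussian anti-concentration on the sub-vector $\boldsymbol{Q}_{\mathcal{G}}$ absorbs the $o_P(1/\sqrt{\log|\mathcal{G}|})$ remainder, yielding the first Kolmogorov-distance bound.

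For the second statement, the multiplier bootstrap in Theorem~\ref{thm4} proceeds by comparing $\hat{\boldsymbol{Q}}^e$ to an auxiliary Gaussian with covariance $\hat{\sigma}^2\hat{\boldsymbol{\Theta}}(\hat{\boldsymbol{U}}_0^\top\hat{\boldsymbol{U}}_0/n_0)\hat{\boldsymbol{\Theta}}^\top$ and then to $\boldsymbol{Q}$, using the Gaussian-to-Gaussian comparison lemma driven by the max-entry perturbation of the covariance matrices (controlled by $\Delta_1\|\boldsymbol{\Theta}\|_\infty+\Delta_\infty+\Delta_\sigma$). Both of these bounds are expressed in the supremum norm over coordinates and therefore specialize to the sub-matrix indexed by $\mathcal{G}\times\mathcal{G}$ without any deterioration, so the conditional Kolmogorov distance on $\|\hat{\boldsymbol{Q}}^e\|_{\mathcal{G}}$ versus $\|\boldsymbol{Q}\|_{\mathcal{G}}$ is again $o(1)$.

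The main obstacle I expect is not the Gaussian/bootstrap step itself, but the careful bookkeeping to confirm that the quadratic cross-term coming from $\hat{\boldsymbol{U}}_0-\boldsymbol{U}_0\boldsymbol{H}_0^{-1}$ in $\boldsymbol{R}_n$ is uniformly $o_P(1/\sqrt{\log|\mathcal{G}|})$ on $\mathcal{G}$; this is where the factor-estimation rates in Lemma~\ref{lemma1} interact with $\|\boldsymbol{\Theta}\|_\infty$ and with $\mathcal{V}_{n_0,p}\|\boldsymbol{\varphi}^{(0)}\|_2$ through the product $\sqrt{\log p}\,\|\boldsymbol{\Theta}\|_\infty(\mathcal{V}_{n_0,p}\|\boldsymbol{\varphi}^{(0)}\|_2+\sqrt{\log p})/\sqrt{n_0}$, which Assumption~\ref{assum5} is precisely designed to make vanish. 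Once this is in hand, both displays of the corollary follow by the triangle inequality applied to the three approximation steps (linearization, Gaussian approximation, bootstrap consistency).
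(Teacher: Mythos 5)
Your proposal is correct and follows essentially the same route as the paper, whose proof of Corollary \ref{cor1} is simply the observation that the arguments of Theorems \ref{thm3}--\ref{thm4} specialize to the subset $\mathcal{G}$; your elaboration (that $\|\cdot\|_{\mathcal{G}}\leq\|\cdot\|_\infty$ transfers all remainder bounds, and that the rectangle-based Gaussian approximation and comparison restrict to the coordinates in $\mathcal{G}$ without deterioration) is exactly the justification the paper leaves implicit.
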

The proof of Corollary~\ref{cor1} is presented in Appendix \ref{secA5}. From Corollary \ref{cor1}, let the $\alpha$-quantile of $\|\hat{\boldsymbol{Q}}^e\|_{\mathcal{G}}$ by $$\hat{c}_{\mathcal{G}}(\alpha)=\inf_{t\in\mathbb{R}}\{t\in\mathbb{R}:\mathbb{P}_e(\|\hat{\boldsymbol{Q}}^e\|_{\mathcal{G}}\leq t)\geq\alpha\},$$ the simultaneous $\alpha$-level confidence interval for $\boldsymbol{\beta}_{\mathcal{G}}$ can be constructed as 
$$\left[\tilde{\beta}_i-n_0^{-1/2}\hat{c}_{\mathcal{G}}(1-\alpha),\tilde{\beta}_i+n_0^{-1/2}\hat{c}_{\mathcal{G}}(1-\alpha)\right],i\in\mathcal{G}.$$
Moreover, we propose a studentized version $\sqrt{n}(\hat{\boldsymbol{\Theta}}_{j,j})^{-1/2}(\tilde{\boldsymbol{\beta}}-\boldsymbol{\beta})_j$ for $j\in\mathcal{G}$ by leveraging the idea in \cite{zhang2017simultaneous,cai2025statistical}, which can also be considered to attain confidence intervals with varying length. Denote $\boldsymbol{S}=\mbox{diag}(\boldsymbol{\Theta})$ and $\hat{\boldsymbol{S}}=\mbox{diag}(\hat{\boldsymbol{\Theta}})$. Let $$\hat{\boldsymbol{Q}}_{stu}^{e}=\hat{\sigma}\hat{\boldsymbol{S}}^{-1/2}\hat{\boldsymbol{\Theta}}\frac{1}{\sqrt{n_0}}\sum_{i=1}^{n_0}e_i\hat{\boldsymbol{u}}_{i}^{(0)},$$ and denote the $\alpha$-quantile of $\|\hat{\boldsymbol{Q}}_{stu}^{e}\|_{\mathcal{G}}$ by $$\hat{c}_{\mathcal{G},stu}(\alpha)=\inf_{t\in\mathbb{R}}\{t\in\mathbb{R}:\mathbb{P}_e(\|\hat{\boldsymbol{Q}}_{stu}^{e}\|_{\mathcal{G}}\leq t)\geq\alpha\}.$$
The validity of the studentized bootstrap procedure is established in the following theorem.
\begin{theorem}\label{thm5}
    Under the conditions of Theorem \ref{thm4}, further assume that $\Delta_{\infty}\|\boldsymbol{\Theta}\|_{\infty}\sqrt{\log p/n_0}=o(1)$, then there exists a $p$-dimensional Gaussian random vector $\boldsymbol{Q}_{stu}$ with mean $\boldsymbol{0}_p$ and variance $\sigma^2\boldsymbol{S}^{-1/2}\boldsymbol{\Theta}\boldsymbol{S}^{-1/2}$ such that $$\sup_{t\in\mathbb{R}}\left|\mathbb{P}\left(\|\sqrt{n}\hat{\boldsymbol{S}}^{-1/2}(\tilde{\boldsymbol{\beta}}-\boldsymbol{\beta})\|_{\mathcal{G}}\leq t\right)-\mathbb{P}(\|\hat{\boldsymbol{Q}}_{stu}^{e}\|_{\mathcal{G}}\leq t)\right|=o(1),$$
    and
    $$\sup_{t\in\mathbb{R}}\left|\mathbb{P}\left(\|\hat{\boldsymbol{Q}}_{stu}^{e}\|_\infty\leq t\right)-\mathbb{P}(\|\boldsymbol{Q}_{stu}\|_\infty\leq t)\right|=o(1).$$
\end{theorem}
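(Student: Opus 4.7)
The plan is to adapt the Theorem~\ref{thm3}--Theorem~\ref{thm4} machinery by inserting the diagonal studentization $\hat{\boldsymbol{S}}^{-1/2}$ and verifying that the high-dimensional Gaussian approximation and multiplier bootstrap arguments still go through under the additional rate $\Delta_\infty\|\boldsymbol{\Theta}\|_\infty\sqrt{\log p/n_0}=o(1)$. First I would recycle the linearization from the proof of Theorem~\ref{thm3},
\begin{equation*}
\sqrt{n_0}(\tilde{\boldsymbol{\beta}}-\boldsymbol{\beta})=\frac{1}{\sqrt{n_0}}\hat{\boldsymbol{\Theta}}\hat{\boldsymbol{U}}_0^\top\boldsymbol{\mathcal{E}}^{(0)}+R_n,
\end{equation*}
obtained by plugging $\tilde{\boldsymbol{Y}}^{(0)}-\hat{\boldsymbol{U}}_0\hat{\boldsymbol{\beta}}=(\boldsymbol{I}_{n_0}-\hat{\boldsymbol{P}}_0)[\boldsymbol{F}_0\boldsymbol{\varphi}^{(0)}+\boldsymbol{X}^{(0)}(\boldsymbol{\beta}-\hat{\boldsymbol{\beta}})+\boldsymbol{\mathcal{E}}^{(0)}]$ into \eqref{eq3.5} and using $\hat{\boldsymbol{U}}_0^\top(\boldsymbol{I}_{n_0}-\hat{\boldsymbol{P}}_0)=\hat{\boldsymbol{U}}_0^\top$. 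As in Theorem~\ref{thm3}, Assumption~\ref{assum5} gives $\|R_n\|_\infty=o_P(1/\sqrt{\log p})$. Because $\boldsymbol{S}=\mathrm{diag}(\boldsymbol{\Theta})$ has entries bounded above and below (using $\iota\le\lambda_{\min}(\boldsymbol{\Sigma}_{\boldsymbol{u}}^{(0)})$ and $\|\boldsymbol{\Sigma}_{\boldsymbol{u}}^{(0)}\|_1\le1/\iota$ from Assumption~\ref{assum1}(d)), $\hat{\boldsymbol{S}}^{-1/2}$ is well defined with $\|\hat{\boldsymbol{S}}^{-1/2}-\boldsymbol{S}^{-1/2}\|_\infty=O_P(\Delta_{\max})$, so $\hat{\boldsymbol{S}}^{-1/2}R_n$ retains the same max-norm rate.

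Next I would apply the Chernozhukov--Chetverikov--Kato (CCK) high-dimensional Gaussian approximation to the studentized main term $T_n=\hat{\boldsymbol{S}}^{-1/2}\hat{\boldsymbol{\Theta}}\hat{\boldsymbol{U}}_0^\top\boldsymbol{\mathcal{E}}^{(0)}/\sqrt{n_0}$. Conditionally on $(\hat{\boldsymbol{U}}_0,\hat{\boldsymbol{\Theta}},\hat{\boldsymbol{S}})$ the summands are independent mean-zero sub-Gaussian vectors with conditional covariance $\sigma_0^2\hat{\boldsymbol{S}}^{-1/2}\hat{\boldsymbol{\Theta}}\hat{\boldsymbol{\Sigma}}_{\boldsymbol{u}}^{(0)}\hat{\boldsymbol{\Theta}}\hat{\boldsymbol{S}}^{-1/2}$, where $\hat{\boldsymbol{\Sigma}}_{\boldsymbol{u}}^{(0)}=\hat{\boldsymbol{U}}_0^\top\hat{\boldsymbol{U}}_0/n_0$. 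Writing
\begin{equation*}
\hat{\boldsymbol{\Theta}}\hat{\boldsymbol{\Sigma}}_{\boldsymbol{u}}^{(0)}\hat{\boldsymbol{\Theta}}-\boldsymbol{\Theta}=(\hat{\boldsymbol{\Theta}}\hat{\boldsymbol{\Sigma}}_{\boldsymbol{u}}^{(0)}-\boldsymbol{I}_p)\hat{\boldsymbol{\Theta}}+(\hat{\boldsymbol{\Theta}}-\boldsymbol{\Theta}),
\end{equation*}
and invoking $\|\hat{\boldsymbol{\Theta}}\hat{\boldsymbol{\Sigma}}_{\boldsymbol{u}}^{(0)}-\boldsymbol{I}_p\|_{\max}=O_P(\Delta_1)$, $\|\hat{\boldsymbol{\Theta}}\|_\infty\le\|\boldsymbol{\Theta}\|_\infty+O_P(\Delta_\infty)$ from Assumption~\ref{assum5}, together with the newly imposed rate, I would establish
\begin{equation*}
\bigl\|\hat{\boldsymbol{S}}^{-1/2}\hat{\boldsymbol{\Theta}}\hat{\boldsymbol{\Sigma}}_{\boldsymbol{u}}^{(0)}\hat{\boldsymbol{\Theta}}\hat{\boldsymbol{S}}^{-1/2}-\boldsymbol{S}^{-1/2}\boldsymbol{\Theta}\boldsymbol{S}^{-1/2}\bigr\|_{\max}=o_P(\log^{-2}p).
\end{equation*}
Combined with $|\hat{\sigma}^2/\sigma_0^2-1|=O_P(\Delta_\sigma)$ and $\Delta_\sigma\log p=o(1)$, this is sharp enough to invoke the CCK Gaussian comparison inequality on hyperrectangles, producing a Gaussian target $\boldsymbol{Q}_{stu}$ with mean $\boldsymbol{0}_p$ and covariance $\sigma_0^2\boldsymbol{S}^{-1/2}\boldsymbol{\Theta}\boldsymbol{S}^{-1/2}$ whose max-norm distribution approximates $\|T_n\|_{\mathcal{G}}$. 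The CCK anti-concentration bound then absorbs the $o_P(1/\sqrt{\log p})$ remainder $\hat{\boldsymbol{S}}^{-1/2}R_n$, delivering the first displayed claim.

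For the bootstrap side, note that conditional on $\mathcal{D}$ the vector $\hat{\boldsymbol{Q}}_{stu}^e$ is exactly centered Gaussian with covariance $\hat{\sigma}^2\hat{\boldsymbol{S}}^{-1/2}\hat{\boldsymbol{\Theta}}\hat{\boldsymbol{\Sigma}}_{\boldsymbol{u}}^{(0)}\hat{\boldsymbol{\Theta}}\hat{\boldsymbol{S}}^{-1/2}$, which by the previous display is max-norm close to $\sigma_0^2\boldsymbol{S}^{-1/2}\boldsymbol{\Theta}\boldsymbol{S}^{-1/2}$. A second application of the Gaussian comparison lemma therefore gives
\begin{equation*}
\sup_t\bigl|\mathbb{P}(\|\hat{\boldsymbol{Q}}_{stu}^e\|_{\mathcal{G}}\le t\mid\mathcal{D})-\mathbb{P}(\|\boldsymbol{Q}_{stu}\|_{\mathcal{G}}\le t)\bigr|=o_P(1),
\end{equation*}
and integrating out $\mathcal{D}$ yields the second displayed claim; chaining with the first Gaussian approximation closes the loop.

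The hard part will be pushing the sandwiched covariance difference to $o_P(\log^{-2}p)$ in max norm under only the rates supplied by Assumption~\ref{assum5}. The studentization introduces an extra factor of $\|\boldsymbol{\Theta}\|_\infty$ beyond what appears in Theorem~\ref{thm4}: when expanding $\hat{\boldsymbol{\Theta}}\hat{\boldsymbol{\Sigma}}_{\boldsymbol{u}}^{(0)}\hat{\boldsymbol{\Theta}}-\boldsymbol{\Theta}$, the cross term $(\hat{\boldsymbol{\Theta}}-\boldsymbol{\Theta})\hat{\boldsymbol{\Sigma}}_{\boldsymbol{u}}^{(0)}\boldsymbol{\Theta}$ carries precisely that factor, and the new condition $\Delta_\infty\|\boldsymbol{\Theta}\|_\infty\sqrt{\log p/n_0}=o(1)$ is tailored to absorb it. Once that cross term is tamed, the remaining arithmetic is a bookkeeping exercise on top of the Theorem~\ref{thm3}--Theorem~\ref{thm4} proofs.
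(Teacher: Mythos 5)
Your overall architecture (linearize, high-dimensional Gaussian approximation for the leading term, Gaussian comparison for the bootstrap) matches the paper's, but the execution of the Gaussian-approximation step diverges in a way that creates two concrete problems. First, you keep all estimated quantities inside the leading term $T_n=\hat{\boldsymbol{S}}^{-1/2}\hat{\boldsymbol{\Theta}}\hat{\boldsymbol{U}}_0^\top\boldsymbol{\mathcal{E}}^{(0)}/\sqrt{n_0}$ and propose a conditional CCK argument with the data-dependent covariance $\sigma_0^2\hat{\boldsymbol{S}}^{-1/2}\hat{\boldsymbol{\Theta}}\hat{\boldsymbol{\Sigma}}_{\boldsymbol{u}}^{(0)}\hat{\boldsymbol{\Theta}}\hat{\boldsymbol{S}}^{-1/2}$, asserting you will push the covariance discrepancy to $o_P(\log^{-2}p)$ in max norm. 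The assumptions do not deliver that rate: the hypothesis inherited from Theorem \ref{thm4} is $\log p\,(\Delta_1\|\boldsymbol{\Theta}\|_\infty+\Delta_\infty+\Delta_\sigma)=o(1)$, which yields only $o_P(\log^{-1}p)$. The comparison lemma the paper actually uses (Lemma 2.1 of the cited Chernozhukov et al.\ reference) needs only the $o_P(\log^{-1}p)$ rate, so your target is both unattainable and unnecessary; but as written your plan hinges on a bound you cannot establish. The conditional route also obliges you to verify, with high probability, lower bounds on the conditional variances and upper bounds on the conditional sub-exponential norms of the summands $\hat{\boldsymbol{S}}^{-1/2}\hat{\boldsymbol{\Theta}}\hat{\boldsymbol{u}}_i^{(0)}\mathcal{E}_i^{(0)}$, which you do not address; the paper sidesteps all of this by first replacing $\hat{\boldsymbol{S}}^{-1/2}\hat{\boldsymbol{\Theta}}\hat{\boldsymbol{U}}_0^\top$ with $\boldsymbol{S}^{-1/2}\boldsymbol{\Theta}\boldsymbol{U}_0^\top$ in $\ell_\infty$ norm and then invoking the unconditional i.i.d.\ result (Lemma \ref{LemmaA8}) exactly as in Theorem \ref{thm3}.

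Second, you misattribute the role of the new hypothesis $\Delta_\infty\|\boldsymbol{\Theta}\|_\infty\sqrt{\log p/n_0}=o(1)$ to a cross term $(\hat{\boldsymbol{\Theta}}-\boldsymbol{\Theta})\hat{\boldsymbol{\Sigma}}_{\boldsymbol{u}}^{(0)}\boldsymbol{\Theta}$ in the covariance expansion. That term carries no $\sqrt{\log p/n_0}$ factor, so the condition is not shaped for it. In the paper the condition is consumed precisely in the linearization: swapping $\hat{\boldsymbol{S}}^{-1/2}$ for $\boldsymbol{S}^{-1/2}$ in front of the leading score produces the term $\|\hat{\boldsymbol{S}}^{-1/2}-\boldsymbol{S}^{-1/2}\|_\infty\,\|\boldsymbol{\Theta}\|_\infty\,\|\boldsymbol{U}_0^\top\boldsymbol{\mathcal{E}}^{(0)}\|_{\max}/\sqrt{n_0}=O_P(\Delta_\infty\|\boldsymbol{\Theta}\|_\infty\sqrt{\log p/n_0})$, and the new hypothesis is exactly what makes this $o_P(1)$. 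Your remark that ``$\hat{\boldsymbol{S}}^{-1/2}R_n$ retains the same max-norm rate'' handles only the remainder, not this swap on the leading term, which is the one step that genuinely forces the additional assumption. Repairing the proof means either adopting the paper's swap-then-i.i.d.-CLT order of operations, or, if you insist on the conditional route, correctly locating where the extra hypothesis enters and weakening your covariance target to $o_P(\log^{-1}p)$.
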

The proof of Theorem~\ref{thm5} is given in Appendix \ref{secA6}. Following Theorem \ref{thm5}, the simultaneous $\alpha$-level confidence interval with varying length for $\boldsymbol{\beta}_{\mathcal{G}}$ can be constructed as follows:
$$\left[\tilde{\beta}_i-n_0^{-1/2}(\hat{\boldsymbol{\Theta}}_{i,i})^{1/2}\hat{c}_{\mathcal{G},stu}(1-\alpha),\tilde{\beta}_i+n_0^{-1/2}(\hat{\boldsymbol{\Theta}}_{i,i})^{1/2}\hat{c}_{\mathcal{G},stu}(1-\alpha)\right],i\in\mathcal{G}.$$
The complete steps for obtaining $\tilde{\boldsymbol{\beta}}$ and constructing simultaneous confidence intervals with varying lengths are summarized in Algorithm \ref{algo3}. The corresponding procedure for the non-studentized version follows an analogous structure and is therefore omitted for brevity.

\begin{algorithm}[!ht]
\caption{Simultaneous confidence interval construction}
\label{algo3}
\KwIn{target data $(\boldsymbol{X}^{(0)}, \boldsymbol{Y}^{(0)})$, $\mathcal{G}$, estimations $\hat{\boldsymbol{\Theta}}$ and $\hat{\sigma}$.}
\KwOut{the simultaneous confidence interval for $\boldsymbol{\beta}_{\mathcal{G}}$}
Run Algorithm \ref{algo2} to attain $\hat{\boldsymbol{\beta}}$, estimate $\hat{\boldsymbol{\Theta}}$ using CLIME or nodewise regression, and calculate $\tilde{\boldsymbol{\beta}}$ via \eqref{eq4.1}.\\
\For{$\ell = 1$ \KwTo $B$}{
	Draw standard normal random variables $e_{\ell,i},i=1,\cdots,n_0$.\\
    Calculate $\hat{\boldsymbol{Q}}_{stu,\ell}^{e}=\hat{\sigma}\hat{\boldsymbol{S}}^{-1/2}\hat{\boldsymbol{\Theta}}\sum_{i=1}^{n_0}e_{\ell,i}\hat{\boldsymbol{u}}_i^{(0)}$
}
\underline{\textbf{Simultaneous confidence intervals construction}}: Find the $(1-\alpha)$-quantile of $\{\|\hat{\boldsymbol{Q}}_{stu,\ell}^{e}\|_{\mathcal{G}}\}_{\ell=1}^{B}$, represented as $\hat{c}_{\mathcal{G},stu}(1-\alpha)$. The simultaneous $\alpha$-level confidence interval for $\beta_i$ is $\mathcal{I}_i=\left[\tilde{\beta}_i-n_0^{-1/2}(\hat{\boldsymbol{\Theta}}_{i,i})^{1/2}\hat{c}_{\mathcal{G},stu}(1-\alpha),\tilde{\beta}_i+n_0^{-1/2}(\hat{\boldsymbol{\Theta}}_{i,i})^{1/2}\hat{c}_{\mathcal{G},stu}(1-\alpha)\right],i\in\mathcal{G}.$
Output the simultaneous confidence intervals $\{\mathcal{I}_i\}_{i\in\mathcal{G}}$. \\
\end{algorithm}

\section{Numerical Studies}\label{sec5}
In this section, we provide numerical experiments to illustrate the finite-sample performance of our proposed Trans-FARM estimator under the FARM framework. The code is made available at \href{https://github.com/BoFuxjtu/Trans-FARM.git}{https://github.com/BoFuxjtu/Trans-FARM.git}.
\subsection{Simulations of Estimation}\label{sec5.1}
In this section, we study the estimation performance of different methods under various settings of $h$. The methods include Only-FARM (Algorithm \ref{algo1} with $\mathcal{A}$ being $\varnothing$; \cite{fan2024latent}), Trans-FARM (Algorithm \ref{algo2}), Oracle-Trans-FARM (Algorithm \ref{algo1} with $\mathcal{A}=\mathcal{A}_\eta$), Pooled-Trans-FARM (Algorithm \ref{algo1} with $\mathcal{A}$ being all sources), Only-Lasso (Lasso on target data), Trans-Lasso \citep[Algorithm 2]{tian2023transfer}, Oracle-Trans-Lasso (Algorithm 1 in \cite{tian2023transfer} with $\mathcal{A}=\mathcal{A}_\eta$), Pooled-Trans-Lasso (Algorithm 1 in \cite{tian2023transfer} with $\mathcal{A}$ being all sources). The $\epsilon_0\hat{\sigma}^2$ in Trans-FARM is set as $2\hat{L}_0^{(0)}$ (see Algorithm \ref{algo2}), similar to the setting in \cite{tian2023transfer}. All experiments are conducted in R.

We generate the data from the FARM model as follows. We let $r_0=\ldots=r_K=2$, and the design matrix is generated by 
$$\boldsymbol{X}^{(k)}=\boldsymbol{F}_k\boldsymbol{B}_k^\top+\boldsymbol{U}_k\in\mathbb{R}^{n_k\times p}$$
with entries of $\boldsymbol{B}_k\in\mathbb{R}^{p\times r_k}$ generated from $\mbox{Unif}(-1,1)$, entries of $\boldsymbol{F}_k\in\mathbb{R}^{n_k\times r_k}$ generated from $\mathbb{N}(0,1)$, and $\boldsymbol{U}_k=(\boldsymbol{u}_1^{(k)\top},\ldots,\boldsymbol{u}_{n_k}^{(k)\top})^\top\in\mathbb{R}^{n_k\times p}$ with each row being generated from $\mathbb{N}(\boldsymbol{0}_p,\boldsymbol{\Sigma}_k)$ or multivariate t distribution $t_{10}(\boldsymbol{0},\boldsymbol{\Sigma}_k)$, where $\boldsymbol{\Sigma}_0$ is a Toeplitz matrix with $\Sigma_{i,j}=0.5^{|i-j|}$ and $\boldsymbol{\Sigma}_k=\boldsymbol{\Sigma}_0+\boldsymbol{\epsilon}\boldsymbol{\epsilon}^\top$ with $\boldsymbol{\epsilon}\sim\mathbb{N}(\boldsymbol{0}_p,0.2^2\boldsymbol{I}_p)$ for $k\in\{1,\ldots,K\}$. The response vector $\boldsymbol{Y}$ follows $\boldsymbol{Y}^{(k)}=\boldsymbol{U}_k\boldsymbol{w}^{(k)}+\boldsymbol{F}_k\boldsymbol{\gamma}^{(k)}+\boldsymbol{E}^{(k)}$ with each entry of $\boldsymbol{E}^{(k)}\in\mathbb{R}^{n_k}$ being generated from the standard normal distribution $\mathbb{N}(0,1)$ or $t_5$ distribution. The regression coefficient is set as $\boldsymbol{w}^{(0)}=(0.5\cdot\boldsymbol{1}_s,\boldsymbol{0}_{p-s})^\top$, and $\boldsymbol{w}^{(k)}=\boldsymbol{w}^{(0)}+(\eta/p)\mathcal{R}_1^{(k)}$ for $k\in\mathcal{A}_\eta$ and $\boldsymbol{w}^{(k)}=\boldsymbol{w}^{(0)}+(2\eta/p)\boldsymbol{\mathcal{R}}_1^{(k)}$ for $k\in\mathcal{A}_\eta^c$, where $\boldsymbol{\mathcal{R}}_1^{(k)}$ is a $p$-dimensional vector with entries being independent Rademacher variables and $\boldsymbol{\mathcal{R}}_1^{(k)}$ and $\boldsymbol{\mathcal{R}}_1^{(k')}$ are mutually independent for any $k\neq k'\neq0$. The impact of factors $\boldsymbol{\gamma}^{(k)}$ is defined as $\boldsymbol{\gamma}^{(0)}=(0.5,0.5)^\top$, and $\boldsymbol{\gamma}^{(k)}=\boldsymbol{\gamma}^{(0)}+0.1\boldsymbol{\mathcal{R}}_2^{(k)}$ for $k\in\mathcal{A}_\eta$ and $\boldsymbol{\gamma}^{(k)}=\boldsymbol{\gamma}^{(0)}+0.5\boldsymbol{\mathcal{R}}_2^{(k)}$ for $k\in\mathcal{A}_\eta^c$, where $\{\boldsymbol{\mathcal{R}}_2^{(k)}\}_{k=1}^{K}$ are i.i.d. $2$-dimensional Rademacher vectors similar to the previous definition. 
\begin{figure}[htbp]
\centering
\subfigure{\includegraphics[width=0.495\textwidth]{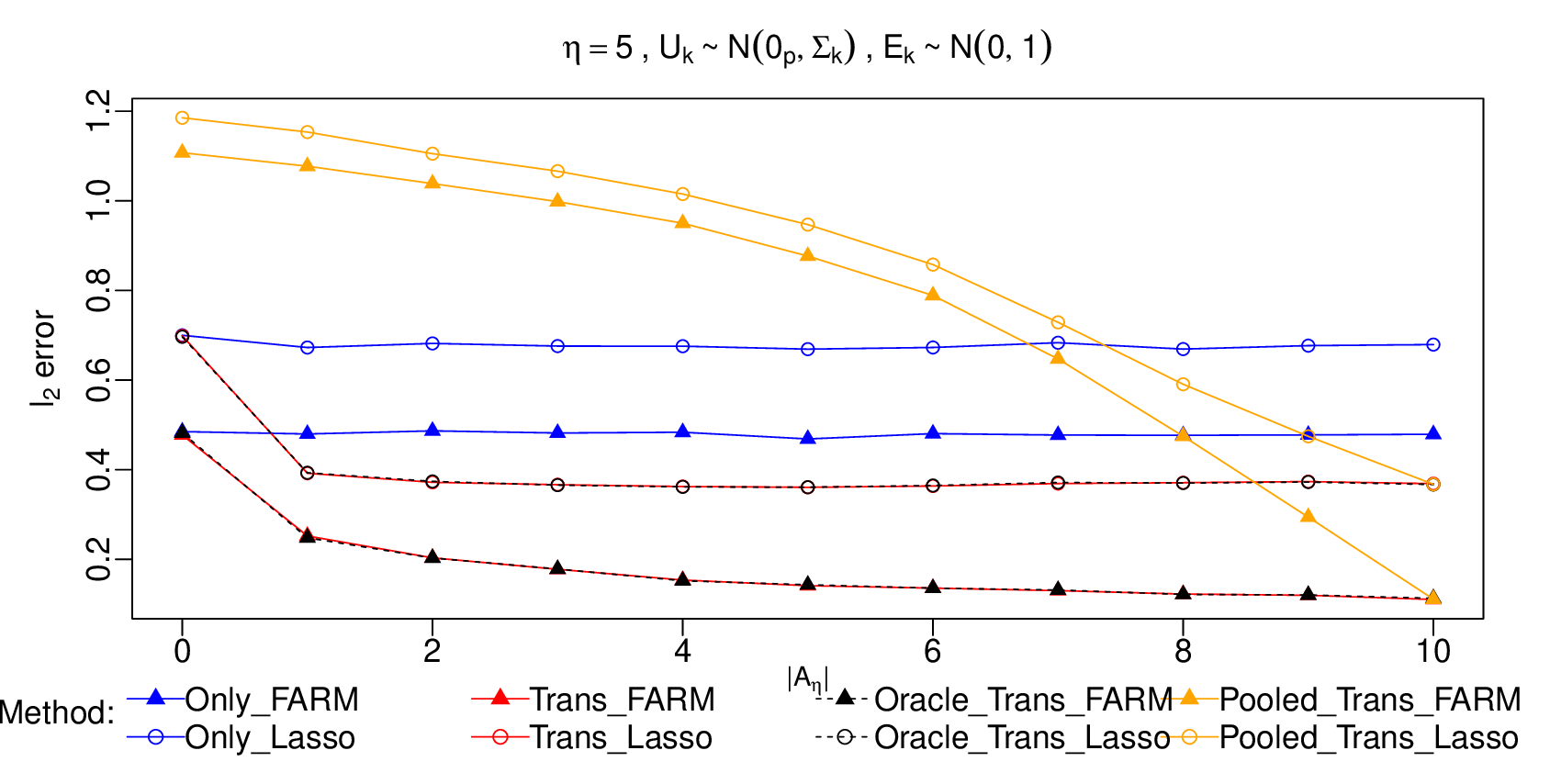}}
    \hfill
\subfigure{\includegraphics[width=0.495\textwidth]{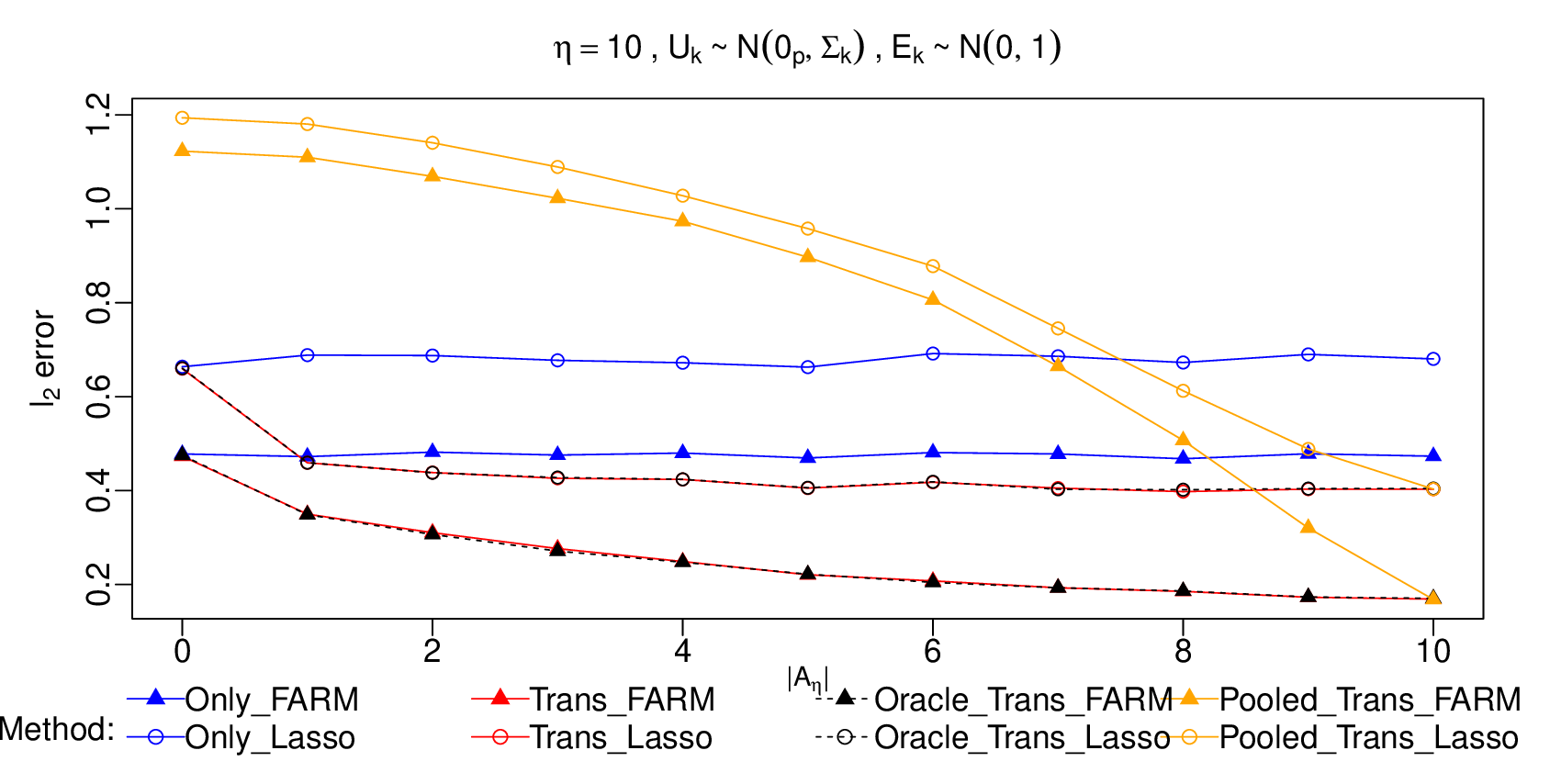}}

\subfigure{\includegraphics[width=0.495\textwidth]{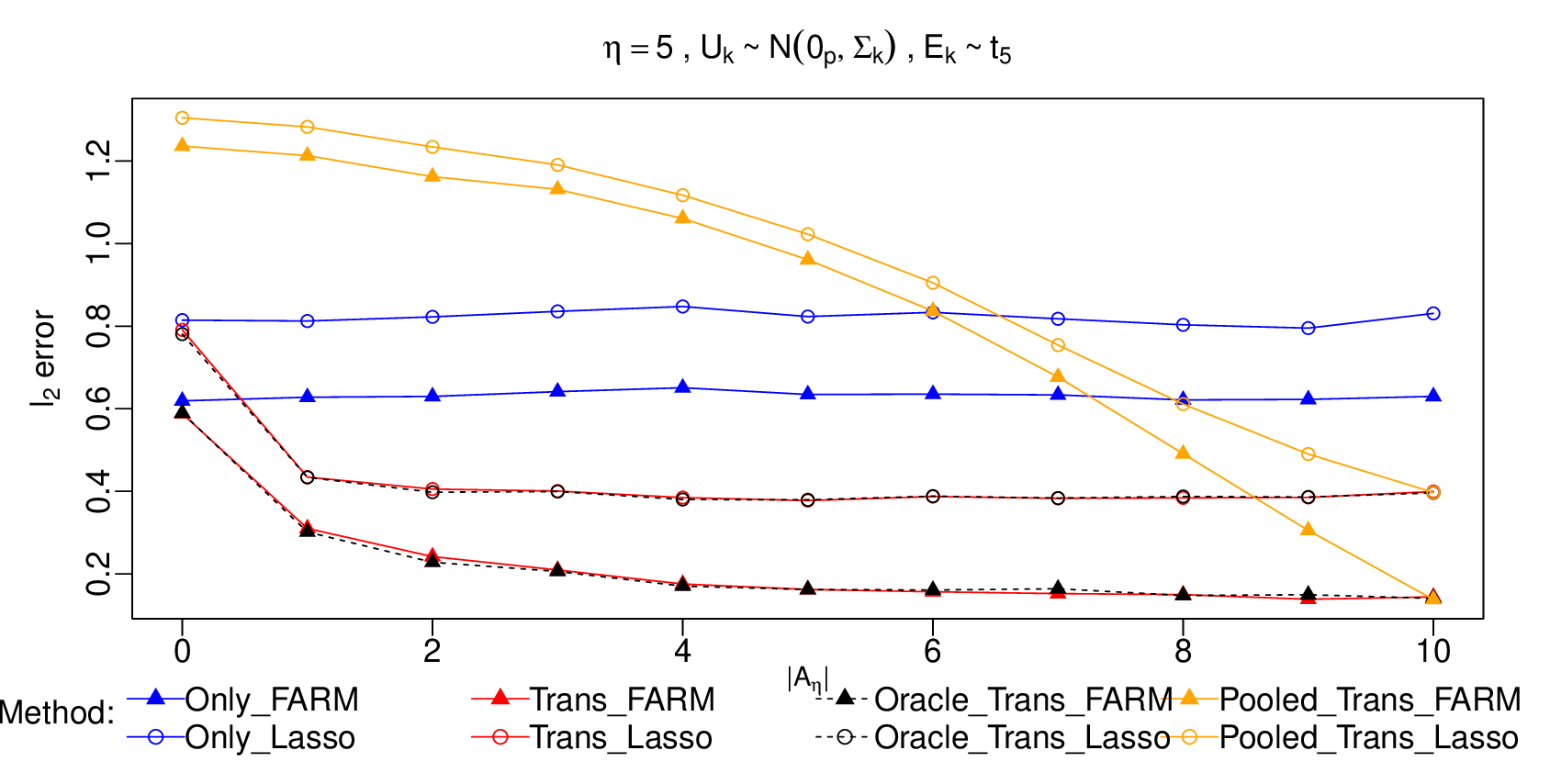}}
    \hfill
\subfigure{\includegraphics[width=0.495\textwidth]{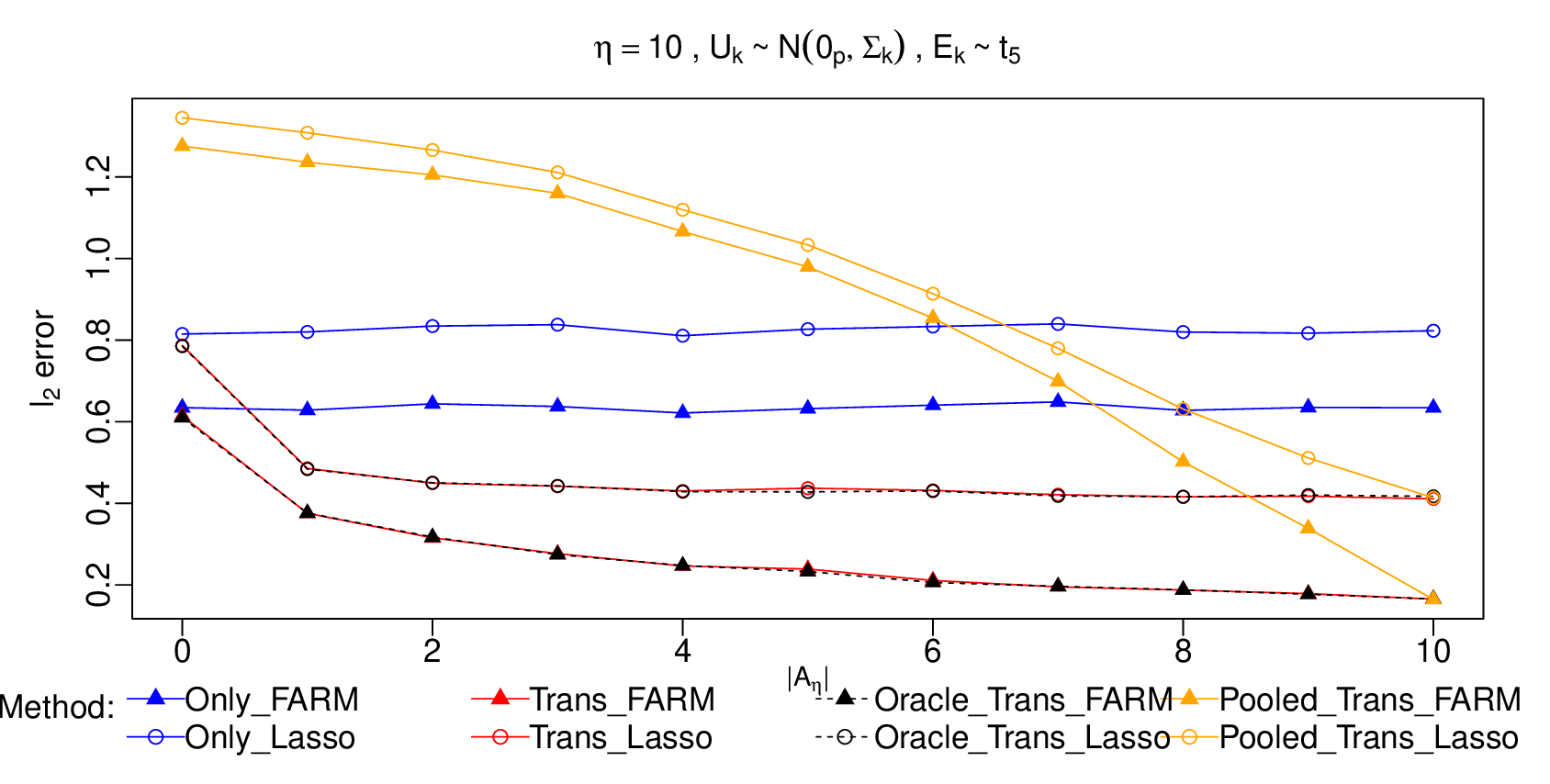}}

\subfigure{\includegraphics[width=0.495\textwidth]{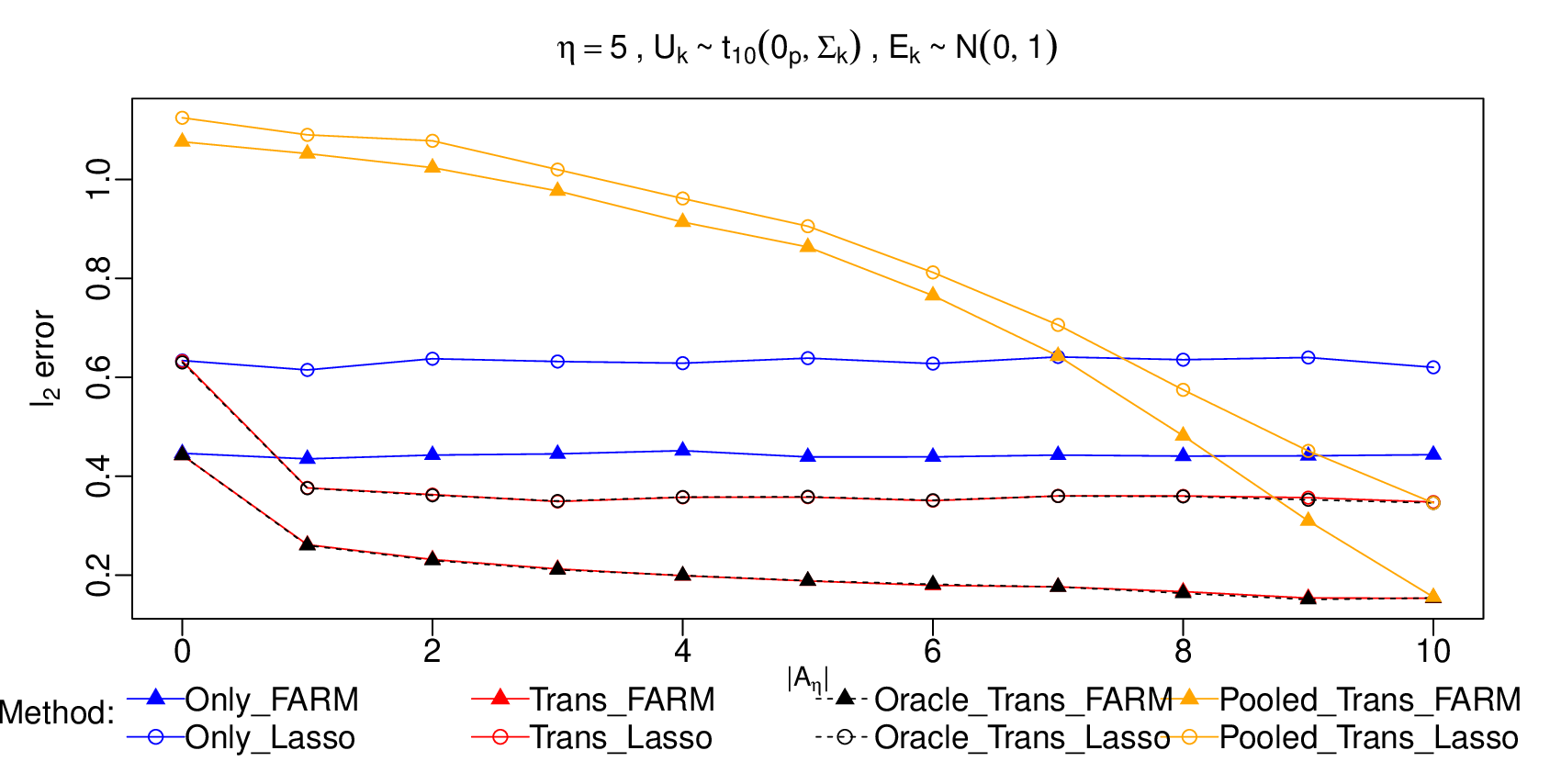}}
    \hfill
\subfigure{\includegraphics[width=0.495\textwidth]{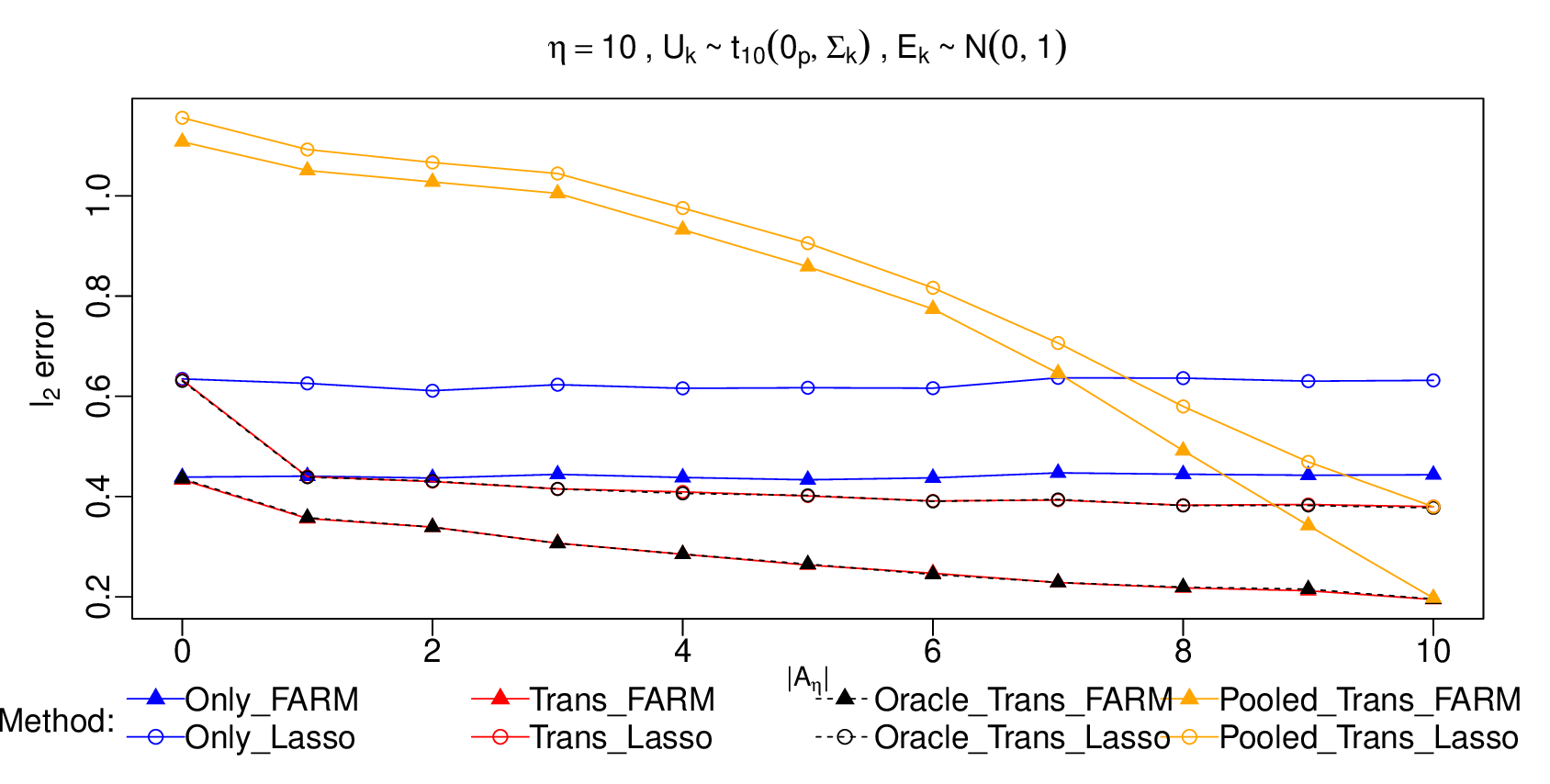}}

\subfigure{\includegraphics[width=0.495\textwidth]{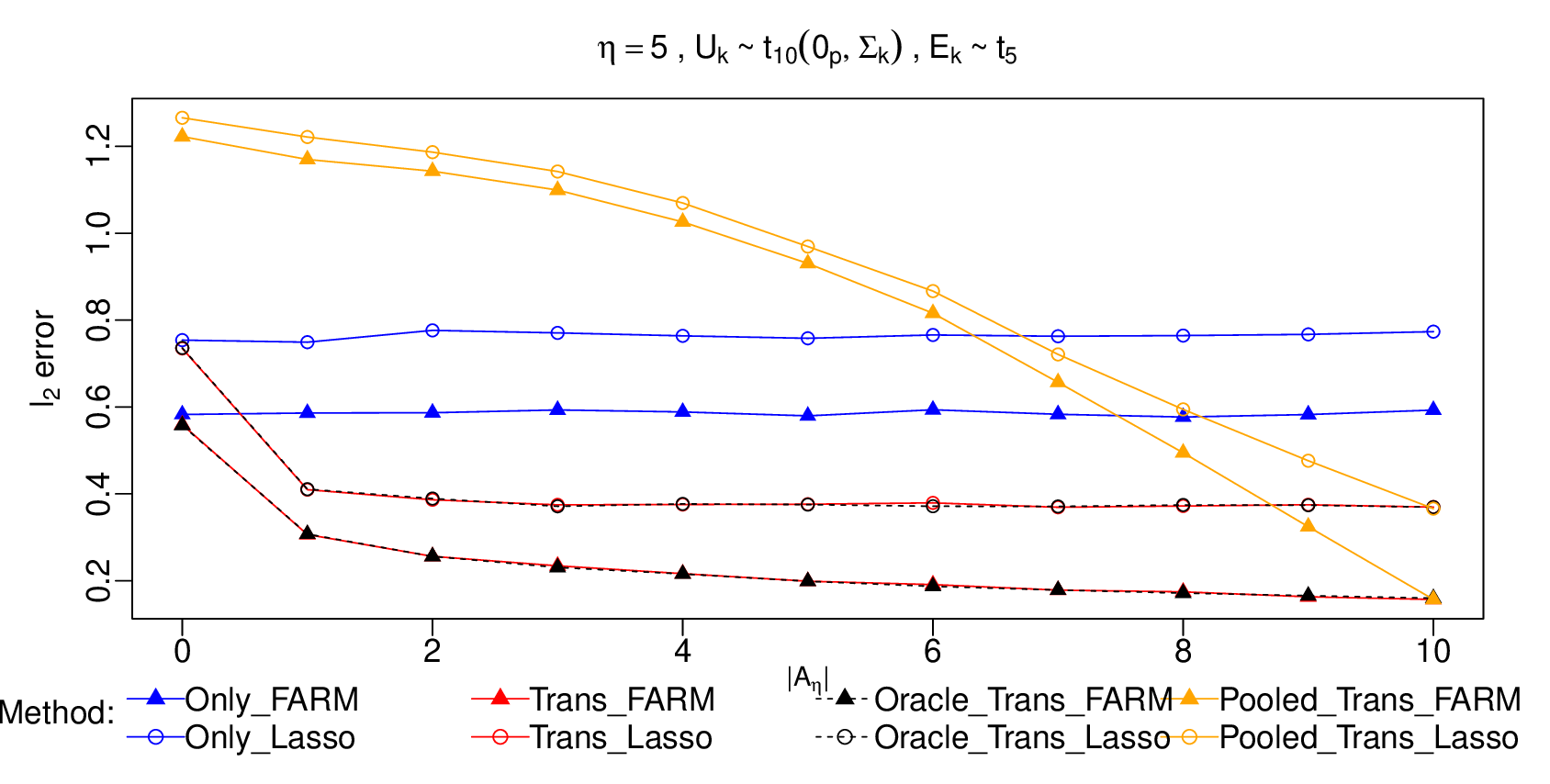}}
    \hfill
\subfigure{\includegraphics[width=0.495\textwidth]{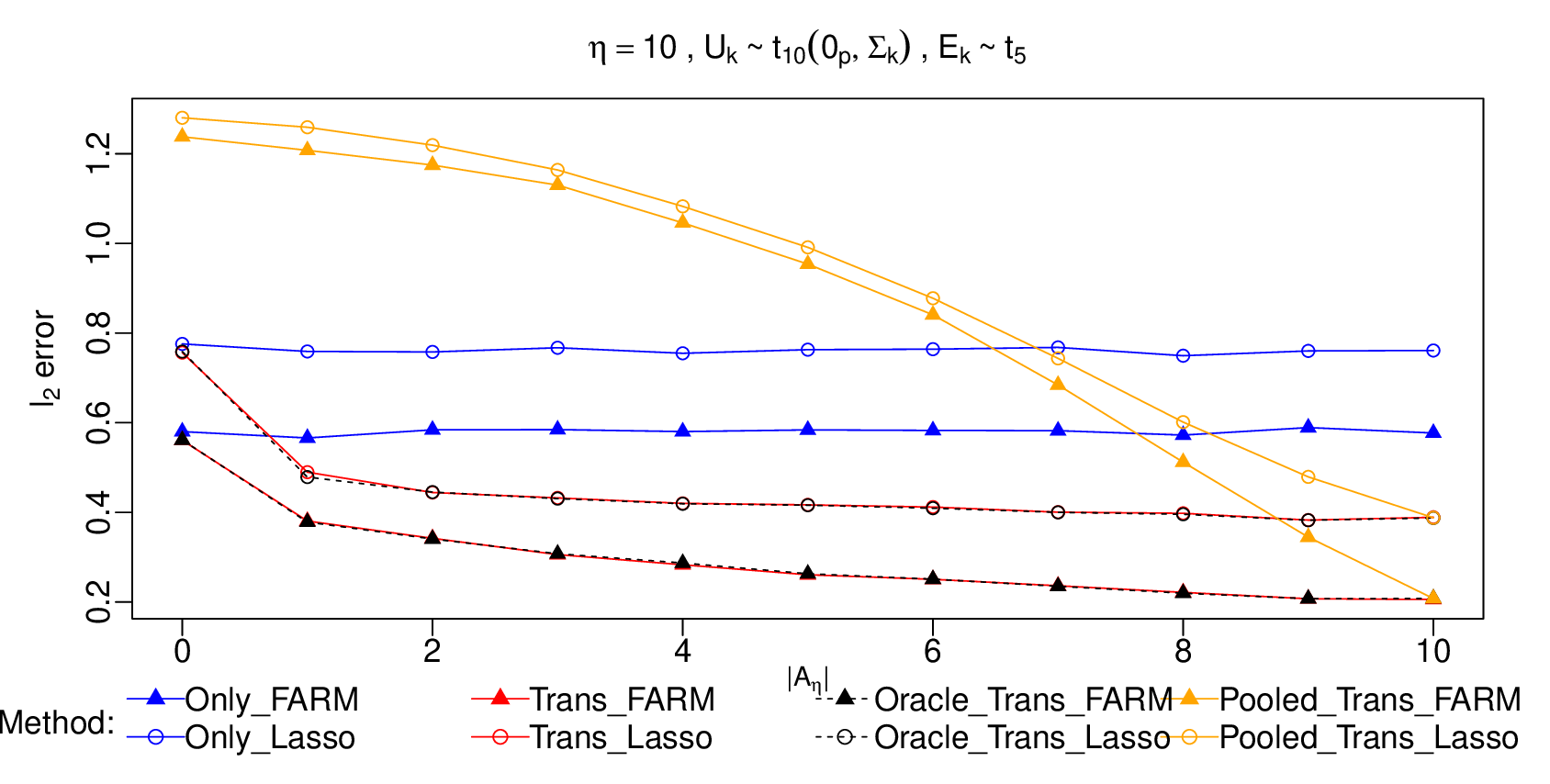}}

    \caption{Comparison of $\ell_2$ errors averaged over $200$ replications versus the number of informative sets ($|\mathcal{A}_\eta|$) across four cases with $\eta\in\{5,10\}$, $K=10$, $n_0=n_1=\ldots=n_K=300$, $p=500$, and $s=20$.}
    \label{fig1}
\end{figure}
We set $n_0=300$, $n_1=\ldots=n_{K}=300$, $p=500$, $s=20$ and $\eta=5$ or $\eta=10$. We let $K=10$ and the informative set $\mathcal{A}_\eta$ is random chosen from $\{1,\cdots,K\}$ with $|\mathcal{A}_\eta|$ varies from $0$ to $10$. We repeat simulation $200$ times and exhibit the $\ell_2$ errors of the $8$ estimators under the four scenarios in Figure \ref{fig1}: (a) $\boldsymbol{U}_k\sim\mathbb{N}(\boldsymbol{0}_p,\boldsymbol{\Sigma}_k)$, $\boldsymbol{E}^{(k)}\sim \mathbb{N}(0,1)$; (b) $\boldsymbol{U}_k\sim\mathbb{N}(\boldsymbol{0}_p,\boldsymbol{\Sigma}_k)$, $\boldsymbol{E}^{(k)}\sim t_5$; (c) $\boldsymbol{U}_k\sim t_{10}(\boldsymbol{0},\boldsymbol{\Sigma}_k)$, $\boldsymbol{E}^{(k)}\sim \mathbb{N}(0,1)$; (d) $\boldsymbol{U}_k\sim t_{10}(\boldsymbol{0},\boldsymbol{\Sigma}_k)$, $\boldsymbol{E}^{(k)}\sim t_5$.

From Figure \ref{fig1}, we draw several conclusions. First, across all experimental regimes, FARM-based estimators dominate Lasso-based estimators in \cite{tian2023transfer}, consistent with the theoretical expectation that Lasso suffers from model misspecification in the presence of latent factors. Second, the Oracle-Trans-FARM procedure exhibits strictly smaller estimation error than the Target-only estimators, with the error decreasing monotonically in the number of informative sources, which aligns with the theoretical benefits of borrowing strength across similar datasets. Moreover, the benefits of Oracle-Trans-FARM compared with Oracle-Trans-Lasso increase as the number of informative sources increases, which states that Oracle-Trans-FARM can capture more information from the informative sources in this setting. Third, the errors of Trans-FARM align with the Oracle-Trans-FARM, indicating that our source detection algorithm accurately recovers the underlying transferable set. Finally, the poor performance of Pooled-Trans-FARM relative to Trans-FARM highlights the necessity of the source detection algorithm. These results collectively show that Trans-FARM achieves near-oracle accuracy while adaptively identifying informative sources.

\subsection{Simulations of Simultaneous Inference}\label{sec5.2}
In this section, we generate data following the same model as in Section \ref{sec5.1}, with the only differences being that we fix the number of informative sets to $|\mathcal{A}_\eta|=5$, the number of non-zero elements in $\boldsymbol{\beta}$ to $s=5$, and $\eta=5$. The data is simulated with the settings $n_1=\ldots n_K=500$, and we let $n_0\in\{200,500\}$ and $p\in\{200,500\}$. The parameters of interest correspond to the first $p/4$ features, and our objective is to construct the simultaneous confidence intervals for these features. We compare our proposed method (Algorithm \ref{algo3}), denoted by Trans-FARM, with the approach extended from \cite{fan2024latent} (denoted by Only-FARM). We estimate the $\boldsymbol{\Theta}$ using the nodewise regression \citep{zhang2014confidence,van2014asymptotically,zhang2017simultaneous} on target data, denoted by $\hat{\boldsymbol{\Theta}}$, and apply $\hat{\boldsymbol{\Theta}}$ to Trans-FARM and Only-FARM. The results (in $500$ replications) of average coverage probability (ACP) and average length (AL) of $95\%$ confidence intervals under four cases in Section \ref{sec5.1} are presented in Table~\ref{tab1}.
\begin{table*}[htbp]
	\caption{Comparison of ACP and AL averaged over $500$ replications for $p,n_0\in\{200,500\}$.}
	\resizebox{\linewidth}{!}{
		\centering
		\begin{tabular}{ccccccccccccc}
			\hline
			\multirow{2}{*}{Methods}&\multirow{2}{*}{Case}&\multicolumn{2}{c}{$p=200,n_0=200$}& &\multicolumn{2}{c}{$p=500,n_0=200$}& &\multicolumn{2}{c}{$p=200,n_0=500$}& &\multicolumn{2}{c}{$p=500,n_0=500$}\cr
            & &ACP&AL& &ACP&AL& &ACP&AL& &ACP&AL\cr
			\hline
			Only-FARM&\multirow{2}{*}{(a)}&0.700&0.552& &0.764&0.584& &0.950&1.791& &0.856&0.375\cr
			Trans-FARM& &0.938&0.546& &0.964&0.575& &0.956&2.077& &0.958&0.373\cr
            Only-FARM&\multirow{2}{*}{(b)}&0.680&0.702& &0.726&0.747& &0.928&2.230& &0.804&0.481\cr
			Trans-FARM& &0.948&0.695& &0.962&0.737& &0.942&2.640& &0.962&0.478\cr
            Only-FARM&\multirow{2}{*}{(c)}&0.696&0.504& &0.746&0.536& &0.964&1.838& &0.786&0.337\cr
			Trans-FARM& &0.956&0.495& &0.936&0.520& &0.960&2.098& &0.946&0.334\cr
            Only-FARM&\multirow{2}{*}{(d)}&0.638&0.640& &0.706&0.682& &0.946&2.335& &0.688&0.433\cr
			Trans-FARM& &0.948&0.627& &0.954&0.662& &0.956&2.660& &0.956&0.430\cr
			\hline
		\end{tabular}
	}
	\label{tab1}
\end{table*}

It can be seen from Table \ref{tab1} that Only-FARM attains coverage close to $95\%$ ACP only when $p=200,~n_0=500$. In contrast, Trans-FARM consistently achieves coverage rates close to $95\%$ under all settings. Meanwhile, the ALs of Only-FARM and Trans-FARM are similar, which suggests that Trans-FARM can leverage auxiliary datasets to improve inferential accuracy and can be more reliable.

\subsection{Real Data Analysis}\label{sec5.3}
In this Section, we apply our method to the FRED-MD dataset \citep{mccracken2016fred}, which contains $134$ monthly U.S. macroeconomic time series capturing various aspects of economic activity. These variables are highly correlated and are commonly modeled through a low-dimensional latent factor structure. In our analysis, we separately consider `HOUSTNE' (housing starts in the northeast region) and `GS5' ($5$-year treasury rate) as response variables, while treating all remaining variables as covariates. Due to pronounced structural changes associated with the $2008$ financial crisis, many series exhibit nonstationary behavior even after standard transformations. To mitigate this issue, following \cite{fan2024latent}, we conduct our analysis over two relatively stable subsamples: February 1992-October 2007 and August 2010-February 2020.

For each predictive target subsample, we implement a rolling-window forecasting strategy with a window size of 90 months. Each rolling window constitutes the target dataset for the subsequent observation prediction. The non-overlapping observations in the same subsample are used as the first source dataset, and the remaining subsample is treated as an additional source dataset, enabling information transfer across time periods. Prediction performance is then evaluated by aggregating results across all rolling windows. Specifically, for a given subsample and any $0<t\leq T-90$ (where $T$ is the sample size in the subsample), the target data is $\{(\boldsymbol{x}_{i},y_{i})\}_{i=t}^{t+90}$, the corresponding objective is to predict $y_{t+91}$, denoted by $\hat{y}_{t+91}$. We compare our proposed Trans-FARM method with Only-FARM \citep{fan2024latent}, Only-Lasso, and Trans-Lasso \citep[Algorithm 2]{tian2023transfer}, where we utilize the two aforementioned source dataset as auxiliary dataset in the transfer learning methods. We measure the prediction accuracy through the out-of-sample $R^2$:
$$R^2=1-\frac{\sum_{i=91}^{T}(y_i-\hat{y}_i)^2}{\sum_{i=91}^{T}(y_i-\bar{y}_i)^2},$$
where $\bar{y}_i$ is the in-sample average $\sum_{j=t}^{t+90}y_j/90$. The out-of-sample $R^2$ results regarding two subsamples and two responses `HOUSTNE' and `GS5' are exhibited in Table \ref{tab2}.
\begin{table*}[htbp]
	\caption{Comparison of out-of-sample $R^2$ for four models in predicting two responses across two time periods. Entries in bold correspond to the method achieving the maximum out-of-sample $R^2$ in each setting.}
	\centering
	\begin{tabular}{cccccc}
		\hline
        Time period&Response&Only-Lasso&Trans-Lasso&Only-FARM&Trans-FARM\cr
		\hline
		\multirow{2}{*}{02.1992-10.2007}&HOUSTNE&0.717&0.772&0.759&\textbf{0.892}\cr
        &GS5&0.687&0.722&0.726&\textbf{0.740}\cr
        \multirow{2}{*}{08.2010-02.2020}&HOUSTNE&0.478&0.827&0.759&\textbf{0.895}\cr
        &GS5&0.754&0.787&0.673&\textbf{0.805}\cr
		\hline
	\end{tabular}
	\label{tab2}
\end{table*}
Table \ref{tab2} shows that Trans-FARM outperforms the other three methods under all predictive targets. These results demonstrate the effectiveness of Trans-FARM in exploiting auxiliary datasets to improve predictive accuracy. Notably, in contrast to approaches that handle structural breaks via explicit subsample partitioning \citep{fan2024latent}, Trans-FARM can still extract and useful information despite the presence of pronounced disruptions associated with the financial crisis. This further illustrates the significance of our method.

\section{Conclusion}\label{sec6}
In this paper, we study the factor-augmented sparse linear model within a transfer learning framework, aiming to address the challenge of strong correlations among covariates and to account for potential latent factor structures, along with leveraging these auxiliary datasets to improve estimation accuracy. This setting is particularly relevant to economic, financial, and medical applications, where the target dataset is often limited in size but accompanied by multiple heterogeneous auxiliary sources that can help enhance the estimation accuracy of regression coefficients. We propose transfer learning procedures that effectively exploit these auxiliary datasets and establish non-asymptotic error bounds for the resulting estimators under both $\ell_1$- and $\ell_2$- estimation errors. To mitigate the risk of negative transfer, we further develop a transferable source detection method in a data-driven manner and provide a thorough theoretical analysis of its consistency. We also adopt an inferential framework to construct valid simultaneous confidence intervals for regression coefficients of interest. The proposed simultaneous confidence intervals, constructed via a studentized bootstrap procedure, are especially noteworthy in that their lengths vary across coefficients, thereby reflecting the heterogeneous influence of different features on the response. Simulation experiments and a real-data study demonstrate the practical advantages of our methodology, showing that the proposed procedures yield substantial improvements in estimation accuracy, deliver reliable uncertainty quantification, and remain robust across heterogeneous environments. Future work could extend the proposed framework to nonlinear or generalized factor-augmented models, allowing more flexible relationships between the response and covariates. Another direction is to incorporate robust estimation techniques, such as Huber regression or other robust loss functions, into the transfer learning framework to improve stability under heavy-tailed errors and domain shifts induced by latent factor heterogeneity.

\section*{Acknowledgments}
The authors are supported by NSFC under Grant No. 12571311.

% ----------------------------------------------------------------
% Appendix: Proof of Theorem 1
% ----------------------------------------------------------------

% ----------------------------------------------------------------
% Reference
% ----------------------------------------------------------------
\spacingset{0.87} % DON'T change the spacing!
{\small
\bibliography{reference.bib}}
\bibliographystyle{apalike}
% DON'T change the spacing!
\spacingset{1.5} % DON'T change the spacing!

% ----------------------------------------------------------------
% *** Supplementary Materials *** 
% ----------------------------------------------------------------
\newpage
\begin{appendices}
\renewcommand{\thetheorem}{A\arabic{theorem}}
\setcounter{theorem}{0}

\section{Proof of Main Results}\label{secA}
This section contains the proofs of the main results. Auxiliary lemmas are deferred to Appendix \ref{secB}.

Denote $\hat{L}(\boldsymbol{w})=\frac{1}{2(n_{\mathcal{A}_\eta}+n_0)}\sum_{k\in\{0\}\cup\mathcal{A}_\eta}\|\tilde{\boldsymbol{Y}}^{(k)}-\hat{\boldsymbol{U}}_k\boldsymbol{w}\|_2^2,\,\nabla\hat{L}(\boldsymbol{w})=-\frac{1}{n_{\mathcal{A}_\eta}+n_0}\sum_{k\in\{0\}\cup\mathcal{A}_\eta}\\\hat{\boldsymbol{U}}_k^\top(\tilde{\boldsymbol{Y}}^{(k)}-\hat{\boldsymbol{U}}_k\boldsymbol{w}),\quad\delta\hat{L}(\boldsymbol{t})=\hat{L}(\boldsymbol{w}^{\mathcal{A}_\eta}+\boldsymbol{t})-\hat{L}(\boldsymbol{w}^{\mathcal{A}_\eta})-{\nabla\hat{L}(\boldsymbol{w}^{\mathcal{A}_\eta})}^\top\boldsymbol{t}$, and 
\begin{small}
\begin{align*}
D(\boldsymbol{\Delta})=\boldsymbol{\Delta}^\top\left(\frac{1}{n_{\mathcal{A}_\eta}+n_0}\sum_{k\in\{0\}\cup\mathcal{A}_\eta}\boldsymbol{U}_k^\top\boldsymbol{U}_k\right)\boldsymbol{\Delta},\quad\hat{D}(\boldsymbol{\Delta})=\boldsymbol{\Delta}^\top\left(\frac{1}{n_{\mathcal{A}_\eta}+n_0}\sum_{k\in\{0\}\cup\mathcal{A}_\eta}\hat{\boldsymbol{U}}_k^\top\hat{\boldsymbol{U}}_k\right)\boldsymbol{\Delta}.
\end{align*}
\end{small}
\subsection{Proof of Theorem \ref{thm1}}\label{secA1}
\begin{proof}
    Let $\hat{\boldsymbol{\mathcal{T}}}^{\mathcal{A}_\eta}=\hat{\boldsymbol{w}}^{\mathcal{A}_\eta}-\boldsymbol{w}^{\mathcal{A}_\eta}$ and $\hat{\boldsymbol{\Delta}}^{\mathcal{A}_\eta}=\hat{\boldsymbol{\beta}}^{\mathcal{A}_\eta}-\boldsymbol{\beta}$ be the bias of the the transferring step and eventual estimator respectively. We firstly consider the error bound for $\hat{\boldsymbol{\mathcal{T}}}^{\mathcal{A}_\eta}$ from which we are committed to derive the error bound for $\hat{\boldsymbol{\Delta}}^{\mathcal{A}_\eta}$.\\
    \textbf{Step 1: Bounds for $\hat{\boldsymbol{\mathcal{T}}}^{\mathcal{A}_\eta}$.} By the definition of $\hat{\boldsymbol{w}}^{\mathcal{A}_\eta}$, Lemma \ref{lemmaA1}, and the convexity of $\hat{L}$, under the event $\mathcal{E}_{\boldsymbol{w}}=\{\lambda_{\boldsymbol{w}}\geq 2\|\nabla \hat{L}(\boldsymbol{w}^{\mathcal{A}_\eta})\|_\infty\}$, it follows that
    \begin{equation}\label{eqA1}
        \begin{aligned}
            0&\leq\delta\hat{L}(\hat{\boldsymbol{\mathcal{T}}}^{\mathcal{A}_\eta})\leq \lambda_{\boldsymbol{w}}(\|\boldsymbol{w}_{\mathcal{S}}^{\mathcal{A}_\eta}\|_1+\|\boldsymbol{w}_{\mathcal{S}^c}^{\mathcal{A}_\eta}\|_1)-\lambda_{\boldsymbol{w}}(\|\hat{\boldsymbol{w}}_{\mathcal{S}}^{\mathcal{A}_\eta}\|_1+\|\hat{\boldsymbol{w}}_{\mathcal{S}^c}^{\mathcal{A}_\eta}\|_1)-{\nabla\hat{L}(\boldsymbol{w}^{\mathcal{A}_\eta})}^\top\hat{\boldsymbol{\mathcal{T}}}^{\mathcal{A}_\eta}\\
            &\leq \lambda_{\boldsymbol{w}}(\|\boldsymbol{w}_{\mathcal{S}}^{\mathcal{A}_\eta}\|_1+\|\boldsymbol{w}_{\mathcal{S}^c}^{\mathcal{A}_\eta}\|_1)-\lambda_{\boldsymbol{w}}(\|\hat{\boldsymbol{w}}_{\mathcal{S}}^{\mathcal{A}_\eta}\|_1+\|\hat{\boldsymbol{w}}_{\mathcal{S}^c}^{\mathcal{A}_\eta}\|_1)+\frac{1}{2}\lambda_{\boldsymbol{w}}\|\hat{\boldsymbol{\mathcal{T}}}^{\mathcal{A}_\eta}\|_1\\
            &\leq \frac{3}{2}\lambda_{\boldsymbol{w}}\|\hat{\boldsymbol{\mathcal{T}}}_{\mathcal{S}}^{\mathcal{A}_\eta}\|_1+\lambda_{\boldsymbol{w}}(\|\boldsymbol{w}_{\mathcal{S}^c}^{\mathcal{A}_\eta}\|_1-\|\hat{\boldsymbol{w}}_{\mathcal{S}^c}^{\mathcal{A}_\eta}\|_1)+\frac{1}{2}\lambda_{\boldsymbol{w}}\|\hat{\boldsymbol{\mathcal{T}}}_{\mathcal{S}^c}^{\mathcal{A}_\eta}\|_1\\
            &\leq \frac{3}{2}\lambda_{\boldsymbol{w}}\|\hat{\boldsymbol{\mathcal{T}}}_{\mathcal{S}}^{\mathcal{A}_\eta}\|_1-\frac{1}{2}\lambda_{\boldsymbol{w}}\|\hat{\boldsymbol{\mathcal{T}}}_{\mathcal{S}^c}^{\mathcal{A}_\eta}\|_1+2\lambda_{\boldsymbol{w}}\|\boldsymbol{w}_{\mathcal{S}^c}^{\mathcal{A}_\eta}\|_1\leq \frac{3}{2}\lambda_{\boldsymbol{w}}\|\hat{\boldsymbol{\mathcal{T}}}_{\mathcal{S}}^{\mathcal{A}_\eta}\|_1-\frac{1}{2}\lambda_{\boldsymbol{w}}\|\hat{\boldsymbol{\mathcal{T}}}_{\mathcal{S}^c}^{\mathcal{A}_\eta}\|_1+2\lambda_{\boldsymbol{w}}C_1\eta\\
        \end{aligned}
    \end{equation}
It can be implied from \eqref{eqA1} that $\|\hat{\boldsymbol{\mathcal{T}}}_{\mathcal{S}^c}^{\mathcal{A}_\eta}\|_1\leq3\|\hat{\boldsymbol{\mathcal{T}}}_{\mathcal{S}}^{\mathcal{A}_\eta}\|_1+4C_1\eta$ such that $\hat{\boldsymbol{\mathcal{T}}}^{\mathcal{A}_\eta}\in\mathbb{C}_\eta$ where $\mathbb{C}_\eta=\{\boldsymbol{t}:\|\boldsymbol{t}_{\mathcal{S}^c}\|_1\leq3\|\boldsymbol{t}_{\mathcal{S}}\|_1+4C_1\eta\}$. Meanwhile, it is straightforward that $t\hat{\boldsymbol{\mathcal{T}}}^{\mathcal{A}_\eta}\in\mathbb{C}_\eta$ for any $t\in(0,1)$. We term to consider the event
\begin{small} 
\begin{align*}
	\mathcal{E}_{\boldsymbol{w}}'=\left\{\frac{D(\boldsymbol{\Delta})}{\|\boldsymbol{\Delta}\|_2^2}\geq \kappa_1\left(1-\kappa_2\sqrt{\frac{\log p}{n_{\mathcal{A}_\eta}+n_0}}\frac{\|\boldsymbol{\Delta}\|_1}{\|\boldsymbol{\Delta}\|_2}\right),\quad\text{for all }\|\boldsymbol{\Delta}\|_2\leq 1\right\}.
\end{align*}
\end{small}
We aim to show that conditional on $\mathcal{E}_{\boldsymbol{w}}\cap\mathcal{E}_{\boldsymbol{w}}'$,
$$\|\hat{\boldsymbol{\mathcal{T}}}^{\mathcal{A}_\eta}\|_2\leq8\kappa_2C_1\sqrt{\frac{\log p}{n_{\mathcal{A}_\eta}+n_0}}\eta+3\kappa_1^{-1}\sqrt{s}\lambda_{\boldsymbol{w}}+2\sqrt{\frac{\lambda_{\boldsymbol{w}}C_1\eta}{\kappa_1}}+\sqrt{\frac{2e_\eta}{\kappa_1}}:=c_{\mathcal{T}}$$
through proof by contradiction, where $e_\eta$ is defined as $e_\eta=c\eta^2((|\mathcal{A}_\eta|+1)\log p/(n_{\mathcal{A}_\eta}+n_0)+1/p)$ for some positive constant $c<\infty$. If this claim does not hold, we could find some $t_0\in(0,1)$ such that $\tilde{\boldsymbol{\mathcal{T}}}^{\mathcal{A}_\eta}=t\hat{\boldsymbol{\mathcal{T}}}^{\mathcal{A}_\eta}$ satisfying $\|\tilde{\boldsymbol{\mathcal{T}}}^{\mathcal{A}_\eta}\|_2\geq c_{\mathcal{T}}$ and $\|\tilde{\boldsymbol{\mathcal{T}}}^{\mathcal{A}_\eta}\|_2\leq 1$ and as long as $c_{\mathcal{T}}\leq 1$. Let $\tilde{\boldsymbol{w}}^{\mathcal{A}_\eta}=\boldsymbol{w}^{\mathcal{A}_\eta}+\tilde{\boldsymbol{\mathcal{T}}}^{\mathcal{A}_\eta}$. The optimality of $\hat{\boldsymbol{w}}^{\mathcal{A}_\eta}$ implies that $F(\hat{\boldsymbol{\mathcal{T}}}^{\mathcal{A}_\eta})\leq0$ with $F(\boldsymbol{t})=\hat{L}(\boldsymbol{w}^{\mathcal{A}_\eta}+\boldsymbol{t})-\hat{L}(\boldsymbol{w}^{\mathcal{A}_\eta})+\lambda_{\boldsymbol{w}}\|\boldsymbol{w}^{\mathcal{A}_\eta}+\boldsymbol{t}\|_1-\lambda_{\boldsymbol{w}}\|\boldsymbol{w}^{\mathcal{A}_\eta}\|_1$. The convexity of function $F(\cdot)$ and $F(\boldsymbol{0}_p)=0$ yield that $F(\tilde{\boldsymbol{\mathcal{T}}}^{\mathcal{A}_\eta})\leq tF(\hat{\boldsymbol{\mathcal{T}}}^{\mathcal{A}_\eta})\leq0$. Following a same trick of \eqref{eqA1}, together with the convexity of function $F(\cdot)$ yield that
\begin{small}
\begin{align*}
& F(\tilde{\boldsymbol{\mathcal{T}}}^{\mathcal{A}_\eta})\geq \nabla\hat{L}(\boldsymbol{w}^{\mathcal{A}_\eta})^\top\tilde{\boldsymbol{\mathcal{T}}}^{\mathcal{A}_\eta}+\hat{D}(\tilde{\boldsymbol{\mathcal{T}}}^{\mathcal{A}_\eta})+\lambda_{\boldsymbol{w}}\|\tilde{\boldsymbol{w}}^{\mathcal{A}_\eta}\|_1-\lambda_{\boldsymbol{w}}\|\boldsymbol{w}^{\mathcal{A}_\eta}\|_1\\
\geq& -\frac{1}{2}\lambda_{\boldsymbol{w}}\|\tilde{\boldsymbol{\mathcal{T}}}^{\mathcal{A}_\eta}\|_1+\lambda_{\boldsymbol{w}}\|\tilde{\boldsymbol{w}}^{\mathcal{A}_\eta}\|_1-\lambda_{\boldsymbol{w}}\|\boldsymbol{w}^{\mathcal{A}_\eta}\|_1+\hat{D}(\tilde{\boldsymbol{\mathcal{T}}}^{\mathcal{A}_\eta})\\
\geq& -\frac{1}{2}\lambda_{\boldsymbol{w}}\|\tilde{\boldsymbol{\mathcal{T}}}_{\mathcal{S}^c}^{\mathcal{A}_\eta}\|_1-\frac{3}{2}\lambda_{\boldsymbol{w}}\|\tilde{\boldsymbol{\mathcal{T}}}_{\mathcal{S}}^{\mathcal{A}_\eta}\|_1+\lambda_{\boldsymbol{w}}(\|\tilde{\boldsymbol{w}}_{\mathcal{S}^c}^{\mathcal{A}_\eta}\|_1-\|\boldsymbol{w}_{\mathcal{S}^c}^{\mathcal{A}_\eta}\|_1)+\hat{D}(\tilde{\boldsymbol{\mathcal{T}}}^{\mathcal{A}_\eta})\\
\geq& \frac{1}{2}\lambda_{\boldsymbol{w}}\|\tilde{\boldsymbol{\mathcal{T}}}_{\mathcal{S}^c}^{\mathcal{A}_\eta}\|_1-\frac{3}{2}\lambda_{\boldsymbol{w}}\|\tilde{\boldsymbol{\mathcal{T}}}_{\mathcal{S}}^{\mathcal{A}_\eta}\|_1-2\lambda_{\boldsymbol{w}}\|\boldsymbol{w}_{\mathcal{S}^c}^{\mathcal{A}_\eta}\|_1+D(\tilde{\boldsymbol{\mathcal{T}}}^{\mathcal{A}_\eta})+\hat{D}(\tilde{\boldsymbol{\mathcal{T}}}^{\mathcal{A}_\eta})-D(\tilde{\boldsymbol{\mathcal{T}}}^{\mathcal{A}_\eta})\\
\geq& \frac{1}{2}\lambda_{\boldsymbol{w}}\|\tilde{\boldsymbol{\mathcal{T}}}_{\mathcal{S}^c}^{\mathcal{A}_\eta}\|_1-\frac{3}{2}\lambda_{\boldsymbol{w}}\|\tilde{\boldsymbol{\mathcal{T}}}_{\mathcal{S}}^{\mathcal{A}_\eta}\|_1-2\lambda_{\boldsymbol{w}}C_1\eta+\kappa_1\left(1-\kappa_2\sqrt{\frac{\log p}{n_{\mathcal{A}_\eta}+n_0}}\frac{\|\tilde{\boldsymbol{\mathcal{T}}}^{\mathcal{A}_\eta}\|_1}{\|\tilde{\boldsymbol{\mathcal{T}}}^{\mathcal{A}_\eta}\|_2}\right)\|\tilde{\boldsymbol{\mathcal{T}}}^{\mathcal{A}_\eta}\|_2^2+\hat{D}(\tilde{\boldsymbol{\mathcal{T}}}^{\mathcal{A}_\eta})-D(\tilde{\boldsymbol{\mathcal{T}}}^{\mathcal{A}_\eta})\\
\geq&\kappa_1\|\tilde{\boldsymbol{\mathcal{T}}}^{\mathcal{A}_\eta}\|_2^2-\kappa_1\kappa_2\sqrt{\frac{\log p}{n_{\mathcal{A}_\eta}+n_0}}\|\tilde{\boldsymbol{\mathcal{T}}}^{\mathcal{A}_\eta}\|_1\|\tilde{\boldsymbol{\mathcal{T}}}^{\mathcal{A}_\eta}\|_2-\frac{3}{2}\sqrt{s}\lambda_{\boldsymbol{w}}\|\tilde{\boldsymbol{\mathcal{T}}}_{\mathcal{S}}^{\mathcal{A}_\eta}\|_2-2\lambda_{\boldsymbol{w}}C_1\eta-|\hat{D}(\tilde{\boldsymbol{\mathcal{T}}}^{\mathcal{A}_\eta})-D(\tilde{\boldsymbol{\mathcal{T}}}^{\mathcal{A}_\eta})|
\end{align*}
\end{small}
Note that $\tilde{\boldsymbol{\mathcal{T}}}^{\mathcal{A}_\eta}\in\mathbb{C}_\eta$, which deduces that $\|\tilde{\boldsymbol{\mathcal{T}}}^{\mathcal{A}_\eta}\|_1\leq4\|\tilde{\boldsymbol{\mathcal{T}}}_{\mathcal{S}}^{\mathcal{A}_\eta}\|_1+4C_1\eta\leq4\sqrt{s}\|\tilde{\boldsymbol{\mathcal{T}}}_{\mathcal{S}}^{\mathcal{A}_\eta}\|_2+4C_1\eta,$
and by Lemma \ref{lemmaA6}, it is straightforward that
\begin{align*}
    |\hat{D}(\tilde{\boldsymbol{\mathcal{T}}}^{\mathcal{A}_\eta})-D(\tilde{\boldsymbol{\mathcal{T}}}^{\mathcal{A}_\eta})|&\leq\|\tilde{\boldsymbol{\mathcal{T}}}^{\mathcal{A}_\eta}\|_1^2\frac{1}{n_{\mathcal{A}_\eta}+n_0}\left\|\sum_{k\in\{0\}\cup\mathcal{A}_\eta}(\boldsymbol{U}_k^\top\boldsymbol{U}_k-\hat{\boldsymbol{U}}_k^\top\hat{\boldsymbol{U}}_k)\right\|_{\max}\\
    &=O_P\left(\left(s\|\tilde{\boldsymbol{\mathcal{T}}}^{\mathcal{A}_\eta}\|_2^2+\eta^2\right)\left(\frac{(|\mathcal{A}_\eta|+1)\log p}{n_{\mathcal{A}_\eta}+n_0}+\frac{1}{p}\right)\right).
\end{align*}
\vspace{-0.5mm}
Hence, we have
$$\frac{1}{2}\kappa_1\|\tilde{\boldsymbol{\mathcal{T}}}^{\mathcal{A}_\eta}\|_2^2-\left(4\kappa_1\kappa_2C_1\sqrt{\frac{\log p}{n_{\mathcal{A}_\eta}+n_0}}\eta+\frac{3}{2}\sqrt{s}\lambda_{\boldsymbol{w}}\right)\|\tilde{\boldsymbol{\mathcal{T}}}^{\mathcal{A}_\eta}\|_2-2\lambda_{\boldsymbol{w}}C_1\eta-e_\eta\leq 0$$
as long as $16\kappa_2\sqrt{s\log p/(n_{\mathcal{A}_\eta}+n_0)}\leq1$ and $s(|\mathcal{A}_\eta|+1)\log p/(n_{\mathcal{A}_\eta}+n_0)+s/p<c$ for a constant $c$ small enough. However, the left-hand side of the inequality would be positive if the claim $\|\tilde{\boldsymbol{\mathcal{T}}}^{\mathcal{A}_\eta}\|_2\geq c_{\mathcal{T}}$ holds. This leads to a contradiction, implying that $\|\tilde{\boldsymbol{\mathcal{T}}}^{\mathcal{A}_\eta}\|_2\leq c_{\mathcal{T}}$, i.e.,
\vspace{-2mm}
\begin{align*}
    \|\hat{\boldsymbol{\mathcal{T}}}^{\mathcal{A}_\eta}\|_2&\lesssim\sqrt{\frac{\log p}{n_{\mathcal{A}_\eta}+n_0}}\eta+\sqrt{s}\lambda_{\boldsymbol{w}}+\sqrt{\lambda_{\boldsymbol{w}}\eta}+\eta\sqrt{\frac{(|\mathcal{A}_\eta|+1)\log p}{n_{\mathcal{A}_\eta}+n_0}}+\frac{\eta}{\sqrt{p}}\\
    &\lesssim\eta\sqrt{\frac{(|\mathcal{A}_\eta|+1)\log p}{n_{\mathcal{A}_\eta}+n_0}}+\sqrt{s}\lambda_{\boldsymbol{w}}+\sqrt{\lambda_{\boldsymbol{w}}\eta}+\frac{\eta}{\sqrt{p}}
\end{align*}
By choosing $\lambda_{\boldsymbol{w}}\asymp\sqrt{\log p/(n_{\mathcal{A}_\eta}+n_0)}$, then we obatin 
\begin{align*}
\|\hat{\boldsymbol{\mathcal{T}}}^{\mathcal{A}_\eta}\|_2&\lesssim\eta\sqrt{\frac{(|\mathcal{A}_\eta|+1)\log p}{n_{\mathcal{A}_\eta}+n_0}}+\sqrt{\frac{s\log p}{n_{\mathcal{A}_\eta}+n_0}}+\left(\frac{\log p}{n_{\mathcal{A}_\eta}+n_0}\right)^{\frac{1}{4}}\sqrt{\eta}+\frac{\eta}{\sqrt{p}},\\
\|\tilde{\boldsymbol{\mathcal{T}}}^{\mathcal{A}_\eta}\|_1&\leq4\sqrt{s}\|\tilde{\boldsymbol{\mathcal{T}}}_{\mathcal{S}}^{\mathcal{A}_\eta}\|_2+4C_1\eta\\
&\lesssim\eta\sqrt{\frac{s(|\mathcal{A}_\eta|+1)\log p}{n_{\mathcal{A}_\eta}+n_0}}+s\sqrt{\frac{\log p}{n_{\mathcal{A}_\eta}+n_0}}+\left(\frac{\log p}{n_{\mathcal{A}_\eta}+n_0}\right)^{\frac{1}{4}}\sqrt{s\eta}+\frac{\sqrt{s}\eta}{\sqrt{p}}+\eta\\
&\lesssim\eta\sqrt{\frac{s(|\mathcal{A}_\eta|+1)\log p}{n_{\mathcal{A}_\eta}+n_0}}+s\sqrt{\frac{\log p}{n_{\mathcal{A}_\eta}+n_0}}+\eta
\end{align*}
It remains to show that the probability of $\mathcal{E}_{\boldsymbol{w}}\cap\mathcal{E}_{\boldsymbol{w}}'$ occurs with high probability. Lemma \ref{lemmaA2} ensures that $\mathcal{E}_{\boldsymbol{w}}'$ occurs with high probability. Lemma \ref{lemmaA4} ensures that $\mathcal{E}_{\boldsymbol{w}}$ occurs with high probability if $\eta=O(1)$ and $$\frac{(|\mathcal{A}_\eta|+1)(1\vee\eta\log^{\frac{1}{2}} p\vee\max_{k\in\{0\}\cup\mathcal{A}_\eta}\mathcal{V}_{n_k,p}\|\varphi^{(k)}\|_2\log^{-\frac{1}{2}}p)}{\sqrt{n_{\mathcal{A}_\eta}+n_0}}=O(1).$$
\vspace{-0.3mm}
\textbf{Step 2: Bounds for $\hat{\boldsymbol{\Delta}}^{\mathcal{A}_\eta}$.}
Define 
\vspace{-0.3mm}
\begin{align*}
\hat{L}^{(0)}(\boldsymbol{w})&=\frac{1}{2n_0}\|\tilde{\boldsymbol{Y}}^{(0)}-\hat{\boldsymbol{U}}_0\boldsymbol{w}\|_2^2,\quad\nabla\hat{L}^{(0)}(\boldsymbol{w})=\frac{1}{n_0}\hat{\boldsymbol{U}}_0^\top(\tilde{\boldsymbol{Y}}^{(0)}-\hat{\boldsymbol{U}}_0\boldsymbol{w}),\\
\boldsymbol{\delta}^{\mathcal{A}_\eta}&=\boldsymbol{\beta}-\boldsymbol{w}^{\mathcal{A}_\eta},\quad\hat{\boldsymbol{\beta}}^{\mathcal{A}_\eta}=\hat{\boldsymbol{w}}^{\mathcal{A}_\eta}+\hat{\boldsymbol{\delta}}^{\mathcal{A}_\eta},\quad\hat{\boldsymbol{\Delta}}^{\mathcal{A}_\eta}=\hat{\boldsymbol{\beta}}^{\mathcal{A}_\eta}-\boldsymbol{\beta},\\
    \delta\hat{L}^{(0)}(\boldsymbol{\Delta})&=\hat{L}^{(0)}(\boldsymbol{\beta}+\boldsymbol{\Delta})-\hat{L}^{(0)}(\boldsymbol{\beta})-{\nabla\hat{L}^{(0)}(\boldsymbol{\beta})}^\top\boldsymbol{\Delta}.
\end{align*}
Similar to \eqref{eqA1}, under the event $\mathcal{E}_{\boldsymbol{\delta}}=\{\lambda_{\boldsymbol{\delta}}\geq 2\|\nabla \hat{L}^{(0)}(\boldsymbol{\beta})\|_\infty\}$, we have
\begin{equation}
        \begin{aligned}
            0\leq\delta\hat{L}^{(0)}(\hat{\boldsymbol{\Delta}}^{\mathcal{A}_\eta})&\leq \lambda_{\boldsymbol{\delta}}\|\boldsymbol{\beta}-\hat{\boldsymbol{w}}^{\mathcal{A}_\eta}\|_1-\lambda_{\boldsymbol{\delta}}\|\hat{\boldsymbol{\delta}}^{\mathcal{A}_\eta}\|_1+\frac{1}{2}\lambda_{\boldsymbol{\delta}}\|\hat{\boldsymbol{\Delta}}^{\mathcal{A}_\eta}\|_1\\
            &\leq \lambda_{\boldsymbol{\delta}}\|\hat{\boldsymbol{\delta}}^{\mathcal{A}_\eta}-\hat{\boldsymbol{\Delta}}^{\mathcal{A}_\eta}\|_1-\lambda_{\boldsymbol{\delta}}\|\hat{\boldsymbol{\delta}}^{\mathcal{A}_\eta}\|_1+\frac{1}{2}\lambda_{\boldsymbol{\delta}}\|\hat{\boldsymbol{\Delta}}^{\mathcal{A}_\eta}\|_1\\
            &\leq \frac{3}{2}\lambda_{\boldsymbol{\delta}}\|\hat{\boldsymbol{\Delta}}_{\mathcal{S}}^{\mathcal{A}_\eta}\|_1+2\lambda_{\boldsymbol{\delta}}\|(\hat{\boldsymbol{\delta}}^{\mathcal{A}_\eta}-\hat{\boldsymbol{\Delta}}^{\mathcal{A}_\eta})_{\mathcal{S}^c}\|_1-\frac{1}{2}\lambda_{\boldsymbol{\delta}}\|\hat{\boldsymbol{\Delta}}_{\mathcal{S}^c}^{\mathcal{A}_\eta}\|_1
        \end{aligned}
    \end{equation}
    Since $\hat{\boldsymbol{\delta}}^{\mathcal{A}_\eta}-\hat{\boldsymbol{\Delta}}^{\mathcal{A}_\eta}=\boldsymbol{\beta}-\hat{\boldsymbol{w}}^{\mathcal{A}_\eta}=(\boldsymbol{\beta}-\boldsymbol{w}^{\mathcal{A}_\eta})-\hat{\boldsymbol{\mathcal{T}}}^{\mathcal{A}_\eta}$ with $\|\boldsymbol{\beta}-\boldsymbol{w}^{\mathcal{A}_\eta}\|_1\leq C_1\eta$ by Lemma \ref{lemmaA1}, we conclude that $$\|\hat{\boldsymbol{\Delta}}^{\mathcal{A}_\eta}\|_1\leq4\|\hat{\boldsymbol{\Delta}}_{\mathcal{S}}^{\mathcal{A}_\eta}\|_1+4(\|\hat{\boldsymbol{\mathcal{T}}}^{\mathcal{A}_\eta}\|_1+C_1\eta)\leq4\sqrt{s}\|\hat{\boldsymbol{\Delta}}^{\mathcal{A}_\eta}\|_2+4\|\hat{\boldsymbol{\mathcal{T}}}^{\mathcal{A}_\eta}\|_1+4C_1\eta.$$

We term to consider the event 
$$\mathcal{E}_{\boldsymbol{\delta}}'=\left\{\frac{D^{(0)}(\boldsymbol{\Delta})}{\|\boldsymbol{\Delta}\|_2^2}\geq \kappa_1\left(1-\kappa_2\sqrt{\frac{\log p}{n_0}}\frac{\|\boldsymbol{\Delta}\|_1}{\|\boldsymbol{\Delta}\|_2}\right),\quad\text{for all }\|\boldsymbol{\Delta}\|_2\leq 1\right\},$$
where $D^{(0)}(\boldsymbol{\Delta})=\boldsymbol{\Delta}^\top\left(\boldsymbol{U}_0^\top\boldsymbol{U}_0/n_k\right)\boldsymbol{\Delta}$. Now we aim to show that conditional on $\mathcal{E}_{\boldsymbol{\delta}}\cap\mathcal{E}_{\boldsymbol{\delta}}'$,
$$\|\hat{\boldsymbol{\Delta}}^{\mathcal{A}_\eta}\|_2\leq8\kappa_2C_1\sqrt{\frac{\log p}{n_0}}\eta+8\kappa_2\sqrt{\frac{\log p}{n_0}}\|\hat{\boldsymbol{\mathcal{T}}}^{\mathcal{A}_\eta}\|_1+\sqrt{\frac{2e_\eta'}{\kappa_1}}:=c_{\boldsymbol{\Delta}}$$
through proof by contradiction, where $e_\eta'$ is defined as $e_\eta'=c(\|\hat{\boldsymbol{\mathcal{T}}}^{\mathcal{A}_\eta}\|_1^2+\eta^2)\left(\log p/n_0+1/p\right)+2\lambda_{\boldsymbol{\delta}}\|\hat{\boldsymbol{\mathcal{T}}}^{\mathcal{A}_\eta}\|_1+2\lambda_{\boldsymbol{\delta}}C_1\eta$ for some positive constant $c<\infty$. If this claim does not hold, we could find some $t_0\in(0,1)$ such that $\tilde{\boldsymbol{\Delta}}^{\mathcal{A}_\eta}=t\hat{\boldsymbol{\Delta}}^{\mathcal{A}_\eta}$ satisfying $\|\tilde{\boldsymbol{\Delta}}^{\mathcal{A}_\eta}\|_2\geq c_{\boldsymbol{\Delta}}$ and $\|\tilde{\boldsymbol{\Delta}}^{\mathcal{A}_\eta}\|_2\leq 1$ and as long as $c_{\boldsymbol{\Delta}}\leq 1$. Also, $\|\tilde{\boldsymbol{\Delta}}^{\mathcal{A}_\eta}\|_1\leq4\sqrt{s}\|\tilde{\boldsymbol{\Delta}}^{\mathcal{A}_\eta}\|_2+4\|\hat{\boldsymbol{\mathcal{T}}}^{\mathcal{A}_\eta}\|_1+4C_1\eta.$

Similar to Step 1, define $\hat{D}^{(0)}(\tilde{\boldsymbol{\Delta}}^{\mathcal{A}_\eta})=\boldsymbol{\Delta}^\top\left(\hat{\boldsymbol{U}}_0^\top\hat{\boldsymbol{U}}_0/n_k\right)\boldsymbol{\Delta}$, then we have 
\begin{align*}
    |\hat{D}^{(0)}(\tilde{\boldsymbol{\Delta}}^{\mathcal{A}_\eta})-D^{(0)}(\tilde{\boldsymbol{\Delta}}^{\mathcal{A}_\eta})|&\leq\|\tilde{\boldsymbol{\Delta}}^{\mathcal{A}_\eta}\|_1^2\frac{1}{n_0}\left\|\boldsymbol{U}_0^\top\boldsymbol{U}_0-\hat{\boldsymbol{U}}_0^\top\hat{\boldsymbol{U}}_0\right\|_{\max}\\
    &=O_P\left(\left(s\|\tilde{\boldsymbol{\Delta}}^{\mathcal{A}_\eta}\|_2^2+\|\hat{\boldsymbol{\mathcal{T}}}^{\mathcal{A}_\eta}\|_1^2+\eta^2\right)\left(\frac{\log p}{n_0}+\frac{1}{p}\right)\right).
\end{align*}
Write $\tilde{\boldsymbol{\beta}}^{\mathcal{A}_\eta}=\boldsymbol{\beta}+\tilde{\boldsymbol{\Delta}}^{\mathcal{A}_\eta}$. The optimality of $\hat{\boldsymbol{\beta}}^{\mathcal{A}_\eta}$ implies that $G(\hat{\boldsymbol{\Delta}}^{\mathcal{A}_\eta})\leq0$ with $G(\boldsymbol{t})=\hat{L}^{(0)}(\boldsymbol{\beta}+\boldsymbol{t})-\hat{L}^{(0)}(\boldsymbol{\beta})+\lambda_{\boldsymbol{w}}\|\boldsymbol{\beta}-\hat{\boldsymbol{w}}^{\mathcal{A}_\eta}+\boldsymbol{t}\|_1-\lambda_{\boldsymbol{w}}\|\boldsymbol{\beta}-\hat{\boldsymbol{w}}^{\mathcal{A}_\eta}\|_1$. The convexity of function $G(\cdot)$ and $G(\boldsymbol{0}_p)=0$ yield that $G(\tilde{\boldsymbol{\Delta}}^{\mathcal{A}_\eta})\leq tG(\hat{\boldsymbol{\Delta}}^{\mathcal{A}_\eta})\leq0$. Following a same trick of Step 1, conditional on $\mathcal{E}_{\boldsymbol{\delta}}\cap\mathcal{E}_{\boldsymbol{\delta}}'$, one attains
\begin{align*}
& G(\tilde{\boldsymbol{\Delta}}^{\mathcal{A}_\eta})\geq \nabla\hat{L}^{(0)}(\boldsymbol{\beta})^\top\tilde{\boldsymbol{\Delta}}^{\mathcal{A}_\eta}+\hat{D}^{(0)}(\tilde{\boldsymbol{\Delta}}^{\mathcal{A}_\eta})+\lambda_{\boldsymbol{w}}\|\tilde{\boldsymbol{\beta}}^{\mathcal{A}_\eta}-\hat{\boldsymbol{w}}^{\mathcal{A}_\eta}\|_1-\lambda_{\boldsymbol{w}}\|\boldsymbol{\beta}-\hat{\boldsymbol{w}}^{\mathcal{A}_\eta}\|_1\\
\geq& -\frac{1}{2}\lambda_{\boldsymbol{\delta}}\|\tilde{\boldsymbol{\Delta}}^{\mathcal{A}_\eta}\|_1+\lambda_{\boldsymbol{\delta}}\|\tilde{\boldsymbol{\beta}}^{\mathcal{A}_\eta}-\hat{\boldsymbol{w}}^{\mathcal{A}_\eta}\|_1-\lambda_{\boldsymbol{\delta}}\|\boldsymbol{\beta}-\hat{\boldsymbol{w}}^{\mathcal{A}_\eta}\|_1+\hat{D}^{(0)}(\tilde{\boldsymbol{\Delta}}^{\mathcal{A}_\eta})\\
\geq&-2\lambda_{\boldsymbol{\delta}}\|\boldsymbol{\beta}-\hat{\boldsymbol{w}}^{\mathcal{A}_\eta}\|_1+D^{(0)}(\tilde{\boldsymbol{\Delta}}^{\mathcal{A}_\eta})-|\hat{D}^{(0)}(\tilde{\boldsymbol{\Delta}}^{\mathcal{A}_\eta})-D^{(0)}(\tilde{\boldsymbol{\Delta}}^{\mathcal{A}_\eta})|\\
\geq&-2\lambda_{\boldsymbol{\delta}}(\|\hat{\boldsymbol{\mathcal{T}}}^{\mathcal{A}_\eta}\|_1+C_1\eta)+\kappa_1\|\tilde{\boldsymbol{\Delta}}^{\mathcal{A}_\eta}\|_2^2-\kappa_1\kappa_2\sqrt{\frac{\log p}{n_0}}\|\tilde{\boldsymbol{\Delta}}^{\mathcal{A}_\eta}\|_1\|\tilde{\boldsymbol{\Delta}}^{\mathcal{A}_\eta}\|_2-|\hat{D}^{(0)}(\tilde{\boldsymbol{\Delta}}^{\mathcal{A}_\eta})-D^{(0)}(\tilde{\boldsymbol{\Delta}}^{\mathcal{A}_\eta})|.
\end{align*}
It follows that
$$\frac{1}{2}\kappa_1\|\tilde{\boldsymbol{\Delta}}^{\mathcal{A}_\eta}\|_2^2-\left(4\kappa_1\kappa_2C_1\sqrt{\frac{\log p}{n_0}}\eta+4\kappa_1\kappa_2\sqrt{\frac{\log p}{n_0}}\|\hat{\boldsymbol{\mathcal{T}}}^{\mathcal{A}_\eta}\|_1\right)\|\tilde{\boldsymbol{\Delta}}^{\mathcal{A}_\eta}\|_2-e_\eta'\leq 0$$
\vspace{-0.5mm}
as long as $s\log p/n_0+s/p\lesssim c$ for a constant $c$ small enough, where $e_\eta'=c(\|\hat{\boldsymbol{\mathcal{T}}}^{\mathcal{A}_\eta}\|_1^2+\eta^2)\left(\log p/n_0+1/p\right)+2\lambda_{\boldsymbol{\delta}}\|\hat{\boldsymbol{\mathcal{T}}}^{\mathcal{A}_\eta}\|_1+2\lambda_{\boldsymbol{\delta}}C_1\eta$ for some positive constant $c<\infty$. However, the left-hand side of the inequality would be positive if the claim $\|\hat{\boldsymbol{\Delta}}^{\mathcal{A}_\eta}\|_2\geq c_{\boldsymbol{\Delta}}$ holds. This leads to a contradiction, implying that $\|\hat{\boldsymbol{\Delta}}^{\mathcal{A}_\eta}\|_2\leq c_{\boldsymbol{\Delta}}$, i.e.,
\begin{align*}
    \|\hat{\boldsymbol{\Delta}}^{\mathcal{A}_\eta}\|_2\lesssim&\sqrt{\frac{\log p}{n_0}}\eta+\sqrt{\frac{\log p}{n_0}}\|\hat{\boldsymbol{\mathcal{T}}}^{\mathcal{A}_\eta}\|_1+\sqrt{e_\eta'}\\
    \lesssim&\sqrt{\frac{\log p}{n_0}}\eta+\frac{s\log p}{\sqrt{n_0(n_{\mathcal{A}_\eta}+n_0)}}+\left(\frac{\log p}{n_0}\right)^{\frac{1}{4}}\sqrt{\eta}+\frac{\eta}{\sqrt{p}}+\frac{s}{\sqrt{p}}\sqrt{\frac{\log p}{n_{\mathcal{A}_\eta}+n_0}}
\end{align*}
as long as $s(|\mathcal{A}_\eta|+1)\log p/(n_{\mathcal{A}_\eta}+n_0)=O(1)$. Furthermore, assume that $(s\log p/n_0)^{\frac{1}{2}}=O(1)$, we have
\vspace{-1.5mm}
\begin{align*}
    \|\hat{\boldsymbol{\Delta}}^{\mathcal{A}_\eta}\|_2\lesssim\sqrt{\frac{\log p}{n_0}}\eta+\left(\frac{\log p}{n_0}\right)^{\frac{1}{4}}\sqrt{\eta}+\frac{\eta}{\sqrt{p}}+\sqrt{\frac{s\log p}{n_{\mathcal{A}_\eta}+n_0}}.
\end{align*}
Meanwhile, 
\begin{align*}
    \|\hat{\boldsymbol{\Delta}}^{\mathcal{A}_\eta}\|_1&\leq4\sqrt{s}\|\hat{\boldsymbol{\Delta}}^{\mathcal{A}_\eta}\|_2+4\|\hat{\boldsymbol{\mathcal{T}}}^{\mathcal{A}_\eta}\|_1+4C_1\eta\lesssim s\sqrt{\frac{\log p}{n_{\mathcal{A}_\eta}+n_0}}+\eta+\sqrt{s\eta}\left(\frac{\log p}{n_0}\right)^{\frac{1}{4}}
\end{align*}
It remains to show that the probability of $\mathcal{E}_{\boldsymbol{\delta}}\cap\mathcal{E}_{\boldsymbol{\delta}}'$ occurs with high probability. Lemma \ref{lemmaA3} ensures that $\mathcal{E}_{\boldsymbol{w}}'$ occurs with high probability. Lemma \ref{lemmaA5} ensures that $\mathcal{E}_{\boldsymbol{w}}$ occurs with high probability under conditions \eqref{eq3.5} and \eqref{eq3.6}.
\end{proof}
\vspace{-3mm}
\subsection{Proof of Theorem \ref{thm2}}\label{secA2}
\begin{proof}
We write $\hat{\boldsymbol{U}}_{0}^{[r]}=\boldsymbol{e}^{(0)[r]\top}\hat{\boldsymbol{U}}_{0}$, and $L_0^{[r]}(\boldsymbol{w})=\mathbb{E}\left[\frac{1}{n_0/3}\|\boldsymbol{e}^{(0)[r]\top}(\boldsymbol{Y}^{(0)}-\boldsymbol{U}_0\boldsymbol{w}-\boldsymbol{F}_0\boldsymbol{\gamma}^{(k)})\|_2^2\right]$
for convenience. First, we consider $k\in\mathcal{A}_{\eta}$. For $r\in\{1,2,3\}$, $\hat{L}_{0}^{[r]}(\hat{\boldsymbol{w}}^{(k)[r]})-\hat{L}_{0}^{[r]}(\hat{\boldsymbol{\beta}}^{[r]})$ can be decomposed as
    \begin{align*}
        \hat{L}_{0}^{[r]}(\hat{\boldsymbol{w}}^{(k)[r]})-\hat{L}_{0}^{[r]}(\hat{\boldsymbol{\beta}}^{[r]})=&\{\hat{L}_{0}^{[r]}(\hat{\boldsymbol{w}}^{(k)[r]})-\hat{L}_{0}^{[r]}(\boldsymbol{\beta})-L_{0}^{[r]}(\hat{\boldsymbol{w}}^{(k)[r]})+L_{0}^{[r]}(\boldsymbol{\beta})\}+\{L_{0}^{[r]}(\hat{\boldsymbol{w}}^{(k)[r]})-L_{0}^{[r]}(\boldsymbol{\beta})\}\\
        &+\{\hat{L}_{0}^{[r]}(\hat{\boldsymbol{\beta}}^{[r]})-\hat{L}_{0}^{[r]}(\boldsymbol{\beta})-L_{0}^{[r]}(\hat{\boldsymbol{\beta}}^{[r]})+L_{0}^{[r]}(\boldsymbol{\beta})\}+\{L_{0}^{[r]}(\hat{\boldsymbol{\beta}}^{[r]})-L_{0}^{[r]}(\boldsymbol{\beta})\}\\
        =&I_1+I_2+I_3+I_4.
    \end{align*}
    \vspace{-2mm}
    For $I_2$, we have $$|I_2|=|L_{0}^{[r]}(\hat{\boldsymbol{w}}^{(k)[r]})-L_{0}^{[r]}(\boldsymbol{\beta})|=|\frac{1}{2}(\hat{\boldsymbol{w}}^{(k)[r]}-\boldsymbol{\beta})^\top\mathbb{E}\left[\frac{\hat{\boldsymbol{U}}_{0}^{[r]\top}\hat{\boldsymbol{U}}_{0}^{[r]}}{n_0/3}\right](\hat{\boldsymbol{w}}^{(k)[r]}-\boldsymbol{\beta})|\lesssim\|\hat{\boldsymbol{w}}^{(k)[r]}-\boldsymbol{\beta}\|_2^2$$ by the definition of $\boldsymbol{\beta}$. Similarly, we have $|I_4|\lesssim\|\hat{\boldsymbol{\beta}}^{[r]}-\boldsymbol{\beta}\|_2^2$. By Lemma \ref{lemmaA7}, we have 
    \begin{align*}
        \max\{|I_1|,|I_3|\}\lesssim&
        \sqrt{\frac{\log p}{n_0}}(\|\hat{\boldsymbol{\beta}}^{[r]}-\boldsymbol{\beta}\|_1\vee\|\hat{\boldsymbol{w}}^{(k)[r]}-\boldsymbol{\beta}\|_1\vee\|\hat{\boldsymbol{\beta}}^{[r]}-\boldsymbol{\beta}\|_2^2\vee\|\hat{\boldsymbol{w}}^{(k)[r]}-\boldsymbol{\beta}\|_2^2)\\
        &+\frac{\log p}{n_0}(\|\hat{\boldsymbol{\beta}}^{[r]}-\boldsymbol{\beta}\|_1^2\vee\|\hat{\boldsymbol{w}}^{(k)[r]}-\boldsymbol{\beta}\|_1^2).
    \end{align*}
    By a similar proof of Theorem \ref{thm1}, we have $\|\hat{\boldsymbol{\beta}}^{[r]}-\boldsymbol{\beta}\|_2\lesssim \sqrt{\frac{s\log p}{n_0}},\|\hat{\boldsymbol{\beta}}^{[r]}-\boldsymbol{\beta}\|_1\lesssim s\sqrt{\frac{\log p}{n_0}},$ and
    \vspace{-1mm}
    \begin{align*}
        \|\hat{\boldsymbol{w}}^{(k)[r]}-\boldsymbol{\beta}\|_2&\lesssim \sqrt{\frac{\log p}{n_0}}\eta+\left(\frac{\log p}{n_0}\right)^{\frac{1}{4}}\sqrt{\eta}+\frac{\eta}{\sqrt{p}}+\sqrt{\frac{s\log p}{n_0+n_k}},\\
        \|\hat{\boldsymbol{w}}^{(k)[r]}-\boldsymbol{\beta}\|_1&\lesssim s\sqrt{\frac{\log p}{n_0+n_k}}+\eta+\sqrt{s\eta}\left(\frac{\log p}{n_0}\right)^{\frac{1}{4}}
    \end{align*}
    \vspace{-1mm}
    under Assumptions \ref{assum1}--\ref{assum4}. Hence, as long as $\eta=O(1)$, we conclude that 
    \vspace{-1mm}
    \begin{align*}
        |I_1|+|I_2|+|I_3|+|I_4|\lesssim&\frac{s\log p}{n_0}+\eta\sqrt{\frac{\log p}{n_0}}+\frac{\eta^2}{p}+\left(s\sqrt{\frac{\log p}{n_0}}+ \eta\right)\sqrt{\frac{\log p}{n_0}}+\frac{\log p}{n_0}\left(s\sqrt{\frac{\log p}{n_0}}+\eta\right)^2\\
        \lesssim&\frac{s\log p}{n_0}+\eta\sqrt{\frac{\log p}{n_0}}+\frac{\eta^2}{p}+s^2\left(\frac{\log p}{n_0}\right)^{\frac{3}{2}}.
    \end{align*}
Therefore, under Assumption \ref{assum4} (d), we have
    \begin{align*}
        \mathbb{P}&\left(\hat{L}_{0}^{[r]}(\hat{\boldsymbol{w}}^{(k)[r]})\geq\hat{L}_{0}^{[r]}(\hat{\boldsymbol{\beta}}^{[r]})+\epsilon_0\hat{\sigma}^2\right)\leq\mathbb{P}\left(\epsilon_0\hat{\sigma}^2\leq|I_1|+|I_2|+|I_3|+|I_4|\right)\\
        &\leq\mathbb{P}\left(\epsilon_0\hat{\sigma}^2\lesssim\frac{s\log p}{n_0}+\eta\sqrt{\frac{\log p}{n_0}}+\frac{\eta^2}{p}+s^2\left(\frac{\log p}{n_0}\right)^{\frac{3}{2}}\right)\to0.
    \end{align*}
Now we term to $k\in\mathcal{A}_{\eta}^{c}$. Let 
\vspace{-0.5mm}
\begin{small}
\begin{align*}
    \boldsymbol{w}^{(0,k)}=\mathop{\arg\min}_{\boldsymbol{w}\in\mathbb{R}^p}\mathbb{E}\Bigg[\frac{1}{2n_0/3+n_k}\Big(&\|(\boldsymbol{1}_{n_0}-\boldsymbol{e}^{(0)[r]})^\top(\boldsymbol{Y}^{(0)}-\boldsymbol{U}_0\boldsymbol{w}-\boldsymbol{F}_0\boldsymbol{\gamma}^{(k)})\|_2^2+\|\boldsymbol{Y}^{(k)}-\boldsymbol{U}_k\boldsymbol{w}-\boldsymbol{F}_k\boldsymbol{\gamma}^{(k)}\|_2^2\Big)\Bigg].
\end{align*}
\end{small}
For $r\in\{1,2,3\}$, $\hat{L}_{0}^{[r]}(\hat{\boldsymbol{w}}^{(k)[r]})-\hat{L}_{0}^{[r]}(\hat{\boldsymbol{\beta}}^{[r]})$ can be decomposed as
    \begin{align*}
        &\hat{L}_{0}^{[r]}(\hat{\boldsymbol{w}}^{(k)[r]})-\hat{L}_{0}^{[r]}(\hat{\boldsymbol{\beta}}^{[r]})\\
        =&\{\hat{L}_{0}^{[r]}(\hat{\boldsymbol{w}}^{(k)[r]})-\hat{L}_{0}^{[r]}(\boldsymbol{w}^{(0,k)})-L_{0}^{[r]}(\hat{\boldsymbol{w}}^{(k)[r]})+L_{0}^{[r]}(\boldsymbol{w}^{(0,k)})\}+\{L_{0}^{[r]}(\hat{\boldsymbol{w}}^{(k)[r]})-L_{0}^{[r]}(\boldsymbol{w}^{(0,k)})\}\\
        &+\{\hat{L}_{0}^{[r]}(\boldsymbol{w}^{(0,k)})-\hat{L}_{0}^{[r]}(\boldsymbol{\beta})-L_{0}^{[r]}(\boldsymbol{w}^{(0,k)})+L_{0}^{[r]}(\boldsymbol{\beta})\}+\{L_{0}^{[r]}(\boldsymbol{w}^{(0,k)})-L_{0}^{[r]}(\boldsymbol{\beta})\}\\
        &+\{\hat{L}_{0}^{[r]}(\hat{\boldsymbol{\beta}}^{[r]})-\hat{L}_{0}^{[r]}(\boldsymbol{\beta})-L_{0}^{[r]}(\hat{\boldsymbol{\beta}}^{[r]})+L_{0}^{[r]}(\boldsymbol{\beta})\}+\{L_{0}^{[r]}(\hat{\boldsymbol{\beta}}^{[r]})-L_{0}^{[r]}(\boldsymbol{\beta})\}\\
        =&E_1+E_2+E_3+E_4+E_5+E_6.
    \end{align*}
Similar to the proof for $k\in\mathcal{A}_\eta$ case, we have 
\begin{align*}
    &\max\{|E_1|,|E_3|,|E_5|\}\\
    \lesssim&\left(\sqrt{\frac{\log p}{n_0}}(\|\hat{\boldsymbol{w}}^{(k)[r]}-\boldsymbol{w}^{(0,k)}\|_1\vee\sqrt{\frac{\log p}{n_0}}\|\boldsymbol{w}^{(0,k)}-\boldsymbol{\beta}\|_1\vee1\right)\|\hat{\boldsymbol{w}}^{(k)[r]}-\boldsymbol{w}^{(0,k)}\|_1\sqrt{\frac{\log p}{n_0}}\\
    &+\left(\sqrt{\frac{\log p}{n_0}}\|\boldsymbol{w}^{(0,k)}-\boldsymbol{\beta}\|_1\vee1\right)\|\boldsymbol{w}^{(0,k)}-\boldsymbol{\beta}\|_1\sqrt{\frac{\log p}{n_0}}+\left(\sqrt{\frac{\log p}{n_0}}\|\hat{\boldsymbol{\beta}}^{[r]}-\boldsymbol{\beta}\|_1\vee1\right)\|\hat{\boldsymbol{\beta}}^{[r]}-\boldsymbol{\beta}\|_1\sqrt{\frac{\log p}{n_0}}\\
    &+\sqrt{\frac{\log p}{n_0}}(\|\hat{\boldsymbol{w}}^{(k)[r]}-\boldsymbol{w}^{(0,k)}\|_2^2\vee\|\boldsymbol{w}^{(0,k)}-\boldsymbol{\beta}\|_2^2\vee\|\hat{\boldsymbol{\beta}}^{[r]}-\boldsymbol{\beta}\|_2^2)
\end{align*}
and $|E_6|\lesssim\|\hat{\boldsymbol{\beta}}^{[r]}-\boldsymbol{\beta}\|_2^2$. For $E_2$, we have
\begin{align*}
    |E_2|&\leq|\mathbb{E}[\nabla L_{0}^{[r]}(\boldsymbol{w}^{(0,k)})]^\top(\hat{\boldsymbol{w}}^{(k)[r]}-\boldsymbol{w}^{(0,k)})+\frac{1}{2}v_0\|\hat{\boldsymbol{w}}^{(k)[r]}-\boldsymbol{w}^{(0,k)}\|_2^2\\
    &\leq v_0\|\boldsymbol{w}^{(0,k)}-\boldsymbol{\beta}\|_2\|\hat{\boldsymbol{w}}^{(k)[r]}-\boldsymbol{w}^{(0,k)}\|_2+\frac{1}{2}v_0\|\hat{\boldsymbol{w}}^{(k)[r]}-\boldsymbol{w}^{(0,k)}\|_2^2.
\end{align*}
For $E_4$, we obtain $E_4\geq\lambda_{\min}(\boldsymbol{\Sigma}_{\boldsymbol{u}}^{(0)})\|\boldsymbol{w}^{(0,k)}-\boldsymbol{\beta}\|_2^2:=\underline{\kappa}\|\boldsymbol{w}^{(0,k)}-\boldsymbol{\beta}\|_2^2$.
Assumption \ref{assum4} ensures that we have 
\begin{align*}
    \|\hat{\boldsymbol{w}}^{(k)[r]}-\boldsymbol{w}^{(0,k)}\|_2&\lesssim \sqrt{\frac{\log p}{n_0}}\tilde{\eta}+\left(\frac{\log p}{n_0}\right)^{\frac{1}{4}}\sqrt{\tilde{\eta}}+\frac{\tilde{\eta}}{\sqrt{p}}+\sqrt{\frac{s\log p}{n_0+n_k}},\\
        \|\hat{\boldsymbol{w}}^{(k)[r]}-\boldsymbol{w}^{(0,k)}\|_1&\lesssim s\sqrt{\frac{\log p}{n_0+n_k}}+\tilde{\eta}+\sqrt{s\tilde{\eta}}\left(\frac{\log p}{n_0}\right)^{\frac{1}{4}}
\end{align*}
with probability at least $1-C\exp(-Cn_0)$. Moreover, by a similar proof of Lemma \ref{lemmaA1}, we have $\|\boldsymbol{w}^{(0,k)}-\boldsymbol{\beta}\|_2\leq\|\boldsymbol{w}^{(0,k)}-\boldsymbol{\beta}\|_1\lesssim\tilde{\eta}.$ Therefore, we have
\begin{align*}
    &|E_1|+|E_2|+|E_3|+|E_5|+|E_6|
    \lesssim\frac{s\log p}{n_0}+\sqrt{\frac{\log p}{n_0}}(\tilde{\eta}\vee\tilde{\eta}^2)+\tilde{\eta}^{\frac{3}{2}}\left(\frac{\log p}{n_0}\right)^{\frac{1}{4}}+\tilde{\eta}\sqrt{\frac{s\log p}{n_0+n_k}}.
\end{align*}

Hence, we conclude that for $k\in\mathcal{A}_\eta^c$, 
\begin{align*}
        \mathbb{P}&\left(\hat{L}_{0}^{[r]}(\hat{\boldsymbol{w}}^{(k)[r]})\leq\hat{L}_{0}^{[r]}(\hat{\boldsymbol{\beta}}^{[r]})+\epsilon_0\hat{\sigma}^2\right)\leq\mathbb{P}\left(E_4\leq\epsilon_0\hat{\sigma}^2+|E_1|+|E_2|+|E_3|+|E_5|+|E_6|\right).
    \end{align*}
    As long as
    $$\inf_{k\in\mathcal{A}_\eta^c}\|\boldsymbol{w}^{(k)}-\boldsymbol{\beta}\|_2^2\gtrsim\epsilon_0\vee \left(\frac{s\log p}{n_0}+\sqrt{\frac{\log p}{n_0}}(\tilde{\eta}\vee\tilde{\eta}^2)+\tilde{\eta}^{\frac{3}{2}}\left(\frac{\log p}{n_0}\right)^{\frac{1}{4}}+\tilde{\eta}\sqrt{\frac{s\log p}{n_0+n_k}}\right),$$ the probability approaches $1$. Note that by union bounds, we have
    \begin{align*}
        \mathbb{P}(\widehat{\mathcal{A}}\neq\mathcal{A}_\eta)\leq&\sum_{r=1}^{3}\sum_{k\in\mathcal{A}_\eta}\mathbb{P}\left(\hat{L}_{0}(\hat{\boldsymbol{w}}^{(k)[r]})\geq\hat{L}_{0}(\hat{\boldsymbol{\beta}}^{[r]})+\epsilon_0\hat{\sigma}^2\right)\\
        &+\sum_{r=1}^{3}\sum_{k\in\mathcal{A}_\eta^c}\mathbb{P}\left(\hat{L}_{0}^{[r]}(\hat{\boldsymbol{w}}^{(k)[r]})\leq\hat{L}_{0}^{[r]}(\hat{\boldsymbol{\beta}}^{[r]})+\epsilon_0\hat{\sigma}^2\right).
    \end{align*}
    
    Here, the probability above tends to $0$ since we can change all the error bounds previously analyzed that contained $\log p$ into $\log (Kp)\lesssim(\log K)\vee(\log p)$ and we assume that $K\lesssim p$ without loss of generality.
\end{proof}
\subsection{Proof of Theorem \ref{thm3}}\label{secA3}
\begin{proof}
Before the proof, the following example illustrates the specific forms of $\Delta_1$, $\Delta_{\max}$, and $\Delta_{\infty}$ on Assumption \ref{assum5}.
\begin{example}[{\cite[Proposition E.1]{fan2024latent}}]
 We have 
\begin{align*}
\|\boldsymbol{I}_p-\hat{\boldsymbol{\Theta}}\hat{\boldsymbol{\Sigma}}_{\boldsymbol{u}}^{(0)}\|_{\max}&=O_P\left(\sqrt{\frac{\log p}{n_0}}+\frac{1}{\sqrt{p}}+\max_{j\in[p]}\frac{\mathcal{V}_{n_0,p}}{n_0}\|\boldsymbol{B}_0^\top\boldsymbol{\omega}_j^{(0)}\|_2\right),\\
\|\hat{\boldsymbol{\Theta}}-\boldsymbol{\Theta}\|_{\max}&=O_P\left(\max_{j\in[p]}\sqrt{|\mathcal{S}_j|\left(\frac{\log p}{n_0}+\frac{1}{p}\right)}+\max_{j\in[p]}\frac{\mathcal{V}_{n_0,p}}{n_0}\sqrt{|\mathcal{S}_j|}\|\boldsymbol{B}_0^\top\boldsymbol{\omega}_j^{(0)}\|_2\right),\\
\|\hat{\boldsymbol{\Theta}}-\boldsymbol{\Theta}\|_{\infty}&=O_P\left(\max_{j\in[p]}|\mathcal{S}_j|\sqrt{\frac{\log p}{n_0}+\frac{1}{p}}+\max_{j\in[p]}\frac{\mathcal{V}_{n_0,p}}{n_0}|\mathcal{S}_j|\|\boldsymbol{B}_0^\top\boldsymbol{\omega}_j^{(0)}\|_2\right).
\end{align*}
\end{example}
First, we have the following decomposition:
\vspace{-0.5mm}
\begin{align}\label{eqA3}
    \tilde{\boldsymbol{\beta}}-\boldsymbol{\beta}=\frac{1}{n_0}\hat{\boldsymbol{\Theta}}\hat{\boldsymbol{U}}_0^\top\boldsymbol{\mathcal{E}}^{(0)}+\frac{1}{n_0}\hat{\boldsymbol{\Theta}}\hat{\boldsymbol{U}}_0^\top\boldsymbol{F}_0\boldsymbol{\mathcal{\varphi}}^{(0)}+(\boldsymbol{I}_p-\hat{\boldsymbol{\Theta}}\hat{\boldsymbol{\Sigma}}_{\boldsymbol{u}})(\hat{\boldsymbol{\beta}}-\boldsymbol{\beta}).
\end{align}
The last term in \eqref{eqA3} can be bounded as 
    $$\|(\boldsymbol{I}_p-\hat{\boldsymbol{\Theta}}\hat{\boldsymbol{\Sigma}}_{\boldsymbol{u}})(\hat{\boldsymbol{\beta}}-\boldsymbol{\beta})\|_{\max}\leq\|\boldsymbol{I}_p-\hat{\boldsymbol{\Theta}}\hat{\boldsymbol{\Sigma}}_{\boldsymbol{u}}\|_{\max}\|\hat{\boldsymbol{\beta}}-\boldsymbol{\beta}\|_1=\Delta_1\mathcal{R}_1,$$
    \vspace{-0.5mm}
    where $$\mathcal{R}_1=\sqrt{\frac{\log p}{n_0}}\eta+\left(\frac{\log p}{n_0}\right)^{\frac{1}{4}}\sqrt{\eta}+\frac{\eta}{\sqrt{p}}+\sqrt{\frac{s\log p}{n_{\mathcal{A}_\eta}+n_0}}$$ is the $\ell_1$ error bound established in Theorem \ref{thm1}.
    Additionally, we have
    $$\|\frac{1}{n_0}\hat{\boldsymbol{\Theta}}\hat{\boldsymbol{U}}_0^\top\boldsymbol{F}_0\boldsymbol{\mathcal{\varphi}}^{(0)}\|_\infty=O_P(\mathcal{V}_{n_0,p}\|\boldsymbol{\Theta}\|_\infty\|\boldsymbol{\mathcal{\varphi}}^{(0)}\|_2/n_0),$$
    by Lemma \ref{lemmaA6} and
    \begin{align*}
        \frac{1}{n_0}\|\hat{\boldsymbol{\Theta}}\hat{\boldsymbol{U}}_0^\top\boldsymbol{\mathcal{E}}^{(0)}-\boldsymbol{\Theta}\boldsymbol{U}_0^\top\boldsymbol{\mathcal{E}}^{(0)}\|_\infty&\leq\frac{1}{n_0}\|\hat{\boldsymbol{\Theta}}\|_\infty\|(\hat{\boldsymbol{U}}_0-\boldsymbol{U}_0)^\top\boldsymbol{\mathcal{E}}^{(0)}\|_\infty+\frac{1}{n_0}\|\hat{\boldsymbol{\Theta}}-\boldsymbol{\Theta}\|_\infty\|\boldsymbol{U}_0^\top\boldsymbol{\mathcal{E}}^{(0)}\|_\infty\\
        &=O_P\left(\frac{\sqrt{\log p}}{n_0}\|\boldsymbol{\Theta}\|_\infty+\Delta_\infty\sqrt{\frac{\log p}{n_0}}\right).
    \end{align*}
    \vspace{-0.5mm}
    Hence, we have $\left\|\sqrt{n_0}(\tilde{\boldsymbol{\beta}}-\boldsymbol{\beta})-\boldsymbol{\Theta}\boldsymbol{U}_0^\top\boldsymbol{\mathcal{E}}^{(0)}/\sqrt{n_0}\right\|_\infty=O_P(\Delta_{n_0,p})$, where
    \begin{align*}
    \Delta_{n_0,p}&=\sqrt{n_0}\Delta_1\mathcal{R}_1+\frac{\mathcal{V}_{n_0,p}\|\boldsymbol{\Theta}\|_\infty\|\boldsymbol{\mathcal{\varphi}}^{(0)}\|_2}{\sqrt{n_0}}+\sqrt{\frac{\log p}{n_0}}\|\boldsymbol{\Theta}\|_\infty+\Delta_\infty\sqrt{\log p}=o\left(1\right).
\end{align*}
\vspace{-0.5mm}
Additionally, since $\lambda_{\max}(\boldsymbol{\Sigma}_{\boldsymbol{u}}^{(0)})\leq\|\boldsymbol{\Sigma}_{\boldsymbol{u}}^{(0)}\|_1\leq1/\iota$, we have $\min_{j\in[p]}\boldsymbol{\Theta}_{j,j}\geq\lambda_{\min}(\boldsymbol{\Theta})\geq\iota$. Then, $\{\boldsymbol{\Theta}_j\boldsymbol{u}_i^{(0)}\mathcal{E}_i^{(0)}\}_{i=1}^{n_0}$ are i.i.d. zero-mean sub-exponential random variables with $\sigma^2\boldsymbol{\Theta}_{j,j}\geq\sigma^2\iota$ and $$\max_{j\in[p]}\|\boldsymbol{\Theta}_j\boldsymbol{u}_i^{(0)}\mathcal{E}_i^{(0)}\|_{\psi_1}\leq\frac{\|\boldsymbol{u}_i^{(0)}\|_{\psi_1}\|\mathcal{E}_i^{(0)}\|_{\psi_1}}{\iota}.$$ By Lemma \ref{LemmaA8}, we have
\vspace{-0.5mm}
$$\lim_{n\to\infty}\sup_{t\in\mathbb{R}}\left|\mathbb{P}\left(\left\|\boldsymbol{\Theta}\boldsymbol{U}_0^\top\boldsymbol{\mathcal{E}}^{(0)}/\sqrt{n_0}\right\|_\infty\leq t\right)-\mathbb{P}(\|\boldsymbol{Q}\|_\infty\leq t)\right|=0$$
\vspace{-0.5mm}
By Nazarov’s inequality \cite[Proposition 1]{chernozhukov2023high}, since the variance of each element in $\boldsymbol{Q}$ is bounded away from $0$ and $\infty$ by assumptions, we can infer that
\vspace{-0.5mm}
$$\sup_{t\in\mathbb{R}}\mathbb{P}\left(|\|\boldsymbol{Q}\|_\infty-t|<\delta\log^{-1/2} (p)\right)\leq C\delta\log^{-1/2} (p)\{\sqrt{2\log p}+2\}.$$
\vspace{-0.5mm}
By choosing a $\delta$ small enough, the right hand side of the inequality can be arbitrarily small. From Theorem \ref{thm4}, the remainder is $o_P(1)$, and thus applying Lemma \ref{LemmaA9}, $$\sup_{t\in\mathbb{R}}\left|\mathbb{P}\left(\|\sqrt{n_0}(\tilde{\boldsymbol{\beta}}-\boldsymbol{\beta})\|_\infty\leq t\right)-\mathbb{P}(\|\boldsymbol{Q}\|_\infty\leq t)\right|=o(1).$$
\end{proof}
\vspace{-3mm}
\subsection{Proof of Theorem \ref{thm4}}\label{secA4}
\begin{proof}
    It is clear that conditional on the data $\mathcal{D}$, one has $\hat{\boldsymbol{Q}}^{e}\sim\mathbb{N}(\mathbf{0}_d,\hat{\boldsymbol{\Upsilon}})$, where $\hat{\boldsymbol{\Upsilon}}=\hat{\sigma}^2\hat{\boldsymbol{\Theta}}\hat{\boldsymbol{\Sigma}}_{\boldsymbol{u}}^{(0)}\hat{\boldsymbol{\Theta}}$ and $\hat{\boldsymbol{\Sigma}}_{\boldsymbol{u}}^{(0)}=n_0^{-1}\hat{\boldsymbol{U}}_0^\top\hat{\boldsymbol{U}}_0$. Denote $\boldsymbol{\Upsilon}=\sigma^2\boldsymbol{\Theta}$, now we turn to analyze $\|\sigma^2\boldsymbol{\Theta}-\hat{\boldsymbol{\Upsilon}}\|_\infty$.
    \begin{align*}
        \|\boldsymbol{\Upsilon}-\hat{\boldsymbol{\Upsilon}}\|_{\max}&\leq\hat{\sigma}^2\|\hat{\boldsymbol{\Theta}}\|_\infty
        \|\boldsymbol{I}_p-\hat{\boldsymbol{\Theta}}\hat{\boldsymbol{\Sigma}}_{\boldsymbol{u}}^{(0)}\|_{\max}+\hat{\sigma}^2
        \|\hat{\boldsymbol{\Theta}}-\boldsymbol{\Theta}\|_\infty\|\hat{\boldsymbol{\Theta}}\|_{\max}+|\sigma^2-\hat{\sigma}^2|\|\boldsymbol{\Theta}\|_{\max}\\
        &=O_P(\Delta_1\|\boldsymbol{\Theta}\|_\infty+\Delta_{\infty}+\Delta_{\sigma})
    \end{align*}
    Thus, $\log p\|\sigma^2\boldsymbol{\Theta}-\hat{\boldsymbol{\Upsilon}}\|_{\max}=o_P(1)$ by assumptions. Since the smallest eigenvalue of $\sigma^2\boldsymbol{\Theta}$ is bounded away from $0$, by the Gaussian comparision result \citep[Lemma 2.1]{chernozhukov2023nearly}, it can be derived that
$$\sup_{t\in\mathbb{R}}\left|\mathbb{P}\left(\|\hat{\boldsymbol{Q}}^e\|_\infty\leq t|\mathcal{D}\right)-\mathbb{P}\left(\|\boldsymbol{Q}\|_\infty\leq t\right)\right|=o(1).$$
\vspace{-2mm}
\end{proof}
\vspace{-2mm}
\subsection{Proof of Corollary \ref{cor1}}\label{secA5}
\begin{proof}
    The proof follows from the proof of Theorems \ref{thm3}--\ref{thm4} by considering the subset $\mathcal{G}$.
\end{proof}
\vspace{-2mm}
\subsection{Proof of Theorem \ref{thm5}}\label{secA6}
\begin{proof}
    Firstly, denote $\boldsymbol{S}=\mbox{diag}(\boldsymbol{\Theta})$ and $\hat{\boldsymbol{S}}=\mbox{diag}(\hat{\boldsymbol{\Theta}})$, it follows that $$\|\hat{\boldsymbol{S}}-\boldsymbol{S}\|_{\infty}\leq\|\boldsymbol{\Theta}-\hat{\boldsymbol{\Theta}}\|_{\infty}=O_P(\Delta_{\infty}).$$
    Since $\iota\leq\lambda_{\min}(\boldsymbol{\Theta})\leq\min_{j\in[p]}\boldsymbol{\Theta}_{j,j}\leq\max_{j\in[p]}\boldsymbol{\Theta}_{j,j}\leq\lambda_{\max}(\boldsymbol{\Theta})\leq\iota^{-1}$, which implies that the diagonal elements of $\boldsymbol{\Theta}$ are bounded away from $0$, it can be inferred that $\|\hat{\boldsymbol{S}}^{-1/2}-\boldsymbol{S}^{-1/2}\|_{\infty}=O_P(\Delta_{\infty}).$ Then it can be concluded that
    \begin{align*}
    &\|\sqrt{n_0}\hat{\boldsymbol{S}}^{-1/2}\{\tilde{\boldsymbol{\beta}}-\boldsymbol{\beta}\}-\boldsymbol{S}^{-1/2}\boldsymbol{\Theta}\boldsymbol{U}_0^\top\boldsymbol{\mathcal{E}}^{(0)}/\sqrt{n_0}\|_{\mathcal{G}}\\
    \leq&\|\hat{\boldsymbol{S}}^{-1/2}-\boldsymbol{S}^{-1/2}\|_{\infty}\|\boldsymbol{\Theta}\|_{\infty}\|\boldsymbol{U}_0^\top\boldsymbol{\mathcal{E}}^{(0)}\|_{\max}/\sqrt{n_0}+\sqrt{n_0}\|\hat{\boldsymbol{S}}^{-1/2}\|_{\infty}\|\tilde{\boldsymbol{\beta}}-\boldsymbol{\beta}-\boldsymbol{\Theta}\boldsymbol{U}_0^\top\boldsymbol{\mathcal{E}}^{(0)}/n_0\|_\infty\\
    =&O_P\left(\Delta_{\infty}\|\boldsymbol{\Theta}\|_{\infty}\sqrt{\frac{\log p}{n_0}}+\Delta_{n_0,p}\right),
    \end{align*}
    where $\Delta_{n_0,p}$ is defined in the proof of Theorem \ref{thm3}. The error bound above is $o_P(1)$ by assumptions. Let $\boldsymbol{Q}^{stu}$ be a $p$-dimensional Gaussian random vector with mean $\mathbf{0}$ and variance $\sigma^2\boldsymbol{S}^{-1/2}\boldsymbol{\Theta}\boldsymbol{S}^{-1/2}$. Similar to the proof of Theorem \ref{thm3}, it can be verified that
    \vspace{-1mm} $$\sup_{t\in\mathbb{R}}\left|\mathbb{P}\left(\|\sqrt{n}\hat{\boldsymbol{S}}^{-1/2}(\tilde{\boldsymbol{\beta}}-\boldsymbol{\beta})\|_{\mathcal{G}}\leq t\right)-\mathbb{P}(\|\boldsymbol{Q}^{stu}\|_{\mathcal{G}}\leq t)\right|=o(1).$$
    Following the proof and notation of Theorem \ref{thm4}, we have $\hat{\boldsymbol{Q}}^{e}_{stu}\sim\mathbb{N}(\boldsymbol{0}_p,\hat{\boldsymbol{S}}^{-1/2}\hat{\boldsymbol{\Upsilon}}\hat{\boldsymbol{S}}^{-1/2})$. Since $\hat{\boldsymbol{S}}^{-1/2}$ and $\boldsymbol{S}^{-1/2}$ are bounded diagonal matrices, it can be deduced that 
    \vspace{-1mm}
    \begin{align*}
        \|\hat{\boldsymbol{S}}^{-1/2}&\hat{\boldsymbol{\Upsilon}}\hat{\boldsymbol{S}}^{-1/2}-\boldsymbol{S}^{-1/2}\boldsymbol{\Upsilon}\boldsymbol{S}^{-1/2}\|_{\max}\leq\|\hat{\boldsymbol{S}}^{-1/2}-\boldsymbol{S}^{-1/2}\|_{\max}\|\hat{\boldsymbol{\Upsilon}}\|_{\max}\|\hat{\boldsymbol{S}}^{-1/2}\|_{\max}\\
        &+\|\boldsymbol{S}^{-1/2}\|_{\max}\|\hat{\boldsymbol{\Upsilon}}-\hat{\boldsymbol{\Upsilon}}\|_{\max}\|\hat{\boldsymbol{S}}^{-1/2}\|_{\max}+\|\boldsymbol{S}^{-1/2}\|_{\max}\|\boldsymbol{\Upsilon}\|_{\max}\|\hat{\boldsymbol{S}}^{-1/2}-\boldsymbol{S}^{-1/2}\|_{\max}\\
        =&O_P(\Delta_1\|\boldsymbol{\Theta}\|_\infty+\Delta_{\infty}+\Delta_{\sigma})=o_P\left((\log p)^{-1}\right)
    \end{align*}
    \vspace{-2mm}
    by assumptions. Similar to the proof of Theorem \ref{thm4}, we obtain that 
    $$\sup_{t\in\mathbb{R}}\left|\mathbb{P}\left(\|\hat{\boldsymbol{Q}}_{stu}^{e}\|_{\mathcal{G}}\leq t\right)-\mathbb{P}(\|\boldsymbol{Q}^{stu}\|_{\mathcal{G}}\leq t)\right|=o(1).$$
\end{proof}
\renewcommand{\thetheorem}{A\arabic{theorem}}
\setcounter{theorem}{0}
\renewcommand{\thetheorem}{B\arabic{theorem}}
\setcounter{theorem}{0}
\section{Proof of Auxiliary Lemmas}\label{secB}
\begin{lemma}\label{lemmaA1}
    Under Assumption \ref{assum3}, we have $\|\boldsymbol{\boldsymbol{\delta}}^{\mathcal{A}_\eta}\|_1=\|\boldsymbol{w}^{\mathcal{A}_\eta}-\boldsymbol{\beta}\|_1\leq C_1\eta$.
\end{lemma}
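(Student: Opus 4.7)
The plan is to exploit the explicit closed-form expression for $\boldsymbol{w}^{\mathcal{A}_\eta}$ derived earlier in Section \ref{sec2},
\[
\boldsymbol{w}^{\mathcal{A}_\eta} = (\boldsymbol{\Sigma}_{\boldsymbol{u}}^{\mathcal{A}_\eta})^{-1} \sum_{k\in\{0\}\cup\mathcal{A}_\eta} \alpha_k \boldsymbol{\Sigma}_{\boldsymbol{u}}^{(k)} \boldsymbol{w}^{(k)}, \qquad \boldsymbol{\Sigma}_{\boldsymbol{u}}^{\mathcal{A}_\eta}=\sum_{k\in\{0\}\cup\mathcal{A}_\eta}\alpha_k\boldsymbol{\Sigma}_{\boldsymbol{u}}^{(k)},
\]
and recognise that it is an affine combination of the $\boldsymbol{w}^{(k)}$'s with matrix weights $(\boldsymbol{\Sigma}_{\boldsymbol{u}}^{\mathcal{A}_\eta})^{-1}\alpha_k\boldsymbol{\Sigma}_{\boldsymbol{u}}^{(k)}$ that sum to the identity. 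This identity is the key observation, since it lets me subtract $\boldsymbol{\beta}$ inside the sum without changing anything.

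Concretely, using $(\boldsymbol{\Sigma}_{\boldsymbol{u}}^{\mathcal{A}_\eta})^{-1}\sum_{k\in\{0\}\cup\mathcal{A}_\eta} \alpha_k \boldsymbol{\Sigma}_{\boldsymbol{u}}^{(k)} = \boldsymbol{I}_p$, I would write
\[
\boldsymbol{w}^{\mathcal{A}_\eta} - \boldsymbol{\beta} = (\boldsymbol{\Sigma}_{\boldsymbol{u}}^{\mathcal{A}_\eta})^{-1}\sum_{k\in\{0\}\cup\mathcal{A}_\eta} \alpha_k \boldsymbol{\Sigma}_{\boldsymbol{u}}^{(k)}\bigl(\boldsymbol{w}^{(k)}-\boldsymbol{\beta}\bigr) = -\sum_{k\in\{0\}\cup\mathcal{A}_\eta}\alpha_k \bigl[(\boldsymbol{\Sigma}_{\boldsymbol{u}}^{\mathcal{A}_\eta})^{-1}\boldsymbol{\Sigma}_{\boldsymbol{u}}^{(k)}\bigr]\boldsymbol{\delta}^{(k)}.
\]
Because $\boldsymbol{\delta}^{(0)}=\boldsymbol{\beta}-\boldsymbol{w}^{(0)}=\boldsymbol{0}$ by definition, only the indices $k\in\mathcal{A}_\eta$ contribute. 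Taking $\ell_1$ norms and invoking the operator-norm inequality $\|\boldsymbol{A}\boldsymbol{v}\|_1\leq\|\boldsymbol{A}\|_1\|\boldsymbol{v}\|_1$ gives
\[
\|\boldsymbol{w}^{\mathcal{A}_\eta}-\boldsymbol{\beta}\|_1 \leq \sum_{k\in\mathcal{A}_\eta}\alpha_k\,\bigl\|(\boldsymbol{\Sigma}_{\boldsymbol{u}}^{\mathcal{A}_\eta})^{-1}\boldsymbol{\Sigma}_{\boldsymbol{u}}^{(k)}\bigr\|_1\,\|\boldsymbol{\delta}^{(k)}\|_1.
\]

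Applying Assumption \ref{assum3} to bound each $\|(\boldsymbol{\Sigma}_{\boldsymbol{u}}^{\mathcal{A}_\eta})^{-1}\boldsymbol{\Sigma}_{\boldsymbol{u}}^{(k)}\|_1$ by $C_1$, using $\|\boldsymbol{\delta}^{(k)}\|_1\leq\eta$ for $k\in\mathcal{A}_\eta$, and noting $\sum_{k\in\mathcal{A}_\eta}\alpha_k\leq 1$ yields $\|\boldsymbol{w}^{\mathcal{A}_\eta}-\boldsymbol{\beta}\|_1\leq C_1\eta$, as desired. There is no substantive obstacle here: the argument is purely algebraic, and the only nontrivial input is the matrix-norm bound from Assumption \ref{assum3}, which has already been justified in the discussion following it.
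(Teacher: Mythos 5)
Your proposal is correct and follows essentially the same route as the paper: the paper starts from the first-order condition $\sum_{k\in\{0\}\cup\mathcal{A}_\eta}\alpha_k\boldsymbol{\Sigma}_{\boldsymbol{u}}^{(k)}(\boldsymbol{w}^{(k)}-\boldsymbol{w}^{\mathcal{A}_\eta})=\boldsymbol{0}_p$ rather than the closed form, but these are equivalent, and both arguments reduce to $\boldsymbol{w}^{\mathcal{A}_\eta}-\boldsymbol{\beta}=(\boldsymbol{\Sigma}_{\boldsymbol{u}}^{\mathcal{A}_\eta})^{-1}\sum_{k\in\mathcal{A}_\eta}\alpha_k\boldsymbol{\Sigma}_{\boldsymbol{u}}^{(k)}(\boldsymbol{w}^{(k)}-\boldsymbol{\beta})$ followed by the $\ell_1$ operator-norm bound and Assumption \ref{assum3}. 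No gaps.
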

\begin{proof}
    By definition, we have 
       $\sum_{k\in\{0\}\cup\mathcal{A}_\eta}\alpha_k\boldsymbol{\Sigma}^{(k)}_{\boldsymbol{u}}(\boldsymbol{w}^{(k)}-\boldsymbol{w}^{\mathcal{A}_\eta})=\boldsymbol{0}_p,$
    which indicates that
    \begin{align*}
       \sum_{k\in\mathcal{A}_\eta}\alpha_k\boldsymbol{\Sigma}^{(k)}_{\boldsymbol{u}}(\boldsymbol{w}^{(k)}-\boldsymbol{\beta})=\sum_{k\in\{0\}\cup\mathcal{A}_\eta}\alpha_k\boldsymbol{\Sigma}^{(k)}_{\boldsymbol{u}}(\boldsymbol{w}^{\mathcal{A}_\eta}-\boldsymbol{\beta}).
    \end{align*}
    It follows that $\|\boldsymbol{w}^{\mathcal{A}_\eta}-\boldsymbol{\beta}\|_1\leq\sum_{k\in\mathcal{A}_\eta}\alpha_k\|(\sum_{k\in\{0\}\cup\mathcal{A}_\eta}\boldsymbol{\Sigma}^{(k)}_{\boldsymbol{u}})^{-1}\boldsymbol{\Sigma}^{(k)}_{\boldsymbol{u}}\|_1\|\boldsymbol{w}^{(k)}-\boldsymbol{\beta}\|_1\leq C_1\eta$.
\end{proof}
\vspace{-2mm}
\begin{lemma}[Lemma 2 of \cite{tian2023transfer}]\label{lemmaA2}
    Suppose that Assumption \ref{assum1} holds for $k\in\{0\}\cup\mathcal{A}_\eta$, then there exists some positive constant $\kappa_1,\kappa_2,C_3$ and $C_4$ such that $$\frac{D(\boldsymbol{\Delta})}{\|\boldsymbol{\Delta}\|_2^2}\geq \kappa_1\left(1-\kappa_2\sqrt{\frac{\log p}{n_{\mathcal{A}_\eta}+n_0}}\frac{\|\boldsymbol{\Delta}\|_1}{\|\boldsymbol{\Delta}\|_2}\right),\quad\text{for all }\|\boldsymbol{\Delta}\|_2\leq 1$$
    with probability at least $1-C_3\exp(-C_4(n_{\mathcal{A}_\eta}+n_0))$.
\end{lemma}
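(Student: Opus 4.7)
The plan is to establish a Rudelson--Zhou type sparse-recovery lower bound for the pooled design matrix $\boldsymbol{U}_{\mathrm{pool}}:=[\boldsymbol{U}_0^\top,\{\boldsymbol{U}_k^\top\}_{k\in\mathcal{A}_\eta}]^\top$ stacked across $k\in\{0\}\cup\mathcal{A}_\eta$, in the spirit of Rudelson--Zhou (2013, Thm.~6) and Raskutti--Wainwright--Yu (2010). The population side will follow directly from Assumption~\ref{assum1}(d); the stochastic fluctuation will be handled by a Mendelson small-ball argument applied to heterogeneous sub-Gaussian rows.

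\textit{Step 1 (population lower bound).} By Assumption~\ref{assum1}(d), $\lambda_{\min}(\boldsymbol{\Sigma}_{\boldsymbol{u}}^{(k)})\ge\iota$ for every $k\in\{0\}\cup\mathcal{A}_\eta$. Since the weights $\alpha_k$ are nonnegative and sum to one, convex combination gives $\lambda_{\min}(\boldsymbol{\Sigma}_{\boldsymbol{u}}^{\mathcal{A}_\eta})\ge\iota$, hence $\mathbb{E}[D(\boldsymbol{\Delta})]=\boldsymbol{\Delta}^\top\boldsymbol{\Sigma}_{\boldsymbol{u}}^{\mathcal{A}_\eta}\boldsymbol{\Delta}\ge\iota\|\boldsymbol{\Delta}\|_2^2$ for all $\boldsymbol{\Delta}$.

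\textit{Step 2 (concentration via small ball).} The rows of $\boldsymbol{U}_{\mathrm{pool}}$ are independent (though not identically distributed across sources) and sub-Gaussian with $\|\cdot\|_{\psi_2}\le v_0$ by Assumption~\ref{assum1}(a). For any fixed $\boldsymbol{\Delta}$ with $\|\boldsymbol{\Delta}\|_2=1$, Paley--Zygmund applied to $(\boldsymbol{u}^{(k)\top}\boldsymbol{\Delta})^2$ combined with the sub-Gaussian fourth-moment control yields a uniform small-ball probability $\inf_k\mathbb{P}(|\boldsymbol{u}^{(k)\top}\boldsymbol{\Delta}|\ge\tau)\ge q$ for constants $\tau,q>0$ depending only on $\iota$ and $v_0$. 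The Mendelson small-ball inequality (or equivalently a truncation-plus-Dudley-chaining argument on the empirical process indexed by $\{\boldsymbol{\Delta}\in\mathbb{R}^p:\|\boldsymbol{\Delta}\|_2\le 1\}$, with the Rademacher complexity over $\{\|\boldsymbol{\Delta}\|_1\le R\}$ bounded by $R\sqrt{\log p/(n_{\mathcal{A}_\eta}+n_0)}$ through the dual-norm inequality) then gives, with probability at least $1-C_3\exp(-C_4(n_{\mathcal{A}_\eta}+n_0))$,
\begin{align*}
\frac{\|\boldsymbol{U}_{\mathrm{pool}}\boldsymbol{\Delta}\|_2}{\sqrt{n_{\mathcal{A}_\eta}+n_0}}\;\ge\;\sqrt{\iota}\,\|\boldsymbol{\Delta}\|_2-c\sqrt{\frac{\log p}{n_{\mathcal{A}_\eta}+n_0}}\,\|\boldsymbol{\Delta}\|_1\qquad\text{for every }\boldsymbol{\Delta}\in\mathbb{R}^p.
\end{align*}

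\textit{Step 3 (squaring).} On the complementary event where the right-hand side above is non-positive, the target inequality is trivially true since $D(\boldsymbol{\Delta})\ge 0$. When the right-hand side is positive, squaring and using $(a-b)^2\ge a^2-2ab$ for $a,b\ge 0$ gives
\begin{align*}
D(\boldsymbol{\Delta})\;\ge\;\iota\|\boldsymbol{\Delta}\|_2^2-2c\sqrt{\iota}\sqrt{\frac{\log p}{n_{\mathcal{A}_\eta}+n_0}}\,\|\boldsymbol{\Delta}\|_1\|\boldsymbol{\Delta}\|_2,
\end{align*}
which yields the stated bound with $\kappa_1=\iota$ and $\kappa_2=2c/\sqrt{\iota}$.

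The main obstacle is that Rudelson--Zhou is typically stated for i.i.d.\ rows with a common covariance, whereas here the rows are only independent and sub-Gaussian with heterogeneous covariances $\{\boldsymbol{\Sigma}_{\boldsymbol{u}}^{(k)}\}$. This is handled by observing that only two ingredients of their proof are actually used, and each extends to the pooled setting: (i) a \emph{uniform} small-ball lower bound, which holds because $\mathbb{E}[(\boldsymbol{u}^{(k)\top}\boldsymbol{\Delta})^2]\ge\iota\|\boldsymbol{\Delta}\|_2^2$ for every $k$, together with a common sub-Gaussian fourth-moment control; and (ii) an upper bound on the multiplier Rademacher complexity, which depends on the rows only through the uniform $\psi_2$ bound $v_0$. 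No identical-distribution hypothesis enters either piece, so the Rudelson--Zhou argument transfers directly.
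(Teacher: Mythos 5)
Your proposal is correct in substance, but it takes a very different route from the paper: the paper's entire proof of Lemma~\ref{lemmaA2} is the single sentence ``the proof simply follows from Lemma 2 of \cite{tian2023transfer},'' i.e.\ the result is imported wholesale from the pooled restricted-eigenvalue lemma in that reference, whereas you rebuild it from scratch via a Mendelson small-ball argument. Your three steps are all sound: the population bound $\lambda_{\min}(\boldsymbol{\Sigma}_{\boldsymbol{u}}^{\mathcal{A}_\eta})\ge\iota$ does follow from Assumption~\ref{assum1}(d) and the fact that the $\alpha_k$ form a convex combination; Paley--Zygmund with the uniform $\psi_2$ bound $v_0$ and the uniform second-moment lower bound $\iota$ gives a small-ball constant depending only on $(\iota,v_0)$; the dual-norm bound $R\,v_0\sqrt{\log p/(n_{\mathcal{A}_\eta}+n_0)}$ on the multiplier complexity over $\{\|\boldsymbol{\Delta}\|_1\le R\}$ is correct; and the squaring step, including the observation that the claim is vacuous when the unsquared right-hand side is non-positive, is handled properly. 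You also correctly identify the one point that makes the citation non-trivial — the rows here are independent but not identically distributed across sources, with heterogeneous covariances $\{\boldsymbol{\Sigma}_{\boldsymbol{u}}^{(k)}\}$ — and you isolate exactly the two ingredients (uniform small-ball probability, uniform $\psi_2$ control of the complexity term) that make the i.i.d.\ hypothesis dispensable; this is a point the paper glosses over entirely. Two minor caveats: the uniformity over all ratios $\|\boldsymbol{\Delta}\|_1/\|\boldsymbol{\Delta}\|_2$ requires a peeling argument over $\ell_1$-shells that you assert rather than execute (standard, and the $O(\log p)$ union cost is absorbed by the $\exp(-C_4(n_{\mathcal{A}_\eta}+n_0))$ probability, but it should be stated); and your final constant $\kappa_1=\iota$ is too optimistic, since the Mendelson bound yields a leading constant of order $\iota q^2$ rather than $\iota$ itself — immaterial here because the lemma only asserts existence of positive constants. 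What your approach buys is a self-contained and verifiable proof adapted to the actual design $\boldsymbol{U}_{\mathrm{pool}}$ of this paper; what the paper's citation buys is brevity at the cost of leaving the reader to check that Tian and Feng's lemma, stated for their covariates, transfers to the idiosyncratic components with source-dependent covariance structure.
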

\begin{proof}
    The proof simply follows from Lemma 2 of \cite{tian2023transfer}.
\end{proof}
\begin{lemma}[Proposition 2 of \cite{negahban2009unified}]\label{lemmaA3}
    Suppose that Assumption \ref{assum1} holds for $k\in\{0\}\cup\mathcal{A}_\eta$, then there exists some positive constant $\kappa_1,\kappa_2,C_3$ and $C_4$ such that $$\frac{D^{(0)}(\boldsymbol{\Delta})}{\|\boldsymbol{\Delta}\|_2^2}\geq \kappa_1\left(1-\kappa_2\sqrt{\frac{\log p}{n_0}}\frac{\|\boldsymbol{\Delta}\|_1}{\|\boldsymbol{\Delta}\|_2}\right),\quad\text{for all }\|\boldsymbol{\Delta}\|_2\leq 1$$
    with probability at least $1-C_3\exp(-C_4n_0)$, where $D^{(0)}(\boldsymbol{\Delta})=\boldsymbol{\Delta}^\top\left(\boldsymbol{U}_0^\top\boldsymbol{U}_0/n_k\right)\boldsymbol{\Delta}$.
\end{lemma}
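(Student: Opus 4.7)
The plan is to reduce this statement to the classical restricted strong convexity bound for i.i.d.\ sub-Gaussian designs, which is exactly Proposition~2 of \cite{negahban2009unified} (originally proved for Gaussian designs by Raskutti, Wainwright and Yu). First I would verify the ingredients needed to apply that result: by Assumption~\ref{assum1}(a) the rows of $\boldsymbol{U}_0$ are i.i.d.\ sub-Gaussian vectors with covariance $\boldsymbol{\Sigma}_{\boldsymbol{u}}^{(0)}$, and by Assumption~\ref{assum1}(d) we have $\lambda_{\min}(\boldsymbol{\Sigma}_{\boldsymbol{u}}^{(0)}) \geq \iota > 0$ and $\lambda_{\max}(\boldsymbol{\Sigma}_{\boldsymbol{u}}^{(0)}) \leq \|\boldsymbol{\Sigma}_{\boldsymbol{u}}^{(0)}\|_1 \leq 1/\iota$. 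Thus $\mathbb{E}[D^{(0)}(\boldsymbol{\Delta})] = \boldsymbol{\Delta}^\top \boldsymbol{\Sigma}_{\boldsymbol{u}}^{(0)} \boldsymbol{\Delta} \geq \iota \|\boldsymbol{\Delta}\|_2^2$, which already supplies the target lower bound up to a uniform fluctuation term that must be controlled in the stated mixed $\ell_1$/$\ell_2$ form.

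The fluctuation control then proceeds in four standard steps. \textbf{(i)} For any fixed $\boldsymbol{\Delta}$, a Hanson--Wright inequality for sub-Gaussian quadratic forms gives the pointwise bound $\mathbb{P}\bigl(|D^{(0)}(\boldsymbol{\Delta}) - \boldsymbol{\Delta}^\top\boldsymbol{\Sigma}_{\boldsymbol{u}}^{(0)}\boldsymbol{\Delta}| > t\|\boldsymbol{\Delta}\|_2^2\bigr) \leq 2\exp(-cn_0 \min\{t^2,t\})$. \textbf{(ii)} For each sparsity level $s$, an $\epsilon$-net of the $s$-sparse unit sphere, of cardinality at most $\binom{p}{s}(3/\epsilon)^s$, combined with a union bound and the pointwise tail estimate, yields $\sup_{\|\boldsymbol{\Delta}\|_0 \leq s,\, \|\boldsymbol{\Delta}\|_2 = 1}\bigl|D^{(0)}(\boldsymbol{\Delta}) - \boldsymbol{\Delta}^\top\boldsymbol{\Sigma}_{\boldsymbol{u}}^{(0)}\boldsymbol{\Delta}\bigr|\lesssim \sqrt{s\log p/n_0}$ with probability $1-C\exp(-Cn_0)$, provided $s\log p \lesssim n_0$. \textbf{(iii)} A peeling argument over dyadic sparsity levels $s\in\{1,2,4,\ldots,p\}$, together with the inequality $\|\boldsymbol{\Delta}\|_1 \leq \sqrt{s}\|\boldsymbol{\Delta}\|_2$ on the $s$-sparse cone, promotes the sparsity-indexed bounds to the single unconditional estimate $|D^{(0)}(\boldsymbol{\Delta}) - \boldsymbol{\Delta}^\top\boldsymbol{\Sigma}_{\boldsymbol{u}}^{(0)}\boldsymbol{\Delta}| \lesssim \sqrt{\log p/n_0}\,\|\boldsymbol{\Delta}\|_1\|\boldsymbol{\Delta}\|_2$ uniformly in $\boldsymbol{\Delta}$, on an event of probability at least $1-C_3\exp(-C_4 n_0)$. \textbf{(iv)} Combining this fluctuation bound with the population lower bound $\iota\|\boldsymbol{\Delta}\|_2^2$ and choosing $\kappa_1=\iota/2$ together with $\kappa_2$ large enough to absorb the implicit absolute constant yields the claim for every $\|\boldsymbol{\Delta}\|_2\leq 1$.

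The principal technical obstacle is step \textbf{(iii)}, the peeling over sparsity levels. Without it, a naive union bound over all sparsity patterns would inflate probabilities uncontrollably, and one would obtain only a trivial rate of order $\sqrt{p\log p/n_0}$ in place of the advertised $\sqrt{\log p/n_0}\,\|\boldsymbol{\Delta}\|_1/\|\boldsymbol{\Delta}\|_2$. Since this peeling machinery is by now standard in the restricted-eigenvalue literature and is developed in full in \cite{negahban2009unified}, I would simply invoke their Proposition~2 in the written proof rather than reproduce the argument, observing only that Assumption~\ref{assum1} verifies every hypothesis of that result---sub-Gaussian rows with bounded spectrum of the population covariance. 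The structural analogue for the pooled design, which was handled in Lemma~\ref{lemmaA2}, is exactly the same argument applied to the block matrix $(\boldsymbol{U}_0^\top,\ldots,\boldsymbol{U}_K^\top)^\top$, so no new ideas beyond those above are required.
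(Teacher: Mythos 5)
Your proposal is correct and takes essentially the same route as the paper: the paper's entire proof of this lemma is a one-line invocation of Proposition~2 of \cite{negahban2009unified}, which is exactly what you do after (helpfully, and correctly) verifying via Assumption~\ref{assum1}(a) and (d) that the rows of $\boldsymbol{U}_0$ are i.i.d.\ sub-Gaussian with population covariance whose spectrum is bounded between $\iota$ and $1/\iota$. Your sketch of the internal Hanson--Wright/net/peeling machinery is accurate but, as you note yourself, unnecessary to reproduce once the cited proposition is invoked.
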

\begin{proof}
    The proof simply follows from Proposition 2 of \cite{negahban2009unified}.
\end{proof}
\begin{lemma}\label{lemmaA4}
    Suppose that Assumptions \ref{assum1}-\ref{assum2} hold for $k\in\{0\}\cup\mathcal{A}_\eta$, if $\eta=O(1)$, we have $$\|\nabla\hat{L}(\boldsymbol{w}^{\mathcal{A}_\eta})\|_\infty=O_P\left(\frac{(|\mathcal{A}_\eta|+1)(\sqrt{\log p}\vee\eta\log p\vee\max_{k\in\{0\}\cup\mathcal{A}_\eta}\mathcal{V}_{n_k,p}\|\varphi^{(k)}\|_2)}{n_{\mathcal{A}_\eta}+n_0}+\sqrt{\frac{\log p}{n_{\mathcal{A}_\eta}+n_0}}\right).$$
\end{lemma}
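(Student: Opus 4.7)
The plan is to decompose $\nabla\hat{L}(\boldsymbol{w}^{\mathcal{A}_\eta})$ into a small number of source terms using the model equation $\boldsymbol{Y}^{(k)}=\boldsymbol{X}^{(k)}\boldsymbol{w}^{(k)}+\boldsymbol{F}_k\boldsymbol{\varphi}^{(k)}+\boldsymbol{\mathcal{E}}^{(k)}$ together with the two key identities $(\boldsymbol{I}_{n_k}-\hat{\boldsymbol{P}}_k)\boldsymbol{X}^{(k)}=\hat{\boldsymbol{U}}_k$ and $\hat{\boldsymbol{U}}_k^\top\hat{\boldsymbol{F}}_k=\boldsymbol{0}$, and then bound each term separately in $\ell_\infty$ using the concentration estimates already available in the excerpt. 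Specifically, I would first show
\begin{align*}
\tilde{\boldsymbol{Y}}^{(k)}-\hat{\boldsymbol{U}}_k\boldsymbol{w}^{\mathcal{A}_\eta}
=\hat{\boldsymbol{U}}_k(\boldsymbol{w}^{(k)}-\boldsymbol{w}^{\mathcal{A}_\eta})+(\boldsymbol{I}_{n_k}-\hat{\boldsymbol{P}}_k)\boldsymbol{F}_k\boldsymbol{\varphi}^{(k)}+(\boldsymbol{I}_{n_k}-\hat{\boldsymbol{P}}_k)\boldsymbol{\mathcal{E}}^{(k)},
\end{align*}
and, premultiplying by $\hat{\boldsymbol{U}}_k^\top$ and using $\hat{\boldsymbol{U}}_k^\top(\boldsymbol{I}_{n_k}-\hat{\boldsymbol{P}}_k)=\hat{\boldsymbol{U}}_k^\top$, conclude
\begin{align*}
-\nabla\hat{L}(\boldsymbol{w}^{\mathcal{A}_\eta})
=\sum_{k}\alpha_k\,\frac{\hat{\boldsymbol{U}}_k^\top\hat{\boldsymbol{U}}_k}{n_k}(\boldsymbol{w}^{(k)}-\boldsymbol{w}^{\mathcal{A}_\eta})
+\frac{1}{n_{\mathcal{A}_\eta}+n_0}\sum_{k}\hat{\boldsymbol{U}}_k^\top\boldsymbol{F}_k\boldsymbol{\varphi}^{(k)}
+\frac{1}{n_{\mathcal{A}_\eta}+n_0}\sum_{k}\hat{\boldsymbol{U}}_k^\top\boldsymbol{\mathcal{E}}^{(k)}.
\end{align*}

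For the \emph{contrast term}, I would subtract the population first-order condition $\sum_k\alpha_k\boldsymbol{\Sigma}_{\boldsymbol{u}}^{(k)}(\boldsymbol{w}^{(k)}-\boldsymbol{w}^{\mathcal{A}_\eta})=\boldsymbol{0}$ so that it becomes $\sum_k\alpha_k\bigl(\hat{\boldsymbol{U}}_k^\top\hat{\boldsymbol{U}}_k/n_k-\boldsymbol{\Sigma}_{\boldsymbol{u}}^{(k)}\bigr)(\boldsymbol{w}^{(k)}-\boldsymbol{w}^{\mathcal{A}_\eta})$, bound its $\ell_\infty$-norm by the elementwise sup $\|\hat{\boldsymbol{U}}_k^\top\hat{\boldsymbol{U}}_k/n_k-\boldsymbol{\Sigma}_{\boldsymbol{u}}^{(k)}\|_{\max}$ times $\|\boldsymbol{w}^{(k)}-\boldsymbol{w}^{\mathcal{A}_\eta}\|_1$, and invoke Lemma \ref{lemmaA6} for the elementwise concentration and Lemma \ref{lemmaA1} plus the triangle inequality to guarantee $\|\boldsymbol{w}^{(k)}-\boldsymbol{w}^{\mathcal{A}_\eta}\|_1\lesssim\eta$; together with $\sum_k\alpha_k\le 1$ and $\max_k n_k^{-1}\lesssim(|\mathcal{A}_\eta|+1)/(n_{\mathcal{A}_\eta}+n_0)$, this yields the $(|\mathcal{A}_\eta|+1)\eta\log p/(n_{\mathcal{A}_\eta}+n_0)$ contribution. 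For the \emph{factor-projection term}, I would exploit $\hat{\boldsymbol{U}}_k^\top\hat{\boldsymbol{F}}_k=\boldsymbol{0}$ to rewrite $\hat{\boldsymbol{U}}_k^\top\boldsymbol{F}_k\boldsymbol{\varphi}^{(k)}=\hat{\boldsymbol{U}}_k^\top(\boldsymbol{F}_k-\hat{\boldsymbol{F}}_k\boldsymbol{H}_k^{-\top})\boldsymbol{\varphi}^{(k)}$ (with $\boldsymbol{H}_k$ defined in Lemma \ref{lemma1}), and then combine the Frobenius-norm rate for $\hat{\boldsymbol{F}}_k-\boldsymbol{F}_k\boldsymbol{H}_k^\top$ from Lemma \ref{lemma1} with the operator-norm bound on $\hat{\boldsymbol{U}}_k$ furnished by Assumption \ref{assum1} to obtain the $\mathcal{V}_{n_k,p}\|\boldsymbol{\varphi}^{(k)}\|_2/(n_{\mathcal{A}_\eta}+n_0)$ rate per source, whence the sum over $k$ contributes the stated $(|\mathcal{A}_\eta|+1)\max_k\mathcal{V}_{n_k,p}\|\boldsymbol{\varphi}^{(k)}\|_2/(n_{\mathcal{A}_\eta}+n_0)$ piece.

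For the \emph{noise score}, I would further split $\hat{\boldsymbol{U}}_k^\top\boldsymbol{\mathcal{E}}^{(k)}=\boldsymbol{U}_k^\top\boldsymbol{\mathcal{E}}^{(k)}+(\hat{\boldsymbol{U}}_k-\boldsymbol{U}_k)^\top\boldsymbol{\mathcal{E}}^{(k)}$. The pooled first piece aggregates $n_{\mathcal{A}_\eta}+n_0$ independent mean-zero sub-exponential scalars (using independence between $\boldsymbol{u}^{(k)}$ and $\mathcal{E}^{(k)}$ guaranteed in Section \ref{sec2}), so Bernstein's inequality and a union bound over the $p$ coordinates deliver the $\sqrt{\log p/(n_{\mathcal{A}_\eta}+n_0)}$ rate. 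The second piece equals $-(\hat{\boldsymbol{F}}_k\hat{\boldsymbol{B}}_k^\top-\boldsymbol{F}_k\boldsymbol{B}_k^\top)^\top\boldsymbol{\mathcal{E}}^{(k)}$, which one bounds coordinatewise by combining Lemma \ref{lemma1} for the loading/factor estimation error with a Hoeffding-type inequality for $\boldsymbol{F}_k^\top\boldsymbol{\mathcal{E}}^{(k)}$; under $\eta=O(1)$ and the sample-size condition $n_k=O(p)$ this term is absorbed into the bounds already obtained.

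The main obstacle is handling the cross terms produced by the factor-estimation error in $\hat{\boldsymbol{U}}_k$: the contribution $(\hat{\boldsymbol{U}}_k-\boldsymbol{U}_k)$ multiplies several random quantities, and one has to carefully track how the rates in Lemma \ref{lemma1} (which are $\ell_2$- and $\max$-norm bounds on $\hat{\boldsymbol{F}}_k-\boldsymbol{F}_k\boldsymbol{H}_k^\top$ and on $\hat{\boldsymbol{B}}_k-\boldsymbol{B}_k\boldsymbol{H}_k^{-1}$) interact with the sub-Gaussian concentration of $\boldsymbol{u}_i^{(k)}$ and $\mathcal{E}_i^{(k)}$ under the $\ell_\infty$ supremum over the $p$ coordinates. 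Once these are controlled and a final union bound is taken over the $|\mathcal{A}_\eta|+1$ sources, the maximum of the three components together with the pooled noise rate yields the claimed bound.
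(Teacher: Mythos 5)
Your decomposition of $-\nabla\hat L(\boldsymbol{w}^{\mathcal{A}_\eta})$ into a contrast term, a factor-projection term and a noise score is exactly the one the paper uses, and your handling of the noise score (pooled Bernstein bound for $\sum_k\boldsymbol{U}_k^\top\boldsymbol{\mathcal{E}}^{(k)}$ plus the per-source $O_P(\sqrt{\log p})$ control of $(\hat{\boldsymbol{U}}_k-\boldsymbol{U}_k)^\top\boldsymbol{\mathcal{E}}^{(k)}$) matches the paper's. For the factor term you can simply cite the ready-made bound $\|\hat{\boldsymbol{U}}_k^\top\boldsymbol{F}_k\boldsymbol{\phi}\|_\infty=O_P(\mathcal{V}_{n_k,p})$ of Lemma \ref{lemmaA6} rather than re-deriving it from operator and Frobenius norms; a crude bound of the form $\|\hat{\boldsymbol{U}}_k\|_2\|\hat{\boldsymbol{F}}_k-\boldsymbol{F}_k\boldsymbol{H}_k^\top\|_{\mathbb{F}}$ is far too large for the $\ell_\infty$ rate, which requires the coordinatewise argument already packaged in that lemma.

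The genuine gap is in the contrast term. You center each source at its own population covariance and bound $\bigl\|\sum_k\alpha_k\bigl(\hat{\boldsymbol{U}}_k^\top\hat{\boldsymbol{U}}_k/n_k-\boldsymbol{\Sigma}_{\boldsymbol{u}}^{(k)}\bigr)(\boldsymbol{w}^{(k)}-\boldsymbol{w}^{\mathcal{A}_\eta})\bigr\|_\infty$ by $\sum_k\alpha_k\|\hat{\boldsymbol{U}}_k^\top\hat{\boldsymbol{U}}_k/n_k-\boldsymbol{\Sigma}_{\boldsymbol{u}}^{(k)}\|_{\max}\cdot O(\eta)$, attributing the max-norm control to Lemma \ref{lemmaA6}. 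But that lemma only controls the estimation error $\|\hat{\boldsymbol{U}}_k^\top\hat{\boldsymbol{U}}_k-\boldsymbol{U}_k^\top\boldsymbol{U}_k\|_{\max}=O_P(n_k/p+\log p)$; the remaining sampling fluctuation $\|\boldsymbol{U}_k^\top\boldsymbol{U}_k/n_k-\boldsymbol{\Sigma}_{\boldsymbol{u}}^{(k)}\|_{\max}$ is of order $\sqrt{\log p/n_k}$, not $\log p/n_k$. Weighted by $\alpha_k=n_k/(n_{\mathcal{A}_\eta}+n_0)$ and summed over sources, this contributes $\eta\sum_k\sqrt{n_k\log p}/(n_{\mathcal{A}_\eta}+n_0)\asymp\eta\sqrt{(|\mathcal{A}_\eta|+1)\log p/(n_{\mathcal{A}_\eta}+n_0)}$, a factor $\sqrt{|\mathcal{A}_\eta|+1}$ larger than anything in the stated bound and not absorbed by either of its terms (take, e.g., $\eta\asymp1$ and $|\mathcal{A}_\eta|\asymp\log p$ with $\log p/(n_{\mathcal{A}_\eta}+n_0)$ small). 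The paper avoids this loss by splitting $\hat{\boldsymbol{U}}_k^\top\hat{\boldsymbol{U}}_k=(\hat{\boldsymbol{U}}_k^\top\hat{\boldsymbol{U}}_k-\boldsymbol{U}_k^\top\boldsymbol{U}_k)+\boldsymbol{U}_k^\top\boldsymbol{U}_k$ and treating $(n_{\mathcal{A}_\eta}+n_0)^{-1}\sum_k\boldsymbol{U}_k^\top\boldsymbol{U}_k(\boldsymbol{w}^{(k)}-\boldsymbol{w}^{\mathcal{A}_\eta})$ as a single pooled average of $n_{\mathcal{A}_\eta}+n_0$ independent sub-exponential vectors whose mean is zero only in aggregate, by the population first-order condition defining $\boldsymbol{w}^{\mathcal{A}_\eta}$; one application of Bernstein's inequality to this pooled sum gives $O_P(\eta\sqrt{\log p/(n_{\mathcal{A}_\eta}+n_0)})$. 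You need this pooled concentration step; per-source centering followed by summation does not recover the claimed rate.
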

\vspace{-2mm}
\begin{proof}
    Note that 
\begin{align*}
        -\nabla \hat{L}(\boldsymbol{w}^{\mathcal{A}_\eta})&=\frac{1}{n_{\mathcal{A}_\eta}+n_0}\sum_{k\in\{0\}\cup\mathcal{A}_\eta}\hat{\boldsymbol{U}}_k^\top(\tilde{\boldsymbol{Y}}^{(k)}-\hat{\boldsymbol{U}}_k\boldsymbol{w}^{(k)})+\frac{1}{n_{\mathcal{A}_\eta}+n_0}\sum_{k\in\{0\}\cup\mathcal{A}_\eta}\hat{\boldsymbol{U}}_k^\top\hat{\boldsymbol{U}}_k(\boldsymbol{w}^{(k)}-\boldsymbol{w}^{\mathcal{A}_\eta})\\
        :&=E_{1}+E_{2}.
\end{align*}
For $E_1$, since $\hat{\boldsymbol{U}}_k^\top(\tilde{\boldsymbol{Y}}^{(k)}-\hat{\boldsymbol{U}}_k\boldsymbol{w}^{(k)})=\hat{\boldsymbol{U}}_k^\top(\boldsymbol{Y}^{(k)}-\boldsymbol{X}^{(k)}\boldsymbol{w}^{(k)})=\hat{\boldsymbol{U}}_k^\top\boldsymbol{\mathcal{E}}^{(k)}+\hat{\boldsymbol{U}}_k^\top\boldsymbol{F}_k\boldsymbol{\varphi}^{(k)}$ with 
$\|\hat{\boldsymbol{U}}_k^\top\boldsymbol{F}_k\boldsymbol{\varphi}^{(k)}\|_\infty=O_P(\mathcal{V}_{n_k,p}\|\boldsymbol{\varphi}^{(k)}\|_2)$, $\|(\hat{\boldsymbol{U}}_k-\boldsymbol{U}_k)^\top\boldsymbol{\mathcal{E}}^{(k)}\|_\infty=O_P(\sqrt{\log p})$ by Lemma \ref{lemmaA6}, and $\|\sum_{k\in\{0\}\cup\mathcal{A}_\eta}\boldsymbol{U}_k^\top\boldsymbol{\mathcal{E}}^{(k)}\|_\infty=O_P\left(\sqrt{(n_k+n_{\mathcal{A}_\eta})\log p}\right)$ by Bernstein's inequality, we conclude that 
\begin{align*}
    \|E_1\|_\infty&\leq\frac{1}{n_{\mathcal{A}_\eta}+n_0}\left\{\sum_{k\in\{0\}\cup\mathcal{A}_\eta}\left(\|\hat{\boldsymbol{U}}_k^\top\boldsymbol{F}_k\boldsymbol{\varphi}^{(k)}\|_\infty+\|(\hat{\boldsymbol{U}}_k-\boldsymbol{U}_k)^\top\boldsymbol{\mathcal{E}}^{(k)}\|_\infty\right)+\left\|\sum_{k\in\{0\}\cup\mathcal{A}_\eta}\boldsymbol{U}_k^\top\boldsymbol{\mathcal{E}}^{(k)}\right\|_\infty\right\}\\
    &=O_P\left(\frac{(|\mathcal{A}_\eta|+1)(\sqrt{\log p}\vee\max_{k\in\{0\}\cup\mathcal{A}_\eta}\mathcal{V}_{n_k,p}\|\varphi^{(k)}\|_2)}{n_{\mathcal{A}_\eta}+n_0}+\sqrt{\frac{\log p}{n_{\mathcal{A}_\eta}+n_0}}\right).
\end{align*}
For $E_2$, according to Lemma \ref{lemmaA1} and the definition of $\mathcal{A}_\eta$, we have $\|\boldsymbol{w}^{(k)}-\boldsymbol{w}^{\mathcal{A}_\eta}\|_1\leq\|\boldsymbol{w}^{(k)}-\boldsymbol{\beta}\|_1+\|\boldsymbol{\beta}-\boldsymbol{w}^{\mathcal{A}_\eta}\|_1\leq(C_1+1)\eta$. This equation, together with Lemma \ref{lemmaA6} implies that
\begin{align*}
    &\left\|\frac{1}{n_{\mathcal{A}_\eta}+n_0}\sum_{k\in\{0\}\cup\mathcal{A}_\eta}(\hat{\boldsymbol{U}}_k^\top\hat{\boldsymbol{U}}_k-\boldsymbol{U}_k^\top\boldsymbol{U}_k)(\boldsymbol{w}^{(k)}-\boldsymbol{w}^{\mathcal{A}_\eta})\right\|_\infty\\
    \leq&\frac{|\mathcal{A}_\eta|+1}{(n_{\mathcal{A}_\eta}+n_0)}\|\hat{\boldsymbol{U}}_k^\top\hat{\boldsymbol{U}}_k-\boldsymbol{U}_k^\top\boldsymbol{U}_k\|_{\max}\|\boldsymbol{w}^{(k)}-\boldsymbol{w}^{\mathcal{A}_\eta}\|_1=O_P\left(\frac{(|\mathcal{A}_\eta|+1)\eta\log p}{n_{\mathcal{A}_\eta}+n_0}\right)
\end{align*}
Moreover, by the definition of $\boldsymbol{w}^{\mathcal{A}_\eta}$, we have $\sum_{k\in\{0\}\cup\mathcal{A}_\eta}\boldsymbol{U}_k^\top\boldsymbol{U}_k(\boldsymbol{w}^{(k)}-\boldsymbol{w}^{\mathcal{A}_\eta})/(n_{\mathcal{A}_\eta}+n_0)$ is a mean of the sample $\boldsymbol{u}_i^{(k)}\boldsymbol{u}_i^{(k)\top}(\boldsymbol{w}^{(k)}-\boldsymbol{w}^{\mathcal{A}_\eta})$ with expectation $\boldsymbol{0}_p$ and $$\|\boldsymbol{u}_i^{(k)}\boldsymbol{u}_i^{(k)\top}(\boldsymbol{w}^{(k)}-\boldsymbol{w}^{\mathcal{A}_\eta})\|_{\psi_1}\leq\|\boldsymbol{u}_i^{(k)}\|_{\psi_2}^2\|\boldsymbol{w}^{(k)}-\boldsymbol{w}^{\mathcal{A}_\eta}\|_{1}\lesssim\eta.$$
It follows from Bernstein's inequality that
$$\frac{1}{n_{\mathcal{A}_\eta}+n_0}\sum_{k\in\{0\}\cup\mathcal{A}_\eta}\boldsymbol{U}_k^\top\boldsymbol{U}_k(\boldsymbol{w}^{(k)}-\boldsymbol{w}^{\mathcal{A}_\eta})=O_P\left(\eta\sqrt{\frac{\log p}{n_{\mathcal{A}_\eta}+n_0}}\right).$$
Hence, we have $$\|E_2\|_\infty=O_P\left(\frac{(|\mathcal{A}_\eta|+1)\eta\log p}{n_{\mathcal{A}_\eta}+n_0}+\eta\sqrt{\frac{\log p}{n_{\mathcal{A}_\eta}+n_0}}\right),$$
which indicates that, if $\eta=O(1)$, we have
$$\|\nabla\hat{L}(\boldsymbol{w}^{\mathcal{A}_\eta})\|_\infty=O_P\left(\frac{(|\mathcal{A}_\eta|+1)(\sqrt{\log p}\vee\eta\log p\vee\max_{k\in\{0\}\cup\mathcal{A}_\eta}\mathcal{V}_{n_k,p}\|\varphi^{(k)}\|_2)}{n_{\mathcal{A}_\eta}+n_0}+\sqrt{\frac{\log p}{n_{\mathcal{A}_\eta}+n_0}}\right).$$
\end{proof}
\begin{lemma}\label{lemmaA5}
    Suppose that Assumptions \ref{assum1}-\ref{assum2} hold for $k=0$, we have $$\|\nabla\hat{L}^{(0)}(\boldsymbol{\beta})\|_\infty=O_P\left(\frac{\sqrt{\log p}\vee\mathcal{V}_{n_0,p}\|\varphi^{(0)}\|_2}{n_0}+\sqrt{\frac{\log p}{n_0}}\right).$$
\end{lemma}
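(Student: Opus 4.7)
The plan is to specialize the argument from Lemma~\ref{lemmaA4} to the single target dataset, which is substantially simpler because when $\boldsymbol{w}=\boldsymbol{\beta}=\boldsymbol{w}^{(0)}$ there is no heterogeneity-type term $E_2$ to control; only the analogue of $E_1$ survives. The first step is to rewrite the gradient at the truth using the FARM representation $\boldsymbol{Y}^{(0)}=\boldsymbol{F}_0\boldsymbol{\varphi}^{(0)}+\boldsymbol{X}^{(0)}\boldsymbol{\beta}+\boldsymbol{\mathcal{E}}^{(0)}$ together with $\hat{\boldsymbol{U}}_0=(\boldsymbol{I}_{n_0}-\hat{\boldsymbol{P}}_0)\boldsymbol{X}^{(0)}$, $\tilde{\boldsymbol{Y}}^{(0)}=(\boldsymbol{I}_{n_0}-\hat{\boldsymbol{P}}_0)\boldsymbol{Y}^{(0)}$ and the orthogonality $\hat{\boldsymbol{F}}_0^\top\hat{\boldsymbol{U}}_0=\boldsymbol{0}$. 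This yields the clean decomposition
\begin{equation*}
\hat{\boldsymbol{U}}_0^\top(\tilde{\boldsymbol{Y}}^{(0)}-\hat{\boldsymbol{U}}_0\boldsymbol{\beta})=\hat{\boldsymbol{U}}_0^\top(\boldsymbol{I}_{n_0}-\hat{\boldsymbol{P}}_0)(\boldsymbol{Y}^{(0)}-\boldsymbol{X}^{(0)}\boldsymbol{\beta})=\hat{\boldsymbol{U}}_0^\top\boldsymbol{\mathcal{E}}^{(0)}+\hat{\boldsymbol{U}}_0^\top\boldsymbol{F}_0\boldsymbol{\varphi}^{(0)},
\end{equation*}
which is precisely the expression that appears for a generic $k$ in the proof of Lemma~\ref{lemmaA4}, only now there is a single term in the sum and the coefficient discrepancy vanishes.

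Next I would control the two pieces separately, exactly as was done for $E_1$ in Lemma~\ref{lemmaA4}. For the factor contribution, the residual-factor bound from Lemma~\ref{lemmaA6} gives $\|\hat{\boldsymbol{U}}_0^\top\boldsymbol{F}_0\boldsymbol{\varphi}^{(0)}\|_\infty=O_P(\mathcal{V}_{n_0,p}\|\boldsymbol{\varphi}^{(0)}\|_2)$. For the noise term I would split $\hat{\boldsymbol{U}}_0^\top\boldsymbol{\mathcal{E}}^{(0)}=(\hat{\boldsymbol{U}}_0-\boldsymbol{U}_0)^\top\boldsymbol{\mathcal{E}}^{(0)}+\boldsymbol{U}_0^\top\boldsymbol{\mathcal{E}}^{(0)}$, again invoking Lemma~\ref{lemmaA6} for the first summand to obtain $\|(\hat{\boldsymbol{U}}_0-\boldsymbol{U}_0)^\top\boldsymbol{\mathcal{E}}^{(0)}\|_\infty=O_P(\sqrt{\log p})$, while the second is handled by a standard coordinatewise Bernstein inequality: under Assumptions~\ref{assum1}(a) and \ref{assum2}, the products $u^{(0)}_{ij}\mathcal{E}^{(0)}_i$ are centered sub-exponential with uniformly bounded $\psi_1$-norm and independent across $i$, so a union bound over $j\in[p]$ gives $\|\boldsymbol{U}_0^\top\boldsymbol{\mathcal{E}}^{(0)}\|_\infty=O_P(\sqrt{n_0\log p})$.

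Dividing the three estimates by $n_0$ and combining via the triangle inequality gives
\begin{equation*}
\|\nabla\hat{L}^{(0)}(\boldsymbol{\beta})\|_\infty=O_P\!\left(\frac{\mathcal{V}_{n_0,p}\|\boldsymbol{\varphi}^{(0)}\|_2}{n_0}+\frac{\sqrt{\log p}}{n_0}+\sqrt{\frac{\log p}{n_0}}\right),
\end{equation*}
and since $\sqrt{\log p}/n_0\leq\sqrt{\log p/n_0}$ for $n_0\geq1$, this collapses to the stated maximum. I do not anticipate any genuine obstacle: the whole argument is a one-dataset specialization of the $E_1$ analysis in Lemma~\ref{lemmaA4}, with the delicate heterogeneity bound disappearing because $\boldsymbol{w}^{(0)}-\boldsymbol{w}^{\mathcal{A}_\eta}$ does not enter. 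The only nontrivial inputs are the two uniform $\hat{\boldsymbol{U}}_k$-approximation bounds supplied by Lemma~\ref{lemmaA6}, which are precisely what is needed to absorb the estimation error in the projected idiosyncratic component.
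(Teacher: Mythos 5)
Your proposal is correct and follows essentially the same route as the paper: the paper's proof likewise decomposes $n_0\nabla\hat{L}^{(0)}(\boldsymbol{\beta})$ into $\hat{\boldsymbol{U}}_0^\top\boldsymbol{F}_0\boldsymbol{\varphi}^{(0)}$, $(\hat{\boldsymbol{U}}_0-\boldsymbol{U}_0)^\top\boldsymbol{\mathcal{E}}^{(0)}$, and $\boldsymbol{U}_0^\top\boldsymbol{\mathcal{E}}^{(0)}$, bounding the first two via Lemma~\ref{lemmaA6} and the last by Bernstein's inequality, exactly as you do. Your observation that the heterogeneity term $E_2$ from Lemma~\ref{lemmaA4} vanishes here is the right reason the single-dataset case is simpler.
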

\begin{proof}
    Similar to the proof Lemma \ref{lemmaA4}, we have
\begin{align*}
    \|\nabla\hat{L}^{(0)}(\boldsymbol{\beta})\|_\infty
    \leq&\frac{1}{n_0}\left\{\|\hat{\boldsymbol{U}}_0^\top\boldsymbol{F}_0\boldsymbol{\varphi}^{(0)}\|_\infty+\|(\hat{\boldsymbol{U}}_0-\boldsymbol{U}_0)^\top\boldsymbol{\mathcal{E}}^{(0)}\|_\infty+\left\|\boldsymbol{U}_0^\top\boldsymbol{\mathcal{E}}^{(0)}\right\|_\infty\right\}\\
    =&O_P\left(\frac{\sqrt{\log p}\vee\mathcal{V}_{n_0,p}\|\varphi^{(0)}\|_2}{n_0}+\sqrt{\frac{\log p}{n_0}}\right).
\end{align*}
\end{proof}
\begin{lemma}[Lemma C.3, Lemma C.4, and Lemma C.6 of \cite{fan2024latent}]\label{lemmaA6}
    Suppose that Assumptions \ref{assum1}-\ref{assum2} hold for $k$, and assume that $n_k=O(p)$, define $$\mathcal{V}_{n_k,p}=\frac{n_k}{p}+\sqrt{\frac{\log p}{n_k}}+\sqrt{\frac{n_k\log p}{p}},$$ then for any $\boldsymbol{\phi}\in\mathbb{R}^{r_k}$ with $\|\boldsymbol{\phi}\|_2=1$, 
    we have $$\|\hat{\boldsymbol{U}}_k^\top\boldsymbol{F}_k\boldsymbol{\phi}\|_\infty=O_P(\mathcal{V}_{n_k,p}),\|(\hat{\boldsymbol{U}}_k-\boldsymbol{U}_k)^\top\boldsymbol{\mathcal{E}}^{(k)}\|_{\infty}=O_P\left(\sqrt{\frac{n_k}{p}+\log p}\right),$$
    and $$\|\hat{\boldsymbol{U}}_k^\top\hat{\boldsymbol{U}}_k-\boldsymbol{U}_k^\top\boldsymbol{U}_k\|_{\max}=O_P\left(\frac{n_k}{p}+\log p\right).$$
\end{lemma}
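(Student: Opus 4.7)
The plan is to exploit that Lemma A.6 is explicitly stated as a quotation of Lemmas C.3, C.4, and C.6 of \cite{fan2024latent}. So the main task is a verification exercise: I would check that Assumption~\ref{assum1} (sub-Gaussianity of $\boldsymbol{\nu}^{(k)}=(\boldsymbol{u}^{(k)\top},\boldsymbol{f}^{(k)\top})^\top$, the pervasive-factor condition $p/C\le\lambda_{\min}(\boldsymbol{B}_k^\top\boldsymbol{B}_k)\le\lambda_{\max}(\boldsymbol{B}_k^\top\boldsymbol{B}_k)\le pC$, bounded loadings, and the structural conditions on $\boldsymbol{\Sigma}_{\boldsymbol{u}}^{(k)}$), the sub-Gaussian errors from Assumption~\ref{assum2}, and the dimension regime $n_k=O(p)$ coincide, coordinate by coordinate, with the hypotheses imposed in the three cited lemmas on a single dataset. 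Once that matching is checked, the three claimed rates transfer immediately to our $k$-th dataset, and the proof closes in a couple of lines.

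If one instead wanted a self-contained derivation, the starting point would be the algebraic identity
\begin{align*}
\hat{\boldsymbol{U}}_k-\boldsymbol{U}_k=-\tfrac{1}{n_k}\hat{\boldsymbol{F}}_k\hat{\boldsymbol{F}}_k^\top\boldsymbol{U}_k+\bigl(\boldsymbol{I}_{n_k}-\tfrac{1}{n_k}\hat{\boldsymbol{F}}_k\hat{\boldsymbol{F}}_k^\top\bigr)\boldsymbol{F}_k\boldsymbol{B}_k^\top,
\end{align*}
together with Lemma~\ref{lemma1}, which gives entrywise and Frobenius control of $\hat{\boldsymbol{F}}_k-\boldsymbol{F}_k\boldsymbol{H}_k^\top$ and shows $\boldsymbol{H}_k^\top\boldsymbol{H}_k-\boldsymbol{I}_{r_k}=O_P(1/n_k+1/p)$. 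For the first assertion, I would decompose $\hat{\boldsymbol{u}}_{\cdot j}^{(k)\top}\boldsymbol{F}_k\boldsymbol{\phi}$ as an ``oracle'' piece involving $\boldsymbol{u}_{\cdot j}^{(k)\top}\boldsymbol{F}_k\boldsymbol{\phi}/n_k$ (handled by sub-exponential Bernstein plus a union bound over $j\in[p]$, giving $\sqrt{\log p/n_k}$) and an approximation piece driven by the projection error $n_k^{-1}\hat{\boldsymbol{F}}_k\hat{\boldsymbol{F}}_k^\top-\boldsymbol{F}_k\boldsymbol{H}_k^\top\boldsymbol{H}_k\boldsymbol{F}_k^\top/n_k$, whose entrywise size is $O_P(n_k/p)$ by Lemma~\ref{lemma1}. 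The cross-term contributes the intermediate $\sqrt{n_k\log p/p}$ piece, yielding exactly $\mathcal{V}_{n_k,p}$.

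For the second bound I would use the independence of $\boldsymbol{\mathcal{E}}^{(k)}$ from $(\boldsymbol{F}_k,\boldsymbol{U}_k)$ to condition on the latter and apply Hoeffding-type tail bounds entrywise. The two summands
\begin{align*}
-\tfrac{1}{n_k}\boldsymbol{U}_k^\top\hat{\boldsymbol{F}}_k\hat{\boldsymbol{F}}_k^\top\boldsymbol{\mathcal{E}}^{(k)}\quad\text{and}\quad\boldsymbol{B}_k\boldsymbol{F}_k^\top\bigl(\boldsymbol{I}_{n_k}-\tfrac{1}{n_k}\hat{\boldsymbol{F}}_k\hat{\boldsymbol{F}}_k^\top\bigr)\boldsymbol{\mathcal{E}}^{(k)}
\end{align*}
both factor through projections of a low-dimensional ($r_k$-dimensional) Gaussian-like object, with the structural $\sqrt{n_k/p}$ rate coming from the projection discrepancy and the $\sqrt{\log p}$ rate coming from the union bound over coordinates. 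The third bound follows from the decomposition
$\hat{\boldsymbol{U}}_k^\top\hat{\boldsymbol{U}}_k-\boldsymbol{U}_k^\top\boldsymbol{U}_k=(\hat{\boldsymbol{U}}_k-\boldsymbol{U}_k)^\top\hat{\boldsymbol{U}}_k+\boldsymbol{U}_k^\top(\hat{\boldsymbol{U}}_k-\boldsymbol{U}_k)$, converting everything into bilinear forms that can be bounded by the first two parts together with a Bernstein-type bound on $\|\boldsymbol{U}_k^\top\boldsymbol{U}_k-n_k\boldsymbol{\Sigma}_{\boldsymbol{u}}^{(k)}\|_{\max}$.

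The main obstacle in a self-contained treatment is upgrading Frobenius-norm factor consistency to entrywise ($\ell_\infty$) rates: a naive union bound over $p$ coordinates would cost a spurious $\sqrt{p}$ factor, so one must separate the $O(n_k/p)$ bias contribution of the estimated projection from the $O(\sqrt{\log p/n_k})$ stochastic contribution, applying concentration inequalities only to the latter. Since this separation is precisely the content of the cited results in \cite{fan2024latent}, the most efficient route for the paper is to verify assumption equivalence and invoke those lemmas directly.
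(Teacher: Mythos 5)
The paper gives no proof of Lemma~\ref{lemmaA6} at all: it is imported verbatim as Lemmas C.3, C.4, and C.6 of \cite{fan2024latent}, so the only work is exactly what you identify first, namely checking that Assumptions~\ref{assum1}--\ref{assum2} and the regime $n_k=O(p)$ match the hypotheses of those cited results. Your approach therefore coincides with the paper's, and your optional self-contained sketch (the decomposition $\hat{\boldsymbol{U}}_k-\boldsymbol{U}_k=-\hat{\boldsymbol{P}}_k\boldsymbol{U}_k+(\boldsymbol{I}_{n_k}-\hat{\boldsymbol{P}}_k)\boldsymbol{F}_k\boldsymbol{B}_k^\top$ plus separating the $O(n_k/p)$ projection bias from the $O(\sqrt{\log p/n_k})$ stochastic part) correctly mirrors the structure of the arguments in the cited source.
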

\begin{lemma}\label{lemmaA7}
    Suppose that Assumptions \ref{assum1}-\ref{assum2} hold for $k\in\{1,\ldots,K\}$, for any $1\leq r\leq 3$ and any $\boldsymbol{w}_1,\boldsymbol{w}_2\in\mathbb{R}^{p}$, we have 
    \begin{align*}
        &|\hat{L}_0^{[r]}(\boldsymbol{w}_1)-\hat{L}_0^{[r]}(\boldsymbol{w}_2)-(L_0^{[r]}(\boldsymbol{w}_1)-L_0^{[r]}(\boldsymbol{w}_2))|\\
        \lesssim&\left(\sqrt{\frac{\log p}{n_0}}\|\boldsymbol{w}_1-\boldsymbol{w}_2\|_1\vee\sqrt{\frac{\log p}{n_0}}\|\boldsymbol{w}_2-\boldsymbol{\beta}\|_1\vee1\right)\|\boldsymbol{w}_1-\boldsymbol{w}_2\|_1\sqrt{\frac{\log p}{n_0}}\\
        &+\sqrt{\frac{\log p}{n_0}}(\|\boldsymbol{w}_1-\boldsymbol{w}_2\|_2^2+\|\boldsymbol{w}_1-\boldsymbol{w}_2\|_2\|\boldsymbol{w}_2-\boldsymbol{\beta}\|_2),
    \end{align*}
    where $$L_0^{[r]}(\boldsymbol{w})=\mathbb{E}\left[\frac{1}{n_0/3}\|\boldsymbol{e}^{(0)[r]\top}(\boldsymbol{Y}^{(0)}-\boldsymbol{U}_0\boldsymbol{w}-\boldsymbol{F}_0\boldsymbol{\gamma}^{(k)})\|_2^2\right]$$
\end{lemma}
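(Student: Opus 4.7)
The plan is to Taylor-expand the difference $[\hat L_0^{[r]}(\boldsymbol w_1) - \hat L_0^{[r]}(\boldsymbol w_2)] - [L_0^{[r]}(\boldsymbol w_1) - L_0^{[r]}(\boldsymbol w_2)]$ around $\boldsymbol w_2$ and exploit the fact that both losses are exact quadratics in $\boldsymbol w$. Write $\tilde{\boldsymbol w} = \boldsymbol w_1 - \boldsymbol w_2$, $\boldsymbol v = \boldsymbol w_2 - \boldsymbol\beta$, and $\widehat{\boldsymbol\Sigma}_{\boldsymbol u}^{[r]} = \tfrac{3}{n_0}\sum_{i \in \mathcal R^{[r]}} \hat{\boldsymbol u}_i^{(0)} \hat{\boldsymbol u}_i^{(0)\top}$. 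Since the Taylor expansion of a quadratic terminates at order two and $\boldsymbol\beta$ is the population minimizer of $L_0^{[r]}$, so that $\nabla L_0^{[r]}(\boldsymbol\beta) = 0$, the difference decomposes cleanly as $T_A + T_B + T_C$ with
\[
T_A := \tilde{\boldsymbol w}^\top \nabla \hat L_0^{[r]}(\boldsymbol\beta), \quad T_B := 2\tilde{\boldsymbol w}^\top \bigl(\widehat{\boldsymbol\Sigma}_{\boldsymbol u}^{[r]} - \boldsymbol\Sigma_{\boldsymbol u}^{(0)}\bigr)\boldsymbol v, \quad T_C := \tilde{\boldsymbol w}^\top\bigl(\widehat{\boldsymbol\Sigma}_{\boldsymbol u}^{[r]} - \boldsymbol\Sigma_{\boldsymbol u}^{(0)}\bigr)\tilde{\boldsymbol w}.
\]

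I would bound $T_A$ by H\"older's inequality, $|T_A| \le \|\tilde{\boldsymbol w}\|_1 \|\nabla \hat L_0^{[r]}(\boldsymbol\beta)\|_\infty$, and then reuse the argument behind Lemma~\ref{lemmaA5} (restricted to one fold of size $n_0/3$) to get $\|\nabla \hat L_0^{[r]}(\boldsymbol\beta)\|_\infty = O_P(\sqrt{\log p/n_0})$. The three ingredients are: a Bernstein term $m^{-1}\boldsymbol U_0^{[r]\top}\boldsymbol{\mathcal E}^{(0)[r]} = O_P(\sqrt{\log p/n_0})$, a factor-score correction $m^{-1}(\hat{\boldsymbol U}_0^{[r]} - \boldsymbol U_0^{[r]})^\top \boldsymbol{\mathcal E}^{(0)[r]}$ of the same order via Lemma~\ref{lemmaA6}, and the factor-contamination piece $m^{-1}\hat{\boldsymbol U}_0^{[r]\top} \boldsymbol F_0^{[r]} \boldsymbol\varphi^{(0)}$, which Lemma~\ref{lemmaA6} bounds by $\mathcal V_{n_0,p}\|\boldsymbol\varphi^{(0)}\|_2/n_0 \lesssim \sqrt{\log p/n_0}$ under $n_0 = O(p)$ and the conditions in \eqref{eq3.2}. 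This produces the linear-in-$\tilde{\boldsymbol w}$ contribution, matching the $\vee 1$ branch of the outer maximum in the statement.

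For $T_B$ and $T_C$, I would invoke a restricted-eigenvalue/RIP-style concentration: for every $\boldsymbol z \in \mathbb R^p$,
\[
\Bigl|\boldsymbol z^\top\bigl(\widehat{\boldsymbol\Sigma}_{\boldsymbol u}^{[r]} - \boldsymbol\Sigma_{\boldsymbol u}^{(0)}\bigr)\boldsymbol z\Bigr| \lesssim \sqrt{\log p/n_0}\,\|\boldsymbol z\|_2^2 + (\log p/n_0)\,\|\boldsymbol z\|_1^2,
\]
which is the quadratic-form companion of Lemma~\ref{lemmaA2} and follows from combining the Hanson–Wright inequality for $\boldsymbol z^\top \boldsymbol U_0^{[r]\top}\boldsymbol U_0^{[r]}\boldsymbol z/m$ with the $\hat{\boldsymbol U}\mapsto\boldsymbol U$ correction of Lemma~\ref{lemmaA6}. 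Applied with $\boldsymbol z = \tilde{\boldsymbol w}$, this directly bounds $|T_C|$ by $\sqrt{\log p/n_0}\|\tilde{\boldsymbol w}\|_2^2 + (\log p/n_0)\|\tilde{\boldsymbol w}\|_1^2$. For the bilinear piece $T_B$, I would invoke the polarization identity with $M = \widehat{\boldsymbol\Sigma}_{\boldsymbol u}^{[r]} - \boldsymbol\Sigma_{\boldsymbol u}^{(0)}$, apply the above inequality to $\tilde{\boldsymbol w}\pm\boldsymbol v$, and use AM–GM to obtain $|T_B| \lesssim \sqrt{\log p/n_0}\,\|\tilde{\boldsymbol w}\|_2\|\boldsymbol v\|_2 + (\log p/n_0)\,\|\tilde{\boldsymbol w}\|_1\|\boldsymbol v\|_1$. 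Summing $T_A + T_B + T_C$ and compacting the three $\ell_1$-type terms into the single max $(\sqrt{\log p/n_0}\|\tilde{\boldsymbol w}\|_1 \vee \sqrt{\log p/n_0}\|\boldsymbol v\|_1 \vee 1)\|\tilde{\boldsymbol w}\|_1\sqrt{\log p/n_0}$ reproduces the bound as stated.

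The main obstacle is proving the second-moment concentration with both an $\ell_2$ leading term of order $\sqrt{\log p/n_0}$ and an $\ell_1$ remainder of the sharper order $\log p/n_0$. The naive entrywise Bernstein bound $\|\widehat{\boldsymbol\Sigma}_{\boldsymbol u}^{[r]} - \boldsymbol\Sigma_{\boldsymbol u}^{(0)}\|_{\max} = O_P(\sqrt{\log p/n_0})$ alone would yield only $\sqrt{\log p/n_0}\|\boldsymbol z\|_1^2$, which is too coarse to match the claim; the sharper two-rate decomposition requires the Hanson–Wright/Talagrand technology that also powers Lemma~\ref{lemmaA2}. A secondary delicacy is the substitution of $\boldsymbol u_i^{(0)}$ by its estimate $\hat{\boldsymbol u}_i^{(0)}$ throughout both $T_A$ and the quadratic concentration: each such replacement contributes factor-estimation residuals whose rates must be absorbed at $\sqrt{\log p/n_0}$ via Lemma~\ref{lemmaA6}, which is precisely why the dimensional assumption $n_0 = O(p)$ is indispensable for the conclusion.
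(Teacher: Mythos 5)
Your decomposition is exactly the paper's: expanding both quadratic losses around $\boldsymbol\beta$ and using $\nabla L_0^{[r]}(\boldsymbol\beta)=\boldsymbol 0$ gives the same three terms (the paper's $E_1,E_2,E_3$ are your $T_A,T_B,T_C$ up to constants), and the linear term is handled identically via H\"older plus the Lemma~\ref{lemmaA5}-type gradient bound on one fold. Where you diverge is the quadratic/bilinear concentration. The paper does something more elementary than what you propose: it splits $\tfrac{3}{n_0}\hat{\boldsymbol U}_0^{[r]\top}\hat{\boldsymbol U}_0^{[r]}-\boldsymbol\Sigma_{\boldsymbol u}^{(0)}$ into the factor-estimation error $\tfrac{3}{n_0}(\hat{\boldsymbol U}_0^{[r]\top}\hat{\boldsymbol U}_0^{[r]}-\boldsymbol U_0^{[r]\top}\boldsymbol U_0^{[r]})$, controlled in $\|\cdot\|_{\max}$ at rate $\log p/n_0$ by Lemma~\ref{lemmaA6} (this, not any sharpening of the covariance concentration, is where the $(\log p/n_0)\,\ell_1\times\ell_1$ terms come from), plus the sampling error $\tfrac{3}{n_0}\boldsymbol U_0^{[r]\top}\boldsymbol U_0^{[r]}-\boldsymbol\Sigma_{\boldsymbol u}^{(0)}$, whose bilinear form $\boldsymbol a^\top(\cdot)\boldsymbol b$ for \emph{fixed} $\boldsymbol a,\boldsymbol b$ is an average of i.i.d.\ sub-exponential scalars with $\psi_1$-norm $\lesssim\|\boldsymbol a\|_2\|\boldsymbol b\|_2$, so scalar Bernstein gives the $\sqrt{\log p/n_0}\,\|\boldsymbol a\|_2\|\boldsymbol b\|_2$ rate directly --- no Hanson--Wright and no polarization needed.

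The one substantive caution concerns uniformity. Your deviation inequality $|\boldsymbol z^\top(\widehat{\boldsymbol\Sigma}^{[r]}_{\boldsymbol u}-\boldsymbol\Sigma^{(0)}_{\boldsymbol u})\boldsymbol z|\lesssim\sqrt{\log p/n_0}\,\|\boldsymbol z\|_2^2+(\log p/n_0)\,\|\boldsymbol z\|_1^2$ is true for each \emph{fixed} $\boldsymbol z$, but it is false \emph{uniformly} over $\boldsymbol z\in\mathbb R^p$ at these rates: over $s$-sparse unit vectors the sharp uniform deviation of a sample covariance is of order $\sqrt{s\log p/n_0}$, which for $1\ll s\ll n_0/\log p$ (e.g.\ $s\asymp\sqrt{n_0/\log p}$, where it equals $(\log p/n_0)^{1/4}$) strictly exceeds $\sqrt{\log p/n_0}+s\log p/n_0$. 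Your framing of the inequality as an RE-style companion of Lemma~\ref{lemmaA2} obtained by ``Hanson--Wright/Talagrand technology'' suggests you intend the uniform reading, in which case this step fails. Only the pointwise version is needed, however: in the proof of Theorem~\ref{thm2} the arguments $\boldsymbol w_1,\boldsymbol w_2$ are fit on the folds complementary to $r$ (plus source data), hence are independent of fold $r$, so one conditions on them and treats them as deterministic. (The paper's statement ``for any $\boldsymbol w_1,\boldsymbol w_2$'' shares this imprecision; its proof is also pointwise.) With the pointwise reading your argument closes, and the polarization step for $T_B$ is correct but can be bypassed by applying Bernstein to the bilinear form directly.
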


\begin{proof}
    Firstly, denote $\hat{\boldsymbol{U}}_{0}^{[r]}=\boldsymbol{e}^{(0)[r]\top}\hat{\boldsymbol{U}}_{0}$ and $\boldsymbol{U}_{0}^{[r]}=\boldsymbol{e}^{(0)[r]\top}\boldsymbol{U}_{0}$, we have 
    \begin{align*}
        &\hat{L}_0^{[r]}(\boldsymbol{w}_1)-\hat{L}_0^{[r]}(\boldsymbol{w}_2)=\nabla\hat{L}_0^{[r]}(\boldsymbol{w}_2)^\top(\boldsymbol{w}_1-\boldsymbol{w}_2)+\frac{1}{2}(\boldsymbol{w}_1-\boldsymbol{w}_2)^\top\frac{1}{n_0/3}\hat{\boldsymbol{U}}_{0}^{[r]\top}\hat{\boldsymbol{U}}_{0}^{[r]}(\boldsymbol{w}_1-\boldsymbol{w}_2)\\
        =&\nabla\hat{L}_0^{[r]}(\boldsymbol{\beta})^\top(\boldsymbol{w}_1-\boldsymbol{w}_2)+(\boldsymbol{w}_2-\boldsymbol{\beta})^\top\frac{1}{n_0/3}\hat{\boldsymbol{U}}_{0}^{[r]\top}\hat{\boldsymbol{U}}_{0}^{[r]}(\boldsymbol{w}_1-\boldsymbol{w}_2)+\frac{1}{2}(\boldsymbol{w}_1-\boldsymbol{w}_2)^\top\frac{1}{n_0/3}\hat{\boldsymbol{U}}_{0}^{[r]\top}\hat{\boldsymbol{U}}_{0}^{[r]}(\boldsymbol{w}_1-\boldsymbol{w}_2)
    \end{align*}
    Similarly, apply this equation to $L_0^{[r]}(\boldsymbol{w}_1)-L_0^{[r]}(\boldsymbol{w}_2)$, together with the equation above yields that 
    \begin{align*}
        &\hat{L}_0^{[r]}(\boldsymbol{w}_1)-\hat{L}_0^{[r]}(\boldsymbol{w}_2)-(L_0^{[r]}(\boldsymbol{w}_1)-L_0^{[r]}(\boldsymbol{w}_2))\\
        =&(\nabla\hat{L}_0^{[r]}(\boldsymbol{\beta})-\nabla L_0^{[r]}(\boldsymbol{\beta}))^\top(\boldsymbol{w}_1-\boldsymbol{w}_2)+(\boldsymbol{w}_2-\boldsymbol{\beta})^\top\left(\frac{1}{n_0/3}\hat{\boldsymbol{U}}_{0}^{[r]\top}\hat{\boldsymbol{U}}_{0}^{[r]}-\boldsymbol{\Sigma}_{\boldsymbol{u}}^{(0)}\right)(\boldsymbol{w}_1-\boldsymbol{w}_2)\\
        &+\frac{1}{2}(\boldsymbol{w}_1-\boldsymbol{w}_2)^\top\left(\frac{1}{n_0/3}\hat{\boldsymbol{U}}_{0}^{[r]\top}\hat{\boldsymbol{U}}_{0}^{[r]}-\boldsymbol{\Sigma}_{\boldsymbol{u}}^{(0)}\right)(\boldsymbol{w}_1-\boldsymbol{w}_2):=E_1+E_2+E_3.
    \end{align*}
    By Lemma \ref{lemmaA5} and the assumptions, we have $|E_1|\leq\|\nabla\hat{L}_0^{[r]}(\boldsymbol{\beta})\|_\infty\|\boldsymbol{w}_1-\boldsymbol{w}_2\|_1\lesssim\sqrt{\log p/n_0}\|\boldsymbol{w}_1-\boldsymbol{w}_2\|_1$. Now we consider $(n_0/3)^{-1}\hat{\boldsymbol{U}}_{0}^{[r]\top}\hat{\boldsymbol{U}}_{0}^{[r]}-\boldsymbol{\Sigma}_{\boldsymbol{u}}^{(0)}$. By following a similar proof of the line of Lemma C.3 and Lemma C.6 in \cite{fan2024latent}, we have $$\frac{1}{n_0/3}\|\hat{\boldsymbol{U}}_{0}^{[r]\top}\hat{\boldsymbol{U}}_{0}^{[r]}-\boldsymbol{U}_{0}^{[r]\top}\boldsymbol{U}_{0}^{[r]}\|_{\max}=O_P\left(\frac{\log p}{n_0}\right).$$
    Furthermore, since $\|\boldsymbol{a}^\top\boldsymbol{u}^{(0)}\boldsymbol{u}^{(0)\top}\boldsymbol{b}\|_{\psi_1}\leq\|\boldsymbol{u}^{(0)}\|_{\psi_2}^2\|a\|_2\|b\|_2$ for any vectors $\boldsymbol{a}$ and $\boldsymbol{b}$, by the Bernstein's inequality, one attains that $$\left\|\boldsymbol{a}^\top\left(\frac{1}{n_0/3}\boldsymbol{U}_{0}^{[r]\top}\boldsymbol{U}_{0}^{[r]}-\boldsymbol{\Sigma}_{\boldsymbol{u}}^{(0)}\right)\boldsymbol{b}\right\|_{\max}=O_P\left(\sqrt{\frac{\log p}{n_0}}\|\boldsymbol{a}\|_2\|\boldsymbol{b}\|_2\right).$$
    Hence, we conclude that 
    \begin{align*}
        |E_2|+|E_3|\lesssim&\left\|\frac{1}{n_0/3}(\hat{\boldsymbol{U}}_{0}^{[r]\top}\hat{\boldsymbol{U}}_{0}^{[r]}-\boldsymbol{U}_{0}^{[r]\top}\boldsymbol{U}_{0}^{[r]})\right\|_{\max}\|\boldsymbol{w}_1-\boldsymbol{w}_2\|_1(\frac{1}{2}\|\boldsymbol{w}_1-\boldsymbol{w}_2\|_1+\|\boldsymbol{w}_2-\boldsymbol{\beta}\|_1)\\
        &+\sqrt{\frac{\log p}{n_0}}\|\boldsymbol{w}_1-\boldsymbol{w}_2\|_2(\frac{1}{2}\|\boldsymbol{w}_1-\boldsymbol{w}_2\|_2+\|\boldsymbol{w}_2-\boldsymbol{\beta}\|_2)\\
        \lesssim&\frac{\log p}{n_0}\|\boldsymbol{w}_1-\boldsymbol{w}_2\|_1(\|\boldsymbol{w}_1-\boldsymbol{w}_2\|_1\vee\|\boldsymbol{w}_2-\boldsymbol{\beta}\|_1)\\
         &+\sqrt{\frac{\log p}{n_0}}\|\boldsymbol{w}_1-\boldsymbol{w}_2\|_2(\frac{1}{2}\|\boldsymbol{w}_1-\boldsymbol{w}_2\|_2+\|\boldsymbol{w}_2-\boldsymbol{\beta}\|_2)
    \end{align*}
    The proof is complete by leveraging the bounds for $|E_1|$ and $|E_2|+|E_3|$.
\end{proof}

\begin{lemma}[{\cite[Corollary 2.1]{chernozhukov2013gaussian}}]\label{LemmaA8}
        Let $\boldsymbol{X}_1,\cdots,\boldsymbol{X}_n=(X_{i1},\\\cdots,X_{id})^\top$ be n i.i.d. $d$-dimensional random vectors with mean $\mathbf{0}_d$. Suppose that there are some constants $0<C_1<C_2$ such that $\mathbb{E}[X_{ij}|^2\geq C_1$  and $X_{ij}$ is sub-exponential with $\|X_{ij}\|_{\psi_1}\leq C_2$ for all $1\leq j\leq d$. If $\log^7 (dn)/n=o(1)$, then
        $$\lim_{n\to\infty}\sup_{t\in\mathbb{R}}\left|\mathbb{P}\left(\left\|\frac{1}{\sqrt{n}}\sum_{i=1}^{n}\boldsymbol{X}_i\right\|_\infty\leq t\right)-\mathbb{P}(\|\boldsymbol{N}\|_\infty\leq t)\right|=0,$$
        where $\boldsymbol{N}\sim \mathbb{N}(\mathbf{0}_d,\mathbb{E}[\boldsymbol{X}_i\boldsymbol{X}_i^\top])$.
    \end{lemma}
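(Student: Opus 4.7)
The plan is to bound the Kolmogorov distance between the law of $T_n := \|n^{-1/2}\sum_{i=1}^n \boldsymbol{X}_i\|_\infty$ and that of $\|\boldsymbol{N}\|_\infty$ by adapting the Chernozhukov--Chetverikov--Kato machinery for maxima of high-dimensional sums. I would first truncate the coordinates at a threshold $M \asymp \log(dn)$; since $\|X_{ij}\|_{\psi_1} \leq C_2$, the sub-exponential tail bound $\mathbb{P}(|X_{ij}| > M) \leq 2\exp(-M/C_2)$ together with a union bound over $1\leq i\leq n$, $1\leq j\leq d$ makes the truncation error negligible, reducing the problem to bounded variables $|\tilde{X}_{ij}| \lesssim M$ with nearly matched second moments.

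Next I would approximate the max norm by the smooth log-sum-exp functional $F_\beta(\boldsymbol{x}) = \beta^{-1}\log\sum_{j=1}^d e^{\beta x_j}$ with $\beta \asymp \epsilon_n^{-1}\log d$, so that $|F_\beta(\boldsymbol{x}) - \|\boldsymbol{x}\|_\infty| \leq \epsilon_n$. Composing $F_\beta$ with a thrice-differentiable mollification of $\mathbf{1}\{\,\cdot\, \leq t\}$ produces a test function $g$ whose derivatives of order up to three are controlled by appropriate powers of $\beta$. Then I would invoke Slepian--Stein interpolation (equivalently, the Lindeberg exchange argument) to compare $\mathbb{E}g(S_n)$ with $\mathbb{E}g(\boldsymbol{N})$, yielding an error of order $\beta^3 M^3/\sqrt{n}$ thanks to boundedness of the truncated summands and matching of first and second moments.

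The final step is Nazarov's anti-concentration inequality, which applies because $\min_j \mathbb{E}[N_j^2] = \min_j \mathbb{E}[X_{1j}^2] \geq C_1 > 0$; this converts the smoothed comparison into a Kolmogorov-distance bound at the cost of a $\sqrt{\log d}$ factor. Choosing $\epsilon_n$ to balance the smoothing bias against the interpolation error yields an overall bound that is $o(1)$ precisely under the growth condition $\log^7(dn)/n = o(1)$.

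The main obstacle is the delicate tripartite balance among the smoothing parameter $\beta$, the truncation threshold $M$, and the anti-concentration loss $\sqrt{\log d}$: each refinement tightens one error at the expense of another, and the exponent $7$ in $\log^7(dn)/n$ emerges only after optimizing all three simultaneously. Rather than rederive the full CCK estimate, I would verify that the hypotheses here (i.i.d.\ centered sub-exponential coordinates with $\|X_{ij}\|_{\psi_1} \leq C_2$, variance bounded below by $C_1$, and $\log^7(dn)/n = o(1)$) are exactly the hypotheses of Corollary 2.1 of \cite{chernozhukov2013gaussian}, and invoke that result directly. The sub-exponential norm bound automatically furnishes the moment conditions $\mathbb{E}|X_{ij}|^k \leq C_2^k \, k!$ required there, so no further work is needed.
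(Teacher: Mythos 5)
Your proposal is correct and matches the paper's treatment: the paper offers no proof of Lemma \ref{LemmaA8} at all, importing it verbatim as Corollary 2.1 of \cite{chernozhukov2013gaussian}, and your final step of verifying the hypotheses (uniform $\psi_1$-norm bound, variances bounded below, $\log^7(dn)/n=o(1)$) and invoking that corollary is exactly what the paper does. Your preceding sketch of the truncation / smooth-max / Lindeberg-interpolation / Nazarov anti-concentration pipeline is an accurate summary of how the cited result is actually proved, but it is supplementary rather than a divergence in approach.
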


\begin{lemma}[{\cite[Lemma 7]{cai2025statistical}}]\label{LemmaA9}
    For any $\delta>0$ and sequences of random vectors $\{ X_n \}_{n \in \mathbb{N}}, \{ Y_n \}_{n \in \mathbb{N}}, \{ Z_n \}_{n \in \mathbb{N}}$ satisfying $$\sup_{t \in \mathbb{R}} \left| \mathbb{P}(\|X_n\|_\infty\leq t)-\mathbb{P}(\|Z_n\|_\infty\leq t) \right|=\Delta_n,$$ we have that
    $$\sup_{t\in\mathbb{R}}\left|\mathbb{P}(\|X_n+Y_n\|_\infty\leq t)-\mathbb{P}(\|Z_n\|_\infty\leq t)\right|\leq \mathbb{P}(\|Y_n\|_\infty>\delta)+\sup_{t\in\mathbb{R}}\mathbb{P}(|\|Z_n\|_\infty-t|<\delta)+\Delta_n.$$
    \end{lemma}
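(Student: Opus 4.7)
The plan is a standard coupling-type decomposition argument. The central observation is that on the event $\{\|Y_n\|_\infty \leq \delta\}$, the reverse triangle inequality gives $|\|X_n + Y_n\|_\infty - \|X_n\|_\infty| \leq \|Y_n\|_\infty \leq \delta$, so the sup-norm of $X_n + Y_n$ and that of $X_n$ differ by at most $\delta$ on this event. I would then derive matching upper and lower bounds on $\mathbb{P}(\|X_n + Y_n\|_\infty \leq t) - \mathbb{P}(\|Z_n\|_\infty \leq t)$ by bootstrapping the hypothesized Kolmogorov-distance bound between $X_n$ and $Z_n$, with any slack moved onto $Z_n$ absorbed via the anti-concentration supremum.

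For the upper bound, I fix $t \in \mathbb{R}$ and decompose
$$\mathbb{P}(\|X_n + Y_n\|_\infty \leq t) \leq \mathbb{P}(\|X_n + Y_n\|_\infty \leq t,\ \|Y_n\|_\infty \leq \delta) + \mathbb{P}(\|Y_n\|_\infty > \delta) \leq \mathbb{P}(\|X_n\|_\infty \leq t + \delta) + \mathbb{P}(\|Y_n\|_\infty > \delta).$$
Applying the Kolmogorov distance hypothesis to the first term bounds it by $\mathbb{P}(\|Z_n\|_\infty \leq t + \delta) + \Delta_n$, and splitting $\mathbb{P}(\|Z_n\|_\infty \leq t + \delta) = \mathbb{P}(\|Z_n\|_\infty \leq t) + \mathbb{P}(t < \|Z_n\|_\infty \leq t + \delta)$ allows the incremental piece to be absorbed into $\sup_{s \in \mathbb{R}} \mathbb{P}(|\|Z_n\|_\infty - s| < \delta)$.

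For the matching lower bound, I argue symmetrically. On $\{\|Y_n\|_\infty \leq \delta\}$, $\{\|X_n + Y_n\|_\infty > t\}$ implies $\{\|X_n\|_\infty > t - \delta\}$, which yields
$$\mathbb{P}(\|X_n + Y_n\|_\infty \leq t) \geq \mathbb{P}(\|X_n\|_\infty \leq t - \delta) - \mathbb{P}(\|Y_n\|_\infty > \delta) \geq \mathbb{P}(\|Z_n\|_\infty \leq t - \delta) - \Delta_n - \mathbb{P}(\|Y_n\|_\infty > \delta).$$
Writing $\mathbb{P}(\|Z_n\|_\infty \leq t - \delta) = \mathbb{P}(\|Z_n\|_\infty \leq t) - \mathbb{P}(t - \delta < \|Z_n\|_\infty \leq t)$ and invoking the anti-concentration supremum again gives the lower bound. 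Combining the two sides and then taking the supremum over $t$ delivers the claim.

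No step is a genuine obstacle; the entire argument is elementary measure-theoretic bookkeeping built on the triangle inequality and the hypothesized Kolmogorov bound. The only minor technical nuisance is that the intervals produced, $(t, t+\delta]$ and $(t - \delta, t]$, are half-open while the anti-concentration supremum $\sup_{s} \mathbb{P}(|\|Z_n\|_\infty - s| < \delta)$ is phrased with a strictly open interval; this mismatch is harmless whenever $\|Z_n\|_\infty$ has a continuous law, which is precisely the setting (Gaussian suprema) in which the lemma is invoked in Theorems~\ref{thm3}--\ref{thm5}.
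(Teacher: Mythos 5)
Your proof is correct. The paper itself does not prove this lemma---it is imported verbatim as Lemma 7 of the cited reference---so there is no in-paper argument to compare against; your route (conditioning on $\{\|Y_n\|_\infty\le\delta\}$, applying the reverse triangle inequality to shift between $\|X_n+Y_n\|_\infty$ and $\|X_n\|_\infty$, transferring to $Z_n$ via the hypothesized Kolmogorov bound, and absorbing the resulting $\pm\delta$ shift into the anti-concentration supremum) is the standard and essentially canonical proof of this statement, and every step checks out. One small remark: the half-open versus open interval mismatch you flag at the end is not actually a gap and does not require continuity of the law of $\|Z_n\|_\infty$. The half-open interval $(t,t+\delta]$ is contained in the open interval $(t,t+2\delta)=\{s:|s-(t+\delta)|<\delta\}$, and likewise $(t-\delta,t]\subseteq(t-\delta,t+\delta)=\{s:|s-t|<\delta\}$, so taking the supremum over the center already covers both increments unconditionally.
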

\end{appendices}

\end{document}